\documentclass[runningheads]{llncs}

\usepackage[T1]{fontenc}
\usepackage{xspace}
\usepackage{stmaryrd, amsmath, amssymb, mathtools, mathrsfs, extarrows}
\usepackage{nicefrac}
\usepackage{multirow}
\usepackage{xcolor}
\usepackage{changepage}
\usepackage{wrapfig}
\usepackage{paralist}
\usepackage{tikz}
	\usetikzlibrary{trees, decorations, arrows, automata, positioning, plotmarks, shapes, backgrounds, decorations.pathmorphing, arrows.meta}

\usepackage{macros_poc}

\begin{document}
\title{Probabilistic Operational Correspondence (Technical Report)}

\author{Anna Schmitt
	\inst{1}
	\orcidID{0000-0001-6675-2879}
	\and
	Kirstin Peters
	\inst{2}
	\orcidID{0000-0002-4281-0074}}

\authorrunning{A.\ Schmitt and K.\ Peters}

\institute{TU Darmstadt, Germany \and Augsburg University, Germany}

\maketitle

\begin{adjustwidth}{-2.4cm}{-2.2cm}
\begin{abstract}
	Encodings are the main way to compare process calculi.
	By applying quality criteria to encodings we analyse their quality and rule out trivial or meaningless encodings.
	Thereby, operational correspondence is one of the most common and most important quality criteria.
	It ensures that processes and their translations have the same abstract behaviour.
	We analyse probabilistic versions of operational correspondence to enable such a verification for probabilistic systems.

	Concretely, we present three versions of probabilistic operational correspondence: weak, middle, and strong.
	We show the relevance of the weaker version using an encoding from a sublanguage of probabilistic \CCS into the probabilistic $\pi$-calculus.
	Moreover, we map this version of probabilistic operational correspondence onto a probabilistic behavioural relation that directly relates source and target terms. Then we can analyse the quality of the criterion by analysing the relation it induces between a source term and its translation.
	For the second version of probabilistic operational correspondence we proceed in the opposite direction. We start with a standard simulation relation for probabilistic systems and map it onto a probabilistic operational correspondence criterion.
	
	This technical report contains the proofs to the lemmata and theorems of \cite{schmitt23} as well as some additional material.
\end{abstract}


\section{Process Calculi and Encodings}
\label{app:calculi}

\begin{definition}[Reductions of Distributions]
	\label{def:stepDistributions}
	Let $ \Delta \step \Theta $ whenever
	\begin{enumerate}[(a)]
		\item $ \Delta = \sum_{i \in I} p_i \pointDis{P_i} $, where $ I $ is a finite index set and $ \sum_{i \in I} p_i = 1 $,
		\item for each $ i \in I $ there is a distribution $ \Theta_i $ such that $ P_i \step \Theta_i $ or $ \Theta_i = \pointDis{P_i} $,
		\item for some $ i \in I $ we have $ P_i \step \Theta_i $, and
		\item $ \Theta = \sum_{i \in I} p_i \cdot \Theta_i $.
	\end{enumerate}
\end{definition}

\begin{definition}[Relations on Distributions, \cite{Deng07}]
	\label{def:relationsDistributions}
	$ $\\
	Let $ \relation \subseteq \processes{}^2 $ and let $ \Delta, \Theta \in \mathcal{D}{\left( \processes{} \right)} $.
	Then $ \left( \Delta, \Theta \right) \in \pointDis{\relation} $ if
	\begin{enumerate}[(a)]
		\item $ \Delta = \sum_{i \in I} p_i \pointDis{P_i} $, where $ I $ is a finite index set and $ \sum_{i \in I} p_i = 1 $,
		\item for each $ i \in I $ there is a process $ Q_i $ such that $ \left( P_i, Q_i \right) \in \relation $, and
		\item $ \Theta = \sum_{i \in I} p_i \pointDis{Q_i} $.
	\end{enumerate}
\end{definition}

For our proofs it is important that Definition~\ref{def:relationsDistributions} translates preorders into preorders.
Accordingly, we prove that it preserves reflexivity.

\begin{lemma}[Preservation of Reflexivity]
	\label{lem:reflexivityRelationDistiribution}
	If $ \relation $ is reflexive, then so is $ \pointDis{\relation} $.
\end{lemma}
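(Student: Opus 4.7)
The plan is to unfold the definition of $\pointDis{\relation}$ directly. Given an arbitrary distribution $\Delta \in \mathcal{D}{\left( \processes{} \right)}$, I need to exhibit a decomposition and a choice of related processes that witnesses $(\Delta, \Delta) \in \pointDis{\relation}$.

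First I would write $\Delta$ in its canonical finitely-supported form $\Delta = \sum_{i \in I} p_i \pointDis{P_i}$ with $\sum_{i \in I} p_i = 1$, which matches clause~(a) of Definition~\ref{def:relationsDistributions}. Then, exploiting reflexivity of $\relation$, I would simply pick $Q_i \coloneqq P_i$ for each $i \in I$, so that $(P_i, Q_i) = (P_i, P_i) \in \relation$, satisfying clause~(b). Finally, $\sum_{i \in I} p_i \pointDis{Q_i} = \sum_{i \in I} p_i \pointDis{P_i} = \Delta$, which gives clause~(c) with $\Theta = \Delta$. Hence $(\Delta, \Delta) \in \pointDis{\relation}$.

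There is no real obstacle here; the only subtlety is that Definition~\ref{def:relationsDistributions} insists on a \emph{finite} index set $I$, so the argument implicitly relies on the convention that elements of $\mathcal{D}{\left( \processes{} \right)}$ are finitely supported distributions, which is already used in Definition~\ref{def:stepDistributions}. Given this convention, the proof is a one-line instantiation.
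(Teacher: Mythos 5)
Your proposal is correct and follows essentially the same route as the paper's own proof: decompose $\Delta$ as $\sum_{i \in I} p_i \pointDis{P_i}$, choose $Q_i = P_i$ by reflexivity of $\relation$, and observe that the resulting distribution is again $\Delta$. Your remark about finite support is a reasonable observation but does not change the argument.
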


\begin{proof}
	Assume a reflexive relation $ \relation $ and consider a probability distribution $ \Delta $ with $ \Delta = \sum_{i \in I} p_i \pointDis{P_i} $ for some finite index set $ I $ with $ \sum_{i \in I} p_i = 1 $.
	We have to prove that $ \left( \Delta, \Delta \right) \in \pointDis{\relation} $, \ie that for each $ i \in I $ there is some $ Q_i $ such that $ \left( P_i, Q_i \right) \in \relation $ and $ \Delta = \sum_{i \in I} p_i \pointDis{Q_i} $.
	Since $ \relation $ is reflexive, we have $ \left( P_i, P_i \right) \in \relation $, \ie it suffices to choose $ Q_i = P_i $ to conclude the proof.
\end{proof}

The preservation of transitivity was already given in \cite{Deng07}.

\begin{lemma}[Preservation of Transitivity, \cite{Deng07}]
	\label{lem:transitvityRelationDistiribution}
	$ $\\
	If $ \relation $ is transitive, then so is $ \pointDis{\relation} $.
\end{lemma}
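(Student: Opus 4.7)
The plan is to unfold the definition of $\pointDis{\relation}$ twice, using the assumption $ \left( \Delta, \Theta \right), \left( \Theta, \Lambda \right) \in \pointDis{\relation} $, and then stitch the two intermediate decompositions of $ \Theta $ together to obtain a single decomposition suitable for witnessing $ \left( \Delta, \Lambda \right) \in \pointDis{\relation} $. Concretely, I would first write $ \Delta = \sum_{i \in I} p_i \pointDis{P_i} $ with matching $ Q_i $ such that $ \left( P_i, Q_i \right) \in \relation $ and $ \Theta = \sum_{i \in I} p_i \pointDis{Q_i} $, and independently $ \Theta = \sum_{j \in J} q_j \pointDis{Q'_j} $ with matching $ R_j $ such that $ \left( Q'_j, R_j \right) \in \relation $ and $ \Lambda = \sum_{j \in J} q_j \pointDis{R_j} $.

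The key step is to realign these two representations of $ \Theta $. For each process $ Q $ in the support, both decompositions assign it total weight $ \Theta(Q) $, so the $ p_i $-masses on $ Q_i = Q $ and the $ q_j $-masses on $ Q'_j = Q $ can be matched via a joint refinement: introduce $ r_{i,j} \geq 0 $ for $ \left( i, j \right) \in I \times J $ such that $ r_{i,j} > 0 $ only when $ Q_i = Q'_j $, $ \sum_{j} r_{i,j} = p_i $ for every $ i $, and $ \sum_{i} r_{i,j} = q_j $ for every $ j $. Standard matrix-filling arguments (or a greedy northwest-corner construction) provide such coefficients on the finite set $ I \times J $.

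Using this refinement I would take the index set $ K = \left\{ \left( i, j \right) \mid r_{i,j} > 0 \right\} $ and define $ \Delta = \sum_{\left( i, j \right) \in K} r_{i,j} \pointDis{P_i} $ and $ \Lambda = \sum_{\left( i, j \right) \in K} r_{i,j} \pointDis{R_j} $, both of which follow from the marginal conditions on the $ r_{i,j} $. For every $ \left( i, j \right) \in K $ we have $ \left( P_i, Q_i \right) \in \relation $ and $ \left( Q_i, R_j \right) = \left( Q'_j, R_j \right) \in \relation $, so transitivity of $ \relation $ yields $ \left( P_i, R_j \right) \in \relation $. This verifies all three clauses of Definition~\ref{def:relationsDistributions} for $ \left( \Delta, \Lambda \right) $.

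The main obstacle is the construction of the joint refinement $ r_{i,j} $, since the two decompositions of $ \Theta $ are in general incomparable as index sets and the same process may appear under several indices in each. Everything else is a routine book-keeping argument once the coupling on $ I \times J $ is in place; this is presumably why \cite{Deng07} states the lemma without spelling out the combinatorial step.
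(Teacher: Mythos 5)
Your proposal is correct. Note that the paper itself gives no proof of this lemma; it is stated with a citation to \cite{Deng07}, so there is nothing to compare against line by line. Your argument is the standard one and the only non-routine point, the existence of the coupling $r_{i,j}$, does go through: since both decompositions of $\Theta$ assign total mass $\Theta(Q)$ to each process $Q$, you can take $r_{i,j} = p_i q_j / \Theta(Q)$ whenever $Q_i = Q'_j = Q$ and $r_{i,j} = 0$ otherwise, which gives the required marginals $\sum_j r_{i,j} = p_i$ and $\sum_i r_{i,j} = q_j$ block by block; the resulting index set $K \subseteq I \times J$ is finite, the weights sum to $1$, and transitivity of $\relation$ yields $\left( P_i, R_j \right) \in \relation$ on $K$, so all three clauses of Definition~\ref{def:relationsDistributions} hold for $\left( \Delta, \Lambda \right)$.
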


We inherit the criteria (expect operational correspondence) from \cite{Gorla10}:
\begin{description}
	\item[Compositionality:] For every operator ${\mathbf{op}}$ with arity $n$ of $\lang{\source}$ and for every subset of names $N$, there exists a context $\Context{N}{\mathbf{op}}{\hole_1, \ldots, \hole_n}$ such that, for all $S_1, \ldots, S_n$ with $\freeNames{S_1} \cup \ldots \cup \freeNames{S_n} = N$, it holds that $\enc{{\mathbf{op}}\left( S_1, \ldots, S_n \right)} = \Context{N}{\mathbf{op}}{\enc{S_1}, \ldots, \enc{S_n}}$.
	\item[Name Invariance \wrt a Relation $ \relationT \subseteq \processes{\target}^2 $:] For every $ S \in \processes{\source} $ and every substitution $ \sigma $, it holds that $ \enc{S\sigma} \equiv_{\alpha} \enc{S}\sigma' $ if $ \sigma $ is injective and $ \left( \enc{S\sigma}, \enc{S}\sigma' \right) \in \relationT $ otherwise, where $ \sigma' $ is such that $ \renamingFun{\sigma(a)} = \sigma'{\left( \renamingFun{a} \right)} $ for all $ a \in \names $.
	\item[Divergence Reflection:] For every $S$, $\enc{S} \infiniteSteps$ implies $S \infiniteSteps$.
	\item[Success Sensitiveness:] For every $S$, $\ReachBarb{S}{\success}$ iff $\ReachBarb{\enc{S}}{\success}$.
\end{description}

The formulation of compositionality is rather strict, \ie it rules out practically relevant translations.
Note that the best known encoding from the asynchronous $ \pi $-calculus into the Join Calculus in \cite{FournetGonthier96} is not compositional, but consists of an inner, compositional encoding surrounded by a fixed context---the implementation of so-called firewalls---that is parameterised on the free names of the source term. In order to capture this and similar encodings we relax the definition of compositionality.
\begin{description}
	\item[Weak Compositionality:] The encoding is either compositional or consists of an inner, compositional encoding surrounded by a fixed context that can be parameterised on the free names of the source term or information that are not part of the source term.
\end{description}

\begin{definition}[Operational Correspondence, Non-Probabilistic]
	\label{def:operationalCorrespondence}
	$ $\\
	An encoding $ \enc{\cdot} $ is \emph{strongly operationally corresponding} \wrt $ \relationT \subseteq \processes{\target}^2 $ if it is:
	\begin{description}
		\item[\quad Strongly Complete:] $ \forall S, S' \logdot S \step S' \text{ implies } \left( \exists T \logdot \enc{S} \step T \wedge \left( \enc{S'}, T \right) \in \relationT \right) $
		\item[\quad Strongly Sound:] $ \forall S, T \logdot \enc{S} \step T \text{ implies } \left( \exists S' \logdot S \step S' \wedge \left( \enc{S'}, T \right) \in \relationT \right) $
	\end{description}
	$ \enc{\cdot} $ is \emph{operationally corresponding} \wrt $ \relationT \subseteq \processes{\target}^2 $ if it is:
	\begin{description}
		\item[\quad Complete:] $ \forall S, S' \logdot S \transstep S' \text{ implies } \left( \exists T \logdot \enc{S} \transstep T \wedge \left( \enc{S'}, T \right) \in \relationT \right) $
		\item[\quad Sound:] $ \forall S, T \logdot \enc{S} \transstep T \text{ implies } \left( \exists S' \logdot S \transstep S' \wedge \left( \enc{S'}, T \right) \in \relationT \right) $
	\end{description}
	$ \enc{\cdot} $ is \emph{weakly operationally corresponding} \wrt $ \relationT \subseteq \processes{\target}^2 $ if it is:
	\begin{description}
		\item[\quad Complete:] $ \forall S, S' \logdot S \transstep S' \text{ implies } \left( \exists T \logdot \enc{S} \transstep T \wedge \left( \enc{S'}, T \right) \in \relationT \right) $
		\item[\quad Weakly Sound:] $ \forall S, T \logdot \enc{S} \transstep T \text{ impl. } \left( \exists S', T' \logdot S \transstep S' \wedge T \transstep T' \wedge \left( \enc{S'}, T' \right) \in \relationT \right) $
	\end{description}
\end{definition}

\subsection{Probabilistic \CCS}

Probabilistic \CCS is introduced in \cite{Deng07} as a probabilistic extension of \CCS \cite{Milner89} to study probabilistic barbed congruence.
We omit the operator for non-de\-ter\-mi\-nis\-tic choice from \cite{Deng07}; not because it is non-deterministic but because the summands of this choice are not necessarily guarded, whereas our target language has only guarded choice.
We will also adapt the semantics of recursion, to ensure the unfolding of recursion requires a step as it is the case in our target language.
We denote the resulting calculus as $ \PCCS $.
Its syntax is given in the following Definition:
\begin{definition}[Syntax of $ \PCCS $]
	\label{def:ProbCCSSyntax}
	The terms $ \processes{C} $ of $ \PCCS $ are given by:
	\begin{align*}
		P & \deffTerms \probCCSChoice{u}{i \in I}{p_i}{P_i} \sep \para{P_1}{P_2} \sep \probCCSRes{P}{A} \sep P{\left[ f \right]} \sep C{\left\langle \tilde{x} \right\rangle}
	\end{align*}
	where $ A \subseteq \names $ and $ f: \names \to \names $ is a renaming function.
\end{definition}

All names in $ A $ are bound in $ P $ by $ \probCCSRes{P}{A} $ and all names in $ \tilde{x} $ are bound in $ P $ by $ C \stackrel{\text{def}}{=} {\left( \tilde{x} \right)}P $.
Names that are not bound are free.
A renaming function can only affect the free names of a term.
Let $ \freeNames{P} $ be the set of free names in $ P $ such that $ \freeNames{Q{\left[ f \right]}} = \set{f(n) \mid n \in \freeNames{Q}} $ for all $ Q \in \processes{C} $.

Following \cite{Deng07} we extend some operations on processes to distributions, because these notions help us to define the semantics of the respective languages.
Let $ \Delta_1, \Delta_2 $ be distributions on processes.
We define the distributions $ \Delta_1 \mid \Delta_2 $ (for parallel composition), $ \probCCSRes{\Delta_1}{A} $ and $ \probPiRes{x}{\Delta_1} $ (for restriction), and $ \Delta_1{\left[ f \right]} $ (for a renaming function $ f $) as:
\begin{align*}
	{\left( \Delta_1 \mid \Delta_2 \right)}(P) &= \begin{cases} \Delta_1{\left( P_1 \right)} \cdot \Delta_2{\left( P_2 \right)} & \text{, if } P = P_1 \mid P_2\\ 0 & \text{otherwise} \end{cases}\\
	{\left( \probCCSRes{\Delta_1}{A} \right)}(P) &= \begin{cases} \Delta_1{\left( P' \right)} & \text{, if } P = \probCCSRes{P'}{A}\\ 0 & \text{, otherwise} \end{cases}\\
	{\left( \probPiRes{x}{\Delta_1} \right)}(P) &= \begin{cases} \Delta_1{\left( P' \right)} & \text{, if } P = \probPiRes{x}{P'}\\ 0 & \text{, otherwise} \end{cases}\\
	{\left( \Delta_1{\left[ f \right]} \right)}(P) &= \begin{cases} \Delta_1{\left( P' \right)} & \text{, if } P = P'{\left[ f \right]}\\ 0 & \text{, otherwise} \end{cases}
\end{align*}

The semantics of $ \PCCS $ is given by the rules in Figure~\ref{fig:ProbCCSSemantics}, where we start with the labelled semantics of \cite{Deng07}, change the Rule~\probCCSRecRule for recursion, and add the Rule~\probCCSReducRule to obtain a reduction semantics.

\begin{figure}[t]
	\centering
	\begin{displaymath}\begin{array}{c}
			\probCCSProbChoiceRule \; \dfrac{\Delta(P) = \sum \set{ p_i \mid i \in I \wedge P_i = P }}{\probCCSChoice{u}{i \in I}{p_i}{P_i} \labelledstep{u} \Delta}
			\hspace{2em}
			\probCCSParLRule \; \dfrac{P_1 \labelledstep{u} \Delta_1}{\para{P_1}{P_2} \labelledstep{u} \para{\Delta_1}{\pointDis{P_2}}}
			\vspace{0.5em}\\
			\probCCSComLRule \; \dfrac{P_1 \labelledstep{a} \Delta_1 \quad P_2 \labelledstep{\out{a}} \Delta_2}{\para{P_1}{P_2} \labelledstep{\tau} \para{\Delta_1}{\Delta_2}} 
			\hspace{2em}
			\probCCSResRule \; \dfrac{P \labelledstep{u} \Delta \quad u \notin A \cup \out{A}}{\probCCSRes{P}{A} \labelledstep{u} \probCCSRes{\Delta}{A}}
			\vspace{0.5em}\\
			\probCCSRelabelRule \; \dfrac{P \labelledstep{v} \Delta \quad f(v) = u}{P{\left[ f \right]} \labelledstep{u} \Delta{\left[ f \right]}}
			\hspace{2em}
			\probCCSRecRule \; \dfrac{C \stackrel{\text{def}}{=} {\left( \tilde{x} \right)}P}{C{\left\langle \tilde{y} \right\rangle} \labelledstep{\tau} \pointDis{P\sub{\tilde{y}}{\tilde{x}}}}
			\vspace{0.5em}\\
			\probCCSReducRule \; \dfrac{P \labelledstep{\tau} \Delta}{P \step \Delta}
	\end{array}\end{displaymath}
	\vspace{-1.5em}
	\caption{Semantics of $ \PCCS $.} \label{fig:ProbCCSSemantics}
\end{figure}

Rule~\probCCSProbChoiceRule reduces a probabilistic choice to a probability distribution over its branches after performing action $u$.
Rule~\probCCSRecRule instead to \cite{Deng07} makes the unfolding of recursion a separate $ \tau $-step.
The remaining rules are standard \CCS rules adapted to probability distributions, where the symmetric versions of the Rules \probCCSParLRule and \probCCSComLRule are omitted.

\subsection{Probabilistic Pi-Calculus}

The probabilistic $\pi$-calculus ($\PPi$) is introduced in \cite{Varacca07}, as a probabilistic version of the $\pi$I-calculus \cite{Sangiorgi96}, where output is endowed with probabilities.

We assume that names in a vector $\seq{y}$ are pairwise distinct.
The names $ \tilde{y}_i $ are bound in $ P_i $ by $ \probPiBranchIn{x}{i \in I}{i}{\tilde{y}}{P_i} $ and $ \probPiSelectOut{x}{i \in I}{i}{\tilde{y}}{P_i} $; $ x $ is bound in $ P $ by $ \probPiRes{x}{P} $; and the names $ \tilde{y} $ are bound in $ P $ by $ \probPiRep{x}{\tilde{y}}{P} $.
Names that are not bound are free.
Let $ \freeNames{P} $ denote the set of free names in $ P $.

Structural congruence $ \equiv $ is defined, similarly to \cite{Milner99}, as the smallest congruence containing $\alpha$-equivalence $ \equiv_{\alpha} $ that is closed under the following rules:
\begin{align*}
	\begin{array}{c}
		\para{P}{\nul} \equiv P \quad\quad \para{P}{Q} \equiv \para{Q}{P} \quad\quad \para{P}{\paraBrack{Q}{R}} \equiv \para{\paraBrack{P}{Q}}{R}
		\quad\quad
		\probPiRes{x}{\nul} \equiv \nul \quad\quad \probPiRes{xy}{P} \equiv \probPiRes{yx}{P}\\
		\probPiRes{x}{\paraBrack{P}{Q}} \equiv \para{P}{\probPiRes{x}{Q}} \quad \text{ if } x \notin \freeNames{P}
	\end{array}
\end{align*}
We lift structural congruence to distributions, \ie $ \Delta_1 \equiv \Delta_2 $ if there is a finite index set $ I $ such that $ \Delta_1 = \sum_{i \in I} p_i \pointDis{P_i} $, $ \Delta_2 = \sum_{i \in I} p_i \pointDis{Q_i} $, and $ P_i \equiv Q_i $ for all $ i \in I $.
We obtain the same by applying Definition~\ref{def:relationsDistributions} on $ \equiv $ but do not want to use the symbol $ \pointDis{\equiv} $.

The semantics of $ \PPi $ is given by the rules in Figure~\ref{fig:ProbPiSemantics}, where we start with the labelled semantics of \cite{Varacca07} and add the Rule~\probPiReducRule to obtain a reduction semantics.

\begin{figure}[t]
	\centering
	\begin{displaymath}\begin{array}{c}
		\probPiBranchRule \; \dfrac{j \in I}{\probPiBranchIn{x}{i \in I}{i}{\tilde{y}}{P_i}\set{\problabelledstep{\probPiSelectInLabel{x}{\tilde{y}}{j}}{1} P_j}}
		\hspace{2em}
		\probPiRepRule \; \probPiRep{x}{\tilde{y}}{P}\set{\problabelledstep{\probPiInLabel{x}{\tilde{y}}}{1} P \mid \probPiRep{x}{\tilde{y}}{P}}
		\vspace{0.5em}\\
		\probPiSelectRule \; \probPiSelectOut{x}{i \in I}{i}{\tilde{y}}{P_i}\set{\problabelledstep{\probPiSelectOutLabel{x}{\tilde{y}}{i}}{p_i} P_i}_{i \in I}
		\hspace{2em}
		\probPiOutRule \; \probPiOut{x}{\tilde{y}}{P}\set{\problabelledstep{\probPiOutLabel{x}{\tilde{y}}}{1} P}
		\vspace{0.5em}\\
		\probPiResRule \; \dfrac{P\set{\problabelledstep{\beta_i}{p_i} P_i}_{i \in I} \quad \subj{\beta_i} \neq x}{\probPiRes{x}{P}\set{\problabelledstep{\beta_i}{p_i} \probPiRes{x}{P_i}}_{i \in I}}
		\hspace{2em}
		\probPiParLRule \; \dfrac{P\set{\problabelledstep{\beta_i}{p_i} P_i}_{i \in I}}{P \mid Q\set{\problabelledstep{\beta_i}{p_i} P_i \mid Q}_{i \in I}}
		\vspace{0.5em}\\
		\probPiComRule \; \dfrac{P\set{\problabelledstep{\alpha_i}{p_i} P_i}_{i \in I} \quad \forall i \in I \logdot Q\set{\problabelledstep{\beta_i}{1} Q_i} \quad \forall i \in I \logdot \obj{\alpha_i} = \tilde{y}_i}{P \mid Q \set{\problabelledstep{\probPiBullet{\alpha_i}{\beta_i}}{p_i} \probPiRes{\tilde{y}_i}{{\left( P_i \mid Q_i \right)}}}_{i \in I}}
		\vspace{0.5em}\\
		\probPiAlphaRule \; \dfrac{P \equiv_{\alpha} P' \quad P\set{\problabelledstep{\beta_i}{p_i} Q_i}_{i \in I}}{P'\set{\problabelledstep{\beta_i}{p_i} Q_i}_{i \in I}}
		\hspace{1.5em}
		\probPiReducRule \; \dfrac{P\set{\problabelledstep{\tau}{p_i} Q_i}_{i \in I} \; \Delta{\left( R \right)} = \sum\set{ p_i \mid Q_i = R }}{P \step \Delta}
	\end{array}\end{displaymath}
	\vspace{-2em}
	\caption{Semantics of $ \PPi $.} \label{fig:ProbPiSemantics}
\end{figure}

For the labelled part of the semantics we use Labels of the following form: $ \probPiSelectInLabel{x}{\tilde{y}}{i} $, $ \probPiSelectOutLabel{x}{\tilde{y}}{i} $, $ \probPiInLabel{x}{\tilde{y}} $, and $ \probPiOutLabel{x}{\tilde{y}} $.
Rule~\probPiSelectRule implements the behaviour of a probabilistic selected output, which behaves like one of the processes $P_i$, after sending the corresponding output with probability $p_i$.
On the contrary, each input within the branching input is performed with probability $1$, and the process behaves like $P_j$ after receiving $ \tilde{y}_j $ for some $ j \in I $ as defined in Rule~\probPiBranchRule.
Rule~\probPiComRule describes the interaction of input and output, where the passed names are bound.
Here the partial operation $\probPiBullet{}{}$ on labels is formally defined by: $\probPiBullet{\probPiSelectInLabel{x}{\tilde{y}}{i}}{\probPiSelectOutLabel{x}{\tilde{y}}{i}} = \probPiBullet{\probPiInLabel{x}{\tilde{y}}}{\probPiOutLabel{x}{\tilde{y}}} = \tau$ and undefined in all other cases.
The remaining rules are standard $\pi$-calculus rules extended with probabilities, where the symmetric version of \probPiParLRule is omitted.

In \cite{Varacca07} a type system is introduced to ensure some interesting properties of well-typed terms such as linearity.
Here we are only interested in the untyped version of $ \PPi $ and, thus, omit the type system.


\section{\poc for a Reasonable Encoding}
\label{app:encoding}

\begin{figure}[t]
	\begin{center}
		\begin{tabular}{l l l}
			$\EncPCCSPPi{\probCCSChoice{x}{i \in I}{p_i}{P_i}}$ & $=$ & $\probPiInEmpty{x}{\probPiRes{\encNameI}{{\paraBrack{\probPiSelectOutEmpty{\encNameI}{i \in I}{i}{\EncPCCSPPi{P_i}}}{\encNameI}}}}$\\
			$\EncPCCSPPi{\probCCSChoice{\out{x}}{i \in I}{p_i}{P_i}}$ & $=$ & $\probPiSelectOutEmpty{x}{i \in I}{i}{\EncPCCSPPi{P_i}}$\\
			$\EncPCCSPPi{\probCCSChoice{\tau}{i \in I}{p_i}{P_i}}$ & $=$ & $\probPiResBrack{\encNameTau}{\para{\probPiSelectOutEmpty{\encNameTau}{i \in I}{i}{\EncPCCSPPi{P_i}}}{\encNameTau}}$\\
			$\EncPCCSPPi{\para{P}{Q}}$ & $=$ & $\para{\EncPCCSPPi{P}}{\EncPCCSPPi{Q}}$\\
			$\EncPCCSPPi{\probCCSRes{P}{A}}$ & $=$ & $\probPiRes{A}{\EncPCCSPPi{P}}$\\
			$ \EncPCCSPPi{P{\left[ f \right]}} $ & $ = $ & $ \EncPCCSPPi{P}\sub{\mathsf{ran}_f}{\mathsf{dom}_f} $\\
			$ \EncPCCSPPi{C{\left\langle \tilde{y} \right\rangle}} $ & $ = $ & $ \out{C}{\left( \tilde{y} \right)} $\\
			$\EncPCCSPPi{\success}$ & $=$ & $\success$
		\end{tabular}
	\end{center}
	where for each $ f $ the $ \mathsf{ran}_f = y_1, \ldots, y_n $ and $ \mathsf{dom}_f = x_1, \ldots, x_n $ are vectors of names such that $ \set{x_1, \ldots, x_n} = \set{x \mid f(x) \neq x } $ and $ f{\left( x_i \right)} = y_i $ for all $ 1 \leq i \leq n $.
	\caption{Inner Encoding.}
	\label{fig:enc}
\end{figure}

\begin{definition}[Encoding $ \outerEncoding $/$ \encPCCSPPi $ from $ \PCCS $ into $ \PPi $]
	\label{def:encCCSPi}
	The encoding of $ S \in \processes{C} $ with the process definitions $ C_1 \stackrel{\text{def}}{=} {\left( \tilde{x}_1 \right)}.S_1, \ldots, C_n \stackrel{\text{def}}{=} {\left( \tilde{x}_n \right)}.S_n $ consists of the outer encoding $ \outerEncoding $, where $ \OuterEncoding{S} $ is
	\begin{align*}
		\probPiRes{C_1, \ldots, C_n}{\left( \EncPCCSPPi{S} \mid \probPiRep{C_1}{\tilde{x}_1}{\EncPCCSPPi{S_1}} \mid \ldots \mid \probPiRep{C_n}{\tilde{x}_n}{\EncPCCSPPi{S_n}} \right)}
	\end{align*}
	and the inner encoding $ \encPCCSPPi $ is given in Figure~\ref{fig:enc}.
\end{definition}

In Definition~\ref{def:encCCSPi} the encoding $ \outerEncoding $ from $ \PCCS $ into $ \PPi $ is presented.
In the following we prove that this encoding satisfies the criteria given in Section~\ref{app:calculi} (except for a classical version of operational correspondence) and the new criterion weak \poc.

The encoding of a probabilistic choice is split into three cases: the first three cases of Definition~\ref{def:encCCSPi}.
For input guards a single input on $ x $ is used, to enable the communication with a potential corresponding output.
In the following such a communication step on a source term name is denoted as \textit{\firststep-step}.
In the continuation of the input on $ x $ a probabilistic selecting output on the reserved name $\encNameI$ composed in parallel with a matching input is used to encode the probabilities.
This step on the reserved channel name $\encNameI$ is denoted as \secondstep-step.
The sequence of these two communication steps on $x$ and $\encNameI$ emulates the behaviour of a single communication step in the source.

By restricting the scope of the reserved name $\encNameI$, interactions with other operators communicating on $\encNameI$ between two translations of inputs are prevented.
Further, as the renaming policy ensures that $\encNameI$ does not appear in $ \OuterEncoding{P_i} $, conflicts between the reserved name $\encNameI$ and source term names are avoided.

\begin{definition}[\firststep-step]
	\label{def:firststep}
	An \firststep-step is a communication step on a translated source term name.
\end{definition}

\begin{definition}[\secondstep-step]
	\label{def:secondstep}
	An \secondstep-step is a communication step on an instance of the reserved name $\encNameI$.
\end{definition}

The encoding of an output-guarded probabilistic choice is straight forward, as it is translated using the probabilistic selecting output.

For the guard $\tau$, an output-guarded probabilistic choice in parallel to a single input on the reserved name $\encNameTau$ is used.
Because of the restriction, interactions with other translations of $\tau$-guarded operators are prevented.
A communication step of this kind is denoted as \taustep-step.
This step does not only introduce the probabilities of a $ \tau $-guarded source term choice in the translation but also allows the translated term to do a step whenever the source term does one and compensates the missing $ \tau $ in the syntax of the target language.

\begin{definition}[\taustep-step]
	\label{def:taustep}
	An \taustep-step is a communication step on an instance of the reserved name $\encNameTau$.
\end{definition}

The application of a renaming function is encoded by a substitution.
A call $ C{\left\langle \tilde{y} \right\rangle} $ is encoded by an output, where the corresponding process definitions are translated into replicated inputs and placed in parallel by the outer encoding.
The remaining translations are homomorphic.

\begin{definition}[\repstep-step]
	\label{def:repstep}
	An \repstep-step is a communication step that reduces a replicated input.
\end{definition}

\begin{example}[\taustep-Steps]
	\label{exa:tausequence}
	Consider the source term $S = \para{\tau.{\left(\frac{1}{8}P \oplus \frac{7}{8}Q\right)}}{\tau.{\left(\frac{3}{5}R \oplus \frac{2}{5}S\right)}}$ of $ \PCCS $ without process definitions.
	$S$ can do the following sequence of steps:
	\begin{align*}
		S &\step \Delta_{S, 1} = \set{\frac{1}{8} {\left( \para{P}{\tau.{\left(\frac{3}{5}R \oplus \frac{2}{5}S\right)}} \right)}, \frac{7}{8} {\left( \para{Q}{\tau.{\left(\frac{3}{5}R \oplus \frac{2}{5}S\right)}} \right)}}\\
		&\step \Delta_{S, 2} = \set{\frac{3}{40} {\left( P \mid R \right)}, \frac{2}{40} {\left( P \mid S \right)}, \frac{21}{40} {\left( Q \mid R \right)}, \frac{14}{40} {\left( Q \mid S \right)}}
	\end{align*}
	By Definition~\ref{def:encCCSPi} and since $ S $ has no process definitions, $ \OuterEncoding{S} = \EncPCCSPPi{S} $ and:
	\begin{align*}
		\EncPCCSPPi{S} ={}& \probPiResBrack{\encNameTau}{{\para{\out{\encNameTau}\left(\cont{\frac{1}{8}\branchIn_1}{\EncPCCSPPi{P}} \oplus \cont{\frac{7}{8}\branchIn_2}{\EncPCCSPPi{Q}}\right)}{\encNameTau}}}\mid\\
		& \probPiResBrack{\encNameTau}{\para{\out{\encNameTau}\left(\cont{\frac{3}{5}\branchIn_1}{\EncPCCSPPi{R}} \oplus \cont{\frac{2}{5}\branchIn_2}{\EncPCCSPPi{S}}\right)}{\encNameTau}}.
	\end{align*}
	Thereby, the restriction of the reserved name $\encNameTau$ prevents a communication between the left and right subterm of the outermost parallel operator.
	By Figure~\ref{fig:ProbPiSemantics}, $ \OuterEncoding{S} $ can emulate the steps of $ S $ by $ \OuterEncoding{S} \step \Delta_{T, 1} \step \Delta_{T, 2} $, where:
	\begin{align*}
		\Delta_{T, 1} & = \Big\{\begin{array}[t]{l}
				\frac{1}{8} \para{\paraBrack{\EncPCCSPPi{P}}{\nul}}{\probPiResBrack{\encNameTau}{\para{\out{\encNameTau}\left(\cont{\frac{3}{5}\branchIn_1}{\EncPCCSPPi{R}} \oplus \cont{\frac{2}{5}\branchIn_2}{\EncPCCSPPi{S}}\right)}{\encNameTau}}}\\
				\frac{7}{8} \para{\paraBrack{\EncPCCSPPi{Q}}{\nul}}{\probPiResBrack{\encNameTau}{\para{\out{\encNameTau}\left(\cont{\frac{3}{5}\branchIn_1}{\EncPCCSPPi{R}} \oplus \cont{\frac{2}{5}\branchIn_2}{\EncPCCSPPi{S}}\right)}{\encNameTau}}} \Big\}
			\end{array}\\
		\Delta_{T, 2} &= \big\{ \begin{array}[t]{l}
				\frac{3}{40} {\left( \EncPCCSPPi{P} \mid \nul \mid \EncPCCSPPi{R} \mid \nul \right)}, \frac{2}{40} {\left( \EncPCCSPPi{P} \mid \nul \mid \EncPCCSPPi{S} \mid \nul \right)},\\
				\frac{21}{40} {\left( \EncPCCSPPi{Q} \mid \nul \mid \EncPCCSPPi{R} \mid \nul \right)}, \frac{14}{40} {\left( \EncPCCSPPi{Q} \mid \nul \mid \EncPCCSPPi{S} \mid \nul \right)} \big\}
			\end{array}
	\end{align*}
	The distributions $ \Delta_{T, 1} $ and $ \Delta_{T, 2} $ are both structural congruent to the encoding of the corresponding source distributions, \ie
	$ \OuterEncoding{\Delta_{S, 1}} \equiv \Delta_{T, 1} $ and $ \OuterEncoding{\Delta_{S, 2}} \equiv \Delta_{T, 2} $.
	As both steps in $ \OuterEncoding{S} \step \Delta_{T, 1} \step \Delta_{T, 2} $ reduce an instance of $ \encNameTau $---though of course different instances of $ \encNameTau $ are reduced---they are both \taustep-steps. \qed
\end{example}

An example of a \firststep-step followed by a \secondstep-step is presented in \cite{schmitt23}.
To obtain the probability distribution that results from this sequence of two steps on the target, the probabilities of the \secondstep-step are multiplied with the probabilities of the corresponding \firststep-step.
The resulting probabilities match to the probabilities of the emulated source term step.
This multiplication stems from the rules of probability theory, where the probability of an event consisting of a sequence of several events has to be calculated by multiplying the probabilities of all the single events contained in that sequence.
Accordingly, if a single source term step is emulated by a sequence of target term steps, we compare the probabilities that result from multiplying the probabilities of the target term steps in the sequence with the probabilities of the source.
Fortunately, the multiplication of probabilities is already covered by Definition~\ref{def:stepDistributions} in order to define sequences of steps.
It remains to ensure that our version of operational correspondence compares the probabilities of the distribution that results from a single source term step with the final distribution in the emulating target term sequence (and not with the probabilities of a distribution in the middle of this sequence).

We already ruled out strong operational correspondence as defined in Definition~\ref{def:operationalCorrespondence}.
The other two versions differ in whether they allow for intermediate states.
Another look at the example in \cite{schmitt23} tells us, that intermediate states make sense.
$ \Delta_T $ is a finite probability distribution with the probabilities $ \frac{3}{4} $ and $ \frac{1}{4} $, but neither $ S $ nor $ S' $ have cases with these probabilities.
However, since there is exactly one source term step and exactly one sequence of target term steps, $ \Delta_T $ does not mark a partial commitment, because there was nothing to decide.
Indeed the restriction on $ \encNameI $ ensures that each \firststep-step enables exactly one \secondstep-step and communication is the only case that requires two steps to emulate a single source term steps.
Hence, also by interleaving with other emulations, we do not obtain partial commitments.
Nonetheless $ \Delta_T $ is an intermediate state; not intermediate in terms of decisions and commitments but intermediate in terms of probabilities.
In the second variant of operational correspondence in Definition~\ref{def:operationalCorrespondence} without intermediate states, we would need to find a relation $ \relationT $ that relates $ \Delta_T $ either to $ \OuterEncoding{S} $ or $ \Delta_T' $.
Such a relation $ \relationT $ is difficult or at least not intuitive, since it has to relate states with different probabilities.
It is easier to allow for intermediate states.
So, we want to build a weak version of operational correspondence (third case of Definition~\ref{def:operationalCorrespondence}) with probabilities.
This leads to the version of probabilistic operational correspondence below denoted as \textit{weak probabilistic operational correspondence}.

\begin{definition}[Weak Probabilistic Operational Correspondence]
	\label{def:weakPOC}
	An encoding $\enc{\cdot} : \processes{\source} \to \processes{\target}$ is \emph{weakly probabilistic operationally corresponding} (weak \poc) \wrt $\relationT \subseteq \processes{\target}^2 $ if it is:
	\begin{description}
		\item[\quad Probabilistic Complete:] $ $\\
		\hspace*{2em} $ \forall S, \Delta_S \logdot S \transstep \Delta_S \text{ implies } \left( \exists \Delta_T \logdot \enc{S} \transstep \Delta_T \wedge \left( \enc{\Delta_S}, \Delta_T \right) \in \pointDis{\relationT} \right) $
		\item[\quad Weakly Probabilistic Sound:] $ \forall S, \Delta_T \logdot \enc{S} \transstep \Delta_T \text{ implies } $\\
		\hspace*{2em} $ \left( \exists \Delta_S, \Delta_T' \logdot S \transstep \Delta_S \wedge \Delta_T \transstep \Delta_T' \wedge \left( \enc{\Delta_S}, \Delta_T' \right) \in \pointDis{\relationT} \right) $
	\end{description}
\end{definition}

Before we analyse the quality of our new version of \poc in Section~\ref{app:weakPOC}, we want to check whether it indeed exactly captures the way our encoding $ \outerEncoding $ treads source term steps into probability distributions, \ie we prove that our encoding satisfies weak \poc.
The Example~\ref{exa:tausequence} and the example given in \cite{schmitt23} illustrate that steps on $ \tau $-guarded choices and communication steps satisfy weak \poc \wrt $ \equiv $.
They cover \taustep-steps, \firststep-steps, and \secondstep-steps.
The only missing kind of steps, are steps to unfold a recursion in the source and their emulation by \repstep-steps in the target.

\begin{example}[\repstep-Step]
	\label{exa:recursion}
	Consider $ S = C{\left\langle \tilde{y} \right\rangle} $ with $ C \stackrel{\text{def}}{=} {\left( \right)}.\success $ in $ \PCCS $.
	By Figure~\ref{fig:ProbCCSSemantics}, $ S $ can perform only one step: $ S \step \Delta_S = \pointDis{\success} = \set{ 1 \success } $.
	By Definition~\ref{def:encCCSPi}, then:
	\begin{align*}
		\OuterEncoding{S} &= \probPiRes{C}{\left( \EncPCCSPPi{S} \mid \probPiRep{C}{}{\success} \right)}
		\quad \text{ and } \quad
		\EncPCCSPPi{S} = \out{C}{\left( \tilde{y} \right)}
	\end{align*}
	By Figure~\ref{fig:ProbPiSemantics}, $ \OuterEncoding{S} $ can perform exactly one maximal sequence of steps, namely the \repstep-step $ \OuterEncoding{S} \step \Delta_T = \set{ 1 \probPiRes{C}{\left( \success \mid \probPiRep{C}{}{\success} \right)} } $.
	By Definition~\ref{def:encCCSPi}, $ \OuterEncoding{\Delta_S} = \Delta_T $, because even though $ \Delta_S $ does no longer contain any process constants its process definition is not consumed in the step $ S \step \Delta_S $. \qed
\end{example}

We prove that the encoding in Definition~\ref{def:encCCSPi} satisfies the quality criteria of Section~\ref{app:calculi} and weak \poc.
We start with weak compositionality.

\begin{lemma}[Weak Compositionality, $ \outerEncoding $/$ \encPCCSPPi $]
	\label{lem:weakCompositionality}
	$ $\\
	The encoding $ \outerEncoding $ is weakly compositional.
\end{lemma}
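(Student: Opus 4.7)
The plan is to verify the two clauses in the definition of weak compositionality: namely that the inner encoding $\encPCCSPPi$ is compositional in the strict sense, and that the outer encoding $\outerEncoding$ wraps this with a fixed context that depends only on information not part of the source term proper (here, the process definitions, which can be viewed as parameters to the encoding).

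First I would show that $\encPCCSPPi$ is compositional by reading off a context $\Context{N}{\mathbf{op}}{\hole_1, \ldots, \hole_n}$ directly from each clause of Figure~\ref{fig:enc}. For each $\PCCS$-operator we define:
\begin{itemize}
    \item For $\probCCSChoice{x}{i \in I}{p_i}{\cdot}$: $\Context{N}{}{\hole_1, \ldots, \hole_n} = \probPiInEmpty{x}{\probPiRes{\encNameI}{\para{\probPiSelectOutEmpty{\encNameI}{i \in I}{i}{\hole_i}}{\encNameI}}}$.
    \item For $\probCCSChoice{\out{x}}{i \in I}{p_i}{\cdot}$: $\probPiSelectOutEmpty{x}{i \in I}{i}{\hole_i}$.
    \item For $\probCCSChoice{\tau}{i \in I}{p_i}{\cdot}$: $\probPiResBrack{\encNameTau}{\para{\probPiSelectOutEmpty{\encNameTau}{i \in I}{i}{\hole_i}}{\encNameTau}}$.
    \item For parallel composition: $\para{\hole_1}{\hole_2}$.
    \item For restriction $\probCCSRes{\cdot}{A}$: $\probPiRes{A}{\hole_1}$.
    \item For relabelling $\cdot{\left[ f \right]}$: $\hole_1\sub{\mathsf{ran}_f}{\mathsf{dom}_f}$.
    \item For the call $C{\left\langle \tilde{y} \right\rangle}$ (arity $0$): the constant context $\out{C}{\left(\tilde{y}\right)}$.
\end{itemize}
For each case the definition of $\encPCCSPPi$ immediately gives $\EncPCCSPPi{\mathbf{op}(S_1,\ldots,S_n)} = \Context{N}{\mathbf{op}}{\EncPCCSPPi{S_1}, \ldots, \EncPCCSPPi{S_n}}$, so the compositionality of the inner encoding follows by cases on the syntax.

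Next I would address the outer encoding. By Definition~\ref{def:encCCSPi}, $\OuterEncoding{S} = \probPiRes{C_1, \ldots, C_n}{\left(\EncPCCSPPi{S} \mid \probPiRep{C_1}{\tilde{x}_1}{\EncPCCSPPi{S_1}} \mid \ldots \mid \probPiRep{C_n}{\tilde{x}_n}{\EncPCCSPPi{S_n}}\right)}$, so $\outerEncoding$ is precisely the inner compositional encoding $\encPCCSPPi$ surrounded by the fixed context $\probPiRes{C_1, \ldots, C_n}{\left(\hole \mid \probPiRep{C_1}{\tilde{x}_1}{\EncPCCSPPi{S_1}} \mid \ldots \mid \probPiRep{C_n}{\tilde{x}_n}{\EncPCCSPPi{S_n}}\right)}$. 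This context is parameterised by the process definitions $C_i \stackrel{\text{def}}{=} (\tilde{x}_i).S_i$, which are not part of the source term $S$ itself but are additional declarations given as context to the encoding. That fits exactly the clause of weak compositionality permitting a fixed surrounding context parameterised on ``information that are not part of the source term.''

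I do not expect any real obstacle here; the claim is essentially a direct reading of Figure~\ref{fig:enc} and Definition~\ref{def:encCCSPi}. The only mild subtlety is to argue that the outer context genuinely qualifies as ``fixed'' in the sense of weak compositionality: although it depends on the (finite list of) process definitions accompanying $S$, it does not depend on $S$ itself or on any of its subterms, and these definitions are given once and for all when fixing the encoding target, which matches the intended reading of the relaxed criterion.
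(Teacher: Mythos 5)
Your proof is correct and follows essentially the same approach as the paper: the paper's own argument likewise observes that the inner encoding $\encPCCSPPi$ is compositional because Definition~\ref{def:encCCSPi} gives a context for each source operator, and that the outer encoding $\outerEncoding$ is a fixed context parameterised on the process definitions, which are not part of the source term itself. Your version merely spells out the per-operator contexts explicitly, which the paper leaves implicit.
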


\begin{proof}
	Our encoding consists of the outer encoding $ \outerEncoding $ and the inner encoding $ \encPCCSPPi $.
	The outer encoding $ \outerEncoding $ is a fixed context that is parametrised on the process definitions of the source term, that are not part of the source term itself.
	The inner encoding $ \encPCCSPPi $ is compositional, because the encoding function in Definition~\ref{def:encCCSPi} defines a context for each operator of the source language in that the translations of the subterms of the respective source term are used.
	Hence, $ \outerEncoding $ is weakly compositional.
\end{proof}

Name invariance and the different versions of operational correspondence are defined modulo a relation $ \relationT $ on target terms that is success sensitive.
For our encoding $ \outerEncoding $ we can choose $ \relationT $ as the structural congruence $ \equiv $ on the target language $ \PPi $.
Structural congruence satisfies a stronger version of success sensitiveness with $ \HasBarb{\cdot}{\success} $ instead of $ \ReachBarb{\cdot}{\success} $.

\begin{lemma}[$ \equiv $ is Success Sensitive]
	\label{lem:successSensitivenessTargetRelation}
	If $ T_1 \equiv T_2 $ then $ \HasBarb{T_1}{\success} \longleftrightarrow \HasBarb{T_2}{\success} $.\\
	Moreover, if $ \Delta_1 \equiv \Delta_2 $ then $ \HasBarb{\Delta_1}{\success} \longleftrightarrow \HasBarb{\Delta_2}{\success} $.
\end{lemma}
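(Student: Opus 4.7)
The plan is to establish the process-level claim first by induction on the derivation of $T_1 \equiv T_2$, and then lift it to distributions using the componentwise definition of $\equiv$ on distributions given just before Figure~\ref{fig:ProbPiSemantics}.

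For the process-level part, recall that $\HasBarb{T}{\success}$ means that $\success$ occurs unguarded in $T$, \ie at the top level modulo parallel composition and restriction. Since $\equiv$ is the least congruence containing $\equiv_{\alpha}$ and closed under the seven axioms listed in the excerpt, it suffices to show that each generating axiom preserves $\HasBarb{\cdot}{\success}$ and that preservation is closed under the process contexts of $\PPi$. For the axioms this is immediate by inspection: the monoid laws for $\mid$ and the two axioms for restriction only rearrange or remove inactive restrictions without moving any subterm under a guard, and the side condition $x \notin \freeNames{P}$ on scope extrusion guarantees that extruding $x$ cannot capture any free occurrence inside $P$, so the unguarded $\success$-subterms on either side coincide. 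For $\equiv_{\alpha}$, renaming bound names leaves the constant $\success$ (which has no names) untouched at every occurrence, so unguarded occurrences are preserved. Congruence closure is handled by noting that each $\PPi$ operator either adds a binding context that does not affect $\success$ (restriction, prefixes, replication) or places its operands in parallel: in both cases an unguarded $\success$ on one side of $\equiv$ induces one on the other.

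For the distribution-level statement, unfold the definition: $\Delta_1 \equiv \Delta_2$ means there is a finite index set $I$ and probabilities $p_i$ with $\Delta_1 = \sum_{i \in I} p_i \pointDis{P_i}$, $\Delta_2 = \sum_{i \in I} p_i \pointDis{Q_i}$, and $P_i \equiv Q_i$ for each $i \in I$. The barb $\HasBarb{\Delta}{\success}$ on distributions is the natural lifting, \ie it holds iff $\HasBarb{P_i}{\success}$ for some $i$ with $p_i > 0$ (or, depending on the variant used, for all such $i$). In either reading, the process-level equivalence $P_i \equiv Q_i$ combined with the first part of the lemma gives $\HasBarb{P_i}{\success} \longleftrightarrow \HasBarb{Q_i}{\success}$ for every $i \in I$, and the desired biimplication on $\Delta_1, \Delta_2$ follows directly.

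The main obstacle I anticipate is the scope-extrusion axiom $\probPiRes{x}{(P \mid Q)} \equiv P \mid \probPiRes{x}{Q}$: one has to argue carefully that restriction is transparent to the barb $\success$, \ie that $\HasBarb{\probPiRes{x}{P}}{\success}$ iff $\HasBarb{P}{\success}$, which relies on the fact that $\success$ contains no names and hence is never hidden by a restriction. Once this observation is spelled out, every other case of the induction is routine.
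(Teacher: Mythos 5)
Your proposal is correct and follows essentially the same route as the paper: an induction on the definition of $\equiv$ with a case per generating axiom (the key observations being that $\success$ contains no names, so $\alpha$-conversion and restriction cannot hide it, and that the monoid laws for $\mid$ only rearrange unguarded subterms), followed by the componentwise lifting to distributions via $P_i \equiv Q_i$ for all $i \in I$. You are in fact slightly more explicit than the paper about closure under process contexts and about why scope extrusion is harmless, but the substance of the argument is the same.
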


\begin{proof}
	The proof is by induction on the definition of $ \equiv $.
	All cases are immediate.
	\begin{description}
		\item[$ \alpha $-Equivalence $ \equiv_{\alpha} $:] In this case $ T_1 \equiv_{\alpha} T_2 $.
			Since $ \success $ does not contain any names, we have $ \HasBarb{T_1}{\success} $ iff $ \HasBarb{T_2}{\success} $.
		\item[$ \para{P}{\nul} \equiv P $:] In this case $ T_1 = T_2 \mid \nul $.
			Since $ \nul $ does not contain $ \success $, then $ \HasBarb{T_1}{\success} $ iff $ \HasBarb{T_2}{\success} $.
		\item[$ \para{P}{Q} \equiv \para{Q}{P} $:] In this case $ T_1 = P \mid Q $ and $ T_2 = Q \mid P $.
			Since $ T_1 $ contains $ \success $ iff $ T_2 $ contains $ \success $, then $ \HasBarb{T_1}{\success} $ iff $ \HasBarb{T_2}{\success} $.
		\item[$ \para{P}{\paraBrack{Q}{R}} \equiv \para{\paraBrack{P}{Q}}{R} $:] In this case $ T_1 = P \mid {\left( Q \mid R \right)} $ and $ T_2 = {\left( P \mid Q \right)} \mid R $.
			Since $ T_1 $ contains $ \success $ iff $ T_2 $ contains $ \success $, then $ \HasBarb{T_1}{\success} $ iff $ \HasBarb{T_2}{\success} $.
		\item[$ \probPiRes{x}{\nul} \equiv \nul $:] In this case $ T_1 = \probPiRes{x}{\nul} $ and $ T_2 = \nul $.
			Since $ \nul $ does not contain $ \success $, then $ \NotHasBarb{T_1}{\success} $ and $ \NotHasBarb{T_2}{\success} $.
		\item[$ \probPiRes{xy}{P} \equiv \probPiRes{yx}{P} $:] Then $ T_1 = \probPiRes{xy}{P} $ and $ T_2 = \probPiRes{yx}{P} $.
			Since $ T_1 $ as well as $ T_2 $ contain $ \success $ iff $ P $ contains $ \success $, then $ \HasBarb{T_1}{\success} $ iff $ \HasBarb{T_2}{\success} $.
		\item[$ \probPiRes{x}{\paraBrack{P}{Q}} \equiv \para{P}{\probPiRes{x}{Q}} $:] In this case $ T_1 = \probPiRes{x}{\left( P \mid Q \right)} $ and $ T_2 = P \mid \probPiRes{x}{Q} $, where $ x \notin \freeNames{P} $.
			Since $ T_1 $ contains $ \success $ iff $ T_2 $ contains $ \success $, then $ \HasBarb{T_1}{\success} $ iff $ \HasBarb{T_2}{\success} $.
		\item[$ \Delta_1 \equiv \Delta_2 $:] In this case there is a finite index set $ I $ such that $ \Delta_1 = \sum_{i \in I} p_i \pointDis{P_i} $, $ \Delta_2 = \sum_{i \in I} p_i \pointDis{Q_i} $, and $ P_i \equiv Q_i $ for all $ i \in I $.
			Because of $ P_i \equiv Q_i $, we have $ \HasBarb{P_i}{\success} $ iff $ \HasBarb{Q_i}{\success} $ for all $ i \in I $.
			Then $ \HasBarb{\Delta_1}{\success} $ iff $ \HasBarb{\Delta_2}{\success} $. \qed
	\end{description}
\end{proof}

The renaming policy $ \renamingPCCSPPi $ of $ \outerEncoding $ reserves the names $ \encNameI, \encNameTau $ and keeps process constants $ C $ distinct from source term names, \ie $ \length{\RenamingPCCSPPi{n}} = 1 $ and $ \RenamingPCCSPPi{n} \cap \set{\left( \encNameI \right), \left( \encNameTau \right), \left( C \right) \mid C \text{ is a process constant}} = \emptyset $ for all $ n \in \names $.

Name invariance ensures that the encoding function treads all source term names in the same way.
Since the encoding function $ \outerEncoding $ does not introduce any free names and because of the rigorous use of the renaming policy $ \renamingPCCSPPi $ (although we omit it for readability in Definition~\ref{def:encCCSPi}), our encoding satisfies a stronger version of name invariance, where $ \alpha $-equivalence can be used regardless of whether $ \sigma $ is injective.

\begin{lemma}[Name Invariance, $ \outerEncoding $/$ \encPCCSPPi $]
	\label{lem:nameInvariance}
	For every $ S \in \processes{\source} $ and every substitution $ \sigma $, it holds that $ \OuterEncoding{S\sigma} \equiv_{\alpha} \OuterEncoding{S}\sigma' $ and $ \EncPCCSPPi{S\sigma} \equiv_{\alpha} \EncPCCSPPi{S}\sigma' $, where $ \sigma' $ is such that $ \RenamingPCCSPPi{\sigma(a)} = \sigma'{\left( \RenamingPCCSPPi{a} \right)} $ for all $ a \in \names $.
\end{lemma}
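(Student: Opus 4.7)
The plan is to prove the statement for the inner encoding $\encPCCSPPi$ first by structural induction on $S$, and then derive the outer encoding case from it. For each syntactic form of $S$ we unfold the definition of $\EncPCCSPPi{\cdot}$ from Figure~\ref{fig:enc}, apply the substitution, and compare with $\EncPCCSPPi{S}\sigma'$ after using the induction hypothesis on all immediate subterms. Throughout, we exploit that the renaming policy $\renamingPCCSPPi$ maps every source name to a single target name different from $\encNameI$, $\encNameTau$, and any process constant $C$, so $\sigma'$ acts only on translated source names and never touches the reserved auxiliary channels introduced by the encoding.

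The routine cases are $\para{P}{Q}$, $\probCCSRes{P}{A}$, $P[f]$, $C\langle\tilde{y}\rangle$, and $\success$: in each of these the inner encoding is either homomorphic or applies a substitution/restriction that visibly commutes with $\sigma$, so the induction hypothesis yields the result directly. For the call case $\EncPCCSPPi{C\langle\tilde{y}\rangle} = \out{C}(\tilde{y})$ we only need that $\sigma$ does not act on $C$ (since $C$ is a process constant, not a name), and $\sigma'$ applied to $\out{C}(\tilde{y})$ renames precisely the entries of $\tilde{y}$ according to $\RenamingPCCSPPi{\sigma(\cdot)} = \sigma'(\RenamingPCCSPPi{\cdot})$. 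For the renaming case $P[f]$, the encoding turns $f$ into a substitution $[\mathsf{ran}_f/\mathsf{dom}_f]$, and since $\sigma$ composes with $f$ on the source side while $\sigma'$ composes with the translated substitution on the target side, the two substitutions agree by definition of $\sigma'$.

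The three choice cases are the interesting ones, because they introduce the reserved names $\encNameI$ (for input guards) and $\encNameTau$ (for $\tau$-guards) under a restriction. In $\EncPCCSPPi{\probCCSChoice{x}{i\in I}{p_i}{P_i}} = \probPiInEmpty{x}{\probPiRes{\encNameI}{\ldots}}$, applying $\sigma$ before encoding replaces the subject $x$ by $\sigma(x)$, whose translation under the renaming policy is $\sigma'(\RenamingPCCSPPi{x})$; meanwhile $\sigma'$ leaves $\encNameI$ untouched because $\encNameI \notin \RenamingPCCSPPi{n}$ for any source name $n$. The output-guarded case $\probCCSChoice{\out{x}}{i\in I}{p_i}{P_i}$ is analogous without a restriction, and the $\tau$-guarded case treats $\encNameTau$ in the same way as $\encNameI$. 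In all three cases the induction hypothesis handles the continuations $P_i$.

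The main technical obstacle is capture avoidance: when $\sigma$ happens to map a free name to one that clashes with the reserved $\encNameI$, $\encNameTau$, a process constant, or a bound name in the encoding, a literal application of $\sigma'$ to $\EncPCCSPPi{S}$ could capture that name. This is precisely why the statement is modulo $\equiv_\alpha$ rather than a strict equality: we apply $\alpha$-conversion to the bound $\encNameI$ (resp.\ $\encNameTau$) in $\EncPCCSPPi{S}$ before applying $\sigma'$ to pick fresh representatives, and then equality of the resulting terms is immediate from the renaming-policy equation $\RenamingPCCSPPi{\sigma(a)} = \sigma'(\RenamingPCCSPPi{a})$. The outer encoding case then follows by observing that $\OuterEncoding{\cdot}$ adds a fixed context $\probPiRes{C_1,\ldots,C_n}{(\cdot \mid \probPiRep{C_1}{\tilde{x}_1}{\EncPCCSPPi{S_1}} \mid \ldots)}$ whose binders are process constants and the bound parameters $\tilde{x}_i$ of the definitions, none of which are affected by $\sigma$ or $\sigma'$; hence the inner-encoding result lifts verbatim, again modulo $\alpha$-renaming of the $C_i$ and $\tilde{x}_i$ if necessary.
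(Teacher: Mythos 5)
Your proposal is correct and follows essentially the same route as the paper's proof: an induction over the encoding function with the renaming policy guaranteeing that $\sigma'$ never touches $\encNameI$, $\encNameTau$, or the process constants, $\alpha$-conversion handling potential capture (the paper phrases this as a without-loss-of-generality assumption on $\sigma'$ at the outset), and the outer encoding treated as a fixed context around the inner result. The only places where the paper is more explicit are the restriction case (it builds a restricted substitution $\gamma$ by removing the bound names $A$ from the domain of $\sigma$) and the relabelling case (it defines $f'$ with $f'(\sigma(n)) = \sigma(f(n))$), but these are elaborations of steps you already identify.
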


\begin{proof}
	Without loss of generality we assume that $ \sigma' $ behaves as identity for all names that are not in the range of $ \renamingPCCSPPi $.
	The assumption can be replaced by applying alpha conversion such that the names introduced by the encoding function, \ie the restricted names that are denoted by $ \encNameI $, $ \encNameTau $, or $ C_i $ in Definition~\ref{def:encCCSPi}, are not affected by applying $ \sigma' $.
	The proof is by induction on the encoding function.
	\begin{description}
		\item[$ \OuterEncoding{S} $:] Assume without loss of generality that $ S $ has the process definitions $ C_1 \stackrel{\text{def}}{=} {\left( \tilde{x}_1 \right)}.S_1, \ldots, C_n \stackrel{\text{def}}{=} {\left( \tilde{x}_n \right)}.S_n $, where $ \freeNames{S_i} = \set{\tilde{x}_i} $ for all $ 1 \leq i \leq n $.
			By the induction hypothesis, $ \EncPCCSPPi{S\sigma} \equiv_{\alpha} \EncPCCSPPi{S}\sigma' $.
			The renaming policy ensures that $ \RenamingPCCSPPi{n} \cap \set{C_1, \ldots, C_n} = \emptyset $ for all $ n \in \names $.
			Then we have:
			\begin{align*}
				& \OuterEncoding{S\sigma}\\
				& = \probPiRes{C_1, \ldots, C_n}{\left( \EncPCCSPPi{S\sigma} \mid \probPiRep{C_1}{\tilde{x}_1}{\EncPCCSPPi{S_1}} \mid \ldots \mid \probPiRep{C_n}{\tilde{x}_n}{\EncPCCSPPi{S_n}} \right)}\\
				& \equiv_{\alpha} \probPiRes{C_1, \ldots, C_n}{\left( \EncPCCSPPi{S}\sigma' \mid \probPiRep{C_1}{\tilde{x}_1}{\EncPCCSPPi{S_1}} \mid \ldots \mid \probPiRep{C_n}{\tilde{x}_n}{\EncPCCSPPi{S_n}} \right)}\\
				& \equiv_{\alpha} \probPiRes{C_1, \ldots, C_n}{\left( \EncPCCSPPi{S} \mid \probPiRep{C_1}{\tilde{x}_1}{\EncPCCSPPi{S_1}} \mid \ldots \mid \probPiRep{C_n}{\tilde{x}_n}{\EncPCCSPPi{S_n}} \right)}\sigma'\\
				& = \OuterEncoding{S}\sigma'
			\end{align*}
			Note that here $ \tilde{x}_i $ is short for the sequence that results from applying $ \renamingPCCSPPi $ on all names in $ \tilde{x}_i $.
			Because of that, the renaming policy $ \renamingPCCSPPi $ ensures that $ \freeNames{\probPiRep{C_i}{\tilde{x}_i}{\EncPCCSPPi{S_i}}} = \set{C_i} $ and thus that $ \sigma' $ has no effect on these terms.
		\item[$ \EncPCCSPPi{\probCCSChoice{x}{i \in I}{p_i}{P_i}} $:] In this case $ S = \probCCSChoice{x}{i \in I}{p_i}{P_i} $.
			By the induction hypothesis, $ \EncPCCSPPi{P\sigma} \equiv_{\alpha} \EncPCCSPPi{P}\sigma' $.
			The renaming policy ensures that $ \RenamingPCCSPPi{n} \cap \set{\encNameI} = \emptyset $ for all $ n \in \names $.
			Then we have:
			\begin{align*}
				\EncPCCSPPi{S\sigma} &= \EncPCCSPPi{\probCCSChoice{\sigma(x)}{i \in I}{p_i}{P_i\sigma}}\\
				& = \probPiInEmpty{\sigma'(x)}{\probPiRes{\encNameI}{{\paraBrack{\probPiSelectOutEmpty{\encNameI}{i \in I}{i}{\EncPCCSPPi{P_i\sigma}}}{\encNameI}}}}\\
				& \equiv_{\alpha} \probPiInEmpty{\sigma'(x)}{\probPiRes{\encNameI}{{\paraBrack{\probPiSelectOutEmpty{\encNameI}{i \in I}{i}{\EncPCCSPPi{P_i}\sigma'}}{\encNameI}}}}\\
				& \equiv_{\alpha} \probPiInEmpty{x}{\probPiRes{\encNameI}{{\paraBrack{\probPiSelectOutEmpty{\encNameI}{i \in I}{i}{\EncPCCSPPi{P_i}}}{\encNameI}}}}\sigma' = \EncPCCSPPi{S}\sigma'
			\end{align*}
		\item[$ \EncPCCSPPi{\probCCSChoice{\out{x}}{i \in I}{p_i}{P_i}} $:] In this case $ S = \probCCSChoice{\out{x}}{i \in I}{p_i}{P_i} $.
			By the induction hypothesis, $ \EncPCCSPPi{P\sigma} \equiv_{\alpha} \EncPCCSPPi{P}\sigma' $.
			Then we have:
			\begin{align*}
				\EncPCCSPPi{S\sigma} &= \EncPCCSPPi{\probCCSChoice{\out{\sigma(x)}}{i \in I}{p_i}{P_i\sigma}}
				= \probPiSelectOutEmpty{\sigma'(x)}{i \in I}{i}{\EncPCCSPPi{P_i\sigma}}\\
				& \equiv_{\alpha} \probPiSelectOutEmpty{\sigma'(x)}{i \in I}{i}{\EncPCCSPPi{P_i}\sigma'}\\
				& \equiv_{\alpha} {\left( \probPiSelectOutEmpty{x}{i \in I}{i}{\EncPCCSPPi{P_i}} \right)}\sigma' = \EncPCCSPPi{S}\sigma'
			\end{align*}
		\item[$ \EncPCCSPPi{\probCCSChoice{\tau}{i \in I}{p_i}{P_i}} $:] In this case $ S = \probCCSChoice{\tau}{i \in I}{p_i}{P_i} $.
			By the induction hypothesis, $ \EncPCCSPPi{P\sigma} \equiv_{\alpha} \EncPCCSPPi{P}\sigma' $.
			The renaming policy ensures that $ \RenamingPCCSPPi{n} \cap \set{\encNameTau} = \emptyset $ for all $ n \in \names $.
			Then we have:
			\begin{align*}
				\EncPCCSPPi{S\sigma} & = \EncPCCSPPi{\probCCSChoice{\tau}{i \in I}{p_i}{P_i\sigma}}
				\equiv_{\alpha} \probPiResBrack{\encNameTau}{\para{\probPiSelectOutEmpty{\encNameTau}{i \in I}{i}{\EncPCCSPPi{P_i\sigma}}}{\encNameTau}}\\
				& \equiv_{\alpha} \probPiResBrack{\encNameTau}{\para{\probPiSelectOutEmpty{\encNameTau}{i \in I}{i}{\EncPCCSPPi{P_i}\sigma'}}{\encNameTau}}\\
				& \equiv_{\alpha} \probPiResBrack{\encNameTau}{\para{\probPiSelectOutEmpty{\encNameTau}{i \in I}{i}{\EncPCCSPPi{P_i}}}{\encNameTau}}\sigma'
				= \EncPCCSPPi{S}\sigma'
			\end{align*}
		\item[$ \EncPCCSPPi{\para{P}{Q}} $:] In this case $ S = P \mid Q $.
			By the induction hypothesis, $ \EncPCCSPPi{P\sigma} \equiv_{\alpha} \EncPCCSPPi{P}\sigma' $ and $ \EncPCCSPPi{Q\sigma} \equiv_{\alpha} \EncPCCSPPi{Q}\sigma' $.
			Thereby, $ \EncPCCSPPi{S\sigma} = \EncPCCSPPi{P\sigma \mid Q\sigma} = \EncPCCSPPi{P\sigma} \mid \EncPCCSPPi{Q\sigma} \equiv_{\alpha} \EncPCCSPPi{P}\sigma' \mid \EncPCCSPPi{Q}\sigma' = {\left( \EncPCCSPPi{P} \mid \EncPCCSPPi{Q} \right)}\sigma' = \EncPCCSPPi{S}\sigma' $.
		\item[$ \EncPCCSPPi{\probCCSRes{P}{A}} $:] In this case $ S = \probCCSRes{P}{A} $.
			Let $ \gamma $ be obtained from $ \sigma $ by removing all names in $ A $ from the domain of $ \sigma $.
			Moreover, let $ \gamma' $ be such that $ \RenamingPCCSPPi{\gamma(a)} = \gamma'{\left( \RenamingPCCSPPi{a} \right)} $ for all $ a \in \names $.
			By the induction hypothesis, $ \EncPCCSPPi{P\gamma} \equiv_{\alpha} \EncPCCSPPi{P}\gamma' $.
			Then $ \EncPCCSPPi{S\sigma} = \EncPCCSPPi{\probCCSRes{P\gamma}{A}} = \probPiRes{A}{\EncPCCSPPi{P\gamma}} \equiv_{\alpha} \probPiRes{A}{{\left( \EncPCCSPPi{P}\gamma' \right)}} = {\left( \probPiRes{A}{\EncPCCSPPi{P}} \right)}\sigma' = \EncPCCSPPi{S}\sigma' $.
		\item[$ \EncPCCSPPi{P{\left[ f \right]}} $:] In this case $ S = P{\left[ f \right]} $.
			Let $ f' $ be such that $ f'(\sigma(n)) = \sigma(f(n)) $ for all $ n \in \names $.
			By the induction hypothesis, $ \EncPCCSPPi{P\sigma} \equiv_{\alpha} \EncPCCSPPi{P}\sigma' $.
			Then we have:
			\begin{align*}
				\EncPCCSPPi{S\sigma} & = \EncPCCSPPi{{\left( P\sigma \right)}{\left[ f' \right]}} = \EncPCCSPPi{P\sigma}\sub{\mathsf{ran}_{f'}}{\mathsf{dom}_{f'}}\\
				& \equiv_{\alpha} {\left( \EncPCCSPPi{P}\sigma' \right)}\sub{\mathsf{ran}_{f'}}{\mathsf{dom}_{f'}}\\
				& = {\left( \EncPCCSPPi{P}\sub{\mathsf{ran}_f}{\mathsf{dom}_f} \right)}\sigma' = \EncPCCSPPi{S}\sigma'
			\end{align*}
		\item[$ \EncPCCSPPi{C{\left\langle \tilde{y} \right\rangle}} $:] In this case $ S = C{\left\langle \tilde{y} \right\rangle} $.
			Let $ \tilde{z} $ be the result of applying $ \sigma $ on all names in $ \tilde{y} $.
			Then $ \EncPCCSPPi{S\sigma} = \EncPCCSPPi{C{\left\langle \tilde{z} \right\rangle}} = \out{C}{\left( \tilde{z} \right)} = \out{C}{\left( \tilde{y} \right)}\sigma' = \EncPCCSPPi{S}\sigma' $.
		\item[$ \EncPCCSPPi{\success} $:] In this case $ S = \success $.
			Then $ \EncPCCSPPi{S\sigma} = \EncPCCSPPi{\success} = \success = \success\sigma' = \EncPCCSPPi{S}\sigma' $. \qed
	\end{description}
\end{proof}

We introduced in Definition~\ref{def:weakPOC} a new variant of operational correspondence, namely weak probabilistic operational correspondence (weak \poc), for the encoding $ \outerEncoding $.
For the completeness part, we have to prove that the encoding preserves the behaviour of source terms.
Therefore, we show how the translations emulate a source term step.

\begin{lemma}[Weak \poc, Completeness, $ \outerEncoding $/$ \encPCCSPPi $]
	\label{lem:completeness}
	\begin{align*}
		\forall S, \Delta_S \logdot S \transstep \Delta_S \text{ implies } \left( \exists \Delta_T \logdot \OuterEncoding{S} \transstep \Delta_T \wedge \OuterEncoding{\Delta_S} \equiv \Delta_T \right)
	\end{align*}
\end{lemma}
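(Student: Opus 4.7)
The plan is to split the claim into two sublemmas. The first is a single-process completeness statement: if $ S \step \Theta $ in $ \PCCS $, then there exists $ \Delta_T $ with $ \OuterEncoding{S} \transstep \Delta_T $ and $ \OuterEncoding{\Theta} \equiv \Delta_T $. I would prove this by induction on the derivation of the labelled transition $ S \labelledstep{\tau} \Theta $ that underlies $ S \step \Theta $ via Rule~$ \probCCSReducRule $. The main statement then follows by induction on the length of the reduction sequence $ \pointDis{S} \transstep \Delta_S $: the base case (zero steps) takes $ \Delta_T = \pointDis{\OuterEncoding{S}} $, and the inductive step applies the single-process sublemma to each $ P_i \step \Theta_i $ appearing in the decomposition of a distribution step in the sense of Definition~\ref{def:stepDistributions}, then recombines the resulting target sequences using clause~(d) of Definition~\ref{def:stepDistributions} together with the induction hypothesis, closing $ \equiv $ pointwise through $ \pointDis{\equiv} $.

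For the single-process sublemma, the two interesting axiom cases are as follows. If $ S = \probCCSChoice{\tau}{i \in I}{p_i}{P_i} $, then its encoding $ \probPiResBrack{\encNameTau}{\para{\probPiSelectOutEmpty{\encNameTau}{i \in I}{i}{\EncPCCSPPi{P_i}}}{\encNameTau}} $ performs a single $ \taustep $-step yielding $ \sum_{i \in I} p_i \pointDis{\probPiResBrack{\encNameTau}{\para{\EncPCCSPPi{P_i}}{\nul}}} \equiv \OuterEncoding{\Theta} $, as illustrated in Example~\ref{exa:tausequence}. If $ S = C{\left\langle \tilde{y} \right\rangle} $, then the outer encoding provides a replicated input $ \probPiRep{C}{\tilde{x}}{\EncPCCSPPi{S_C}} $ that synchronises with $ \EncPCCSPPi{S} = \out{C}{\left( \tilde{y} \right)} $ in a single $ \repstep $-step (Example~\ref{exa:recursion}), and Lemma~\ref{lem:nameInvariance} ensures that the resulting substitution commutes with $ \encPCCSPPi $. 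The crucial composite case is Rule~$ \probCCSComLRule $, in which an input-guarded choice $ \probCCSChoice{x}{i \in I}{p_i}{P_i} $ communicates with an output-guarded choice $ \probCCSChoice{\out{x}}{j \in J}{q_j}{Q_j} $; the emulating target sequence is an $ \firststep $-step on $ x $ that introduces the output probabilities $ q_j $ (via Rule~$ \probPiSelectRule $) followed by an $ \secondstep $-step on the now unguarded reserved name $ \encNameI $ that introduces the input probabilities $ p_i $. Two successive applications of clause~(d) of Definition~\ref{def:stepDistributions} multiply these into the required weights $ p_i q_j $ over the continuations $ \para{P_i}{Q_j} $. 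The remaining derivation rules ($ \probCCSParLRule $, $ \probCCSResRule $, $ \probCCSRelabelRule $) are congruence cases: I would apply the induction hypothesis to the active subterm and wrap the resulting target sequence in the corresponding $ \encPCCSPPi $-context, using that the renaming policy $ \renamingPCCSPPi $ keeps $ \encNameI $, $ \encNameTau $, and process constants disjoint from source names so that no restriction captures a reserved name.

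The main obstacle is the communication case: it is the only source step whose emulation is a genuine sequence of two target steps, so it demands careful bookkeeping to verify that the product of probabilities produced by the two successive applications of clause~(d) matches the source-side weights $ p_i q_j $, and that the inert point-mass components $ \pointDis{\EncPCCSPPi{R}} $ forced into the intermediate target distribution by clause~(b) of Definition~\ref{def:stepDistributions}, coming from parallel subterms $ R $ that do not participate in this emulation, are ultimately reassembled consistently with $ \OuterEncoding{\Theta} $ up to $ \equiv $. A secondary technicality is that in the congruence case for $ \probCCSParLRule $ the induction hypothesis produces a multi-step target sequence that must be lifted through a parallel context while preserving the point-mass components of the other branch, which is handled by a direct rewriting via clauses~(b) and~(d) of Definition~\ref{def:stepDistributions}.
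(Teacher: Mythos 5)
Your overall architecture matches the paper's: an inner induction on the derivation of the labelled transition underlying a single source step, followed by an outer induction on the length of the sequence $ S \transstep \Delta_S $, recombining via Definition~\ref{def:stepDistributions} and closing under $ \equiv $ pointwise. However, there is a genuine gap in the communication case. You propose to do the inner induction on derivations of $ S \labelledstep{\tau} \Theta $ only, and then treat Rule~\probCCSComLRule as if its two participants were literally a top-level input-guarded choice $ \probCCSChoice{x}{i \in I}{p_i}{P_i} $ and a top-level output-guarded choice $ \probCCSChoice{\out{x}}{j \in J}{q_j}{Q_j} $. That is not what the rule says: its premises are arbitrary derivations of $ P \labelledstep{a} \Delta_P $ and $ Q \labelledstep{\out{a}} \Delta_Q $, and the actual input or output prefix may sit arbitrarily deep under parallel composition, restriction and relabelling (consider $ S = \para{\paraBrack{a.P}{R}}{\out{a}.Q} $, or a component whose $ a $-transition arises through $ \probCCSRelabelRule $). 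An induction hypothesis stated only for $ \tau $-labelled transitions gives you nothing for these premises, so the case cannot be discharged as written.

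The paper closes exactly this hole by strengthening the inner induction statement to arbitrary labels $ u $: for $ S \labelledstep{u} \Delta_S $ it exhibits target sequences $ \OuterEncoding{S}\set{ \xLongrightarrow[p_i]{\hat{u}} T_i }_{i \in I} $ with $ \OuterEncoding{\Delta_S} \equiv \sum_{i \in I} p_i T_i $, where $ \xLongrightarrow[p_i]{\hat{x}} $ and $ \xLongrightarrow[p_i]{\hat{\out{x}}} $ are sequences of target $ \tau $-steps containing exactly one visible action on the translated name and $ p_i $ is the product of the probabilities along the sequence. With that generalisation the induction hypothesis applies to both premises of \probCCSComLRule, and the visible actions of the two sequences are then synthesised into the $ \firststep $-step followed by the $ \secondstep $-step, yielding the weights $ p_i \cdot q_j $. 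Your probability bookkeeping for the two-step emulation and your reading of which step carries which probabilities (the $ \firststep $-step on $ x $ carries the output probabilities, the $ \secondstep $-step on $ \encNameI $ the input probabilities) are correct; what is missing is the strengthened inductive invariant that makes the communication case go through in general. The remaining cases you describe (the three choice axioms, recursion, and the congruence rules) align with the paper's proof.
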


\begin{proof}
	We start with a single step $ S \step \Delta_S $ and show that we need in this case a finite and non-empty sequence of steps $ \OuterEncoding{S} \transstep \Delta_T $ in the target such that $ \OuterEncoding{\Delta_S} \equiv \Delta_T $.
	Let $ C_1, \ldots, C_n $ be all process constants in $ S $ and $ C_1 \stackrel{\text{def}}{=} {\left( \tilde{x}_1 \right)}.S_1, \ldots, C_n \stackrel{\text{def}}{=} {\left( \tilde{x}_n \right)}.S_n $ the corresponding process definitions.
	Then $ \OuterEncoding{S} = \probPiRes{C_1, \ldots, C_n}{\left( \EncPCCSPPi{S} \mid \probPiRep{C_1}{\tilde{x}_1}{\EncPCCSPPi{S_1}} \mid \ldots \mid \probPiRep{C_n}{\tilde{x}_n}{\EncPCCSPPi{S_n}} \right)} $.\linebreak
	Since the outer restriction on $ C_1, \ldots, C_n $ and the subterms $ \probPiRep{C_i}{\tilde{x}_i}{\EncPCCSPPi{S_i}} $ are not altered by steps of the target term, we define the context
	\begin{align*}
		\Context{}{}{\hole} = \probPiRes{C_1, \ldots, C_n}{\left( \hole \mid \probPiRep{C_1}{\tilde{x}_1}{\EncPCCSPPi{S_1}} \mid \ldots \mid \probPiRep{C_n}{\tilde{x}_n}{\EncPCCSPPi{S_n}} \right)}
	\end{align*}
	to capture this part of target terms, \ie $ \OuterEncoding{S} = \Context{}{}{\EncPCCSPPi{S}} $.
	By Figure~\ref{fig:ProbCCSSemantics}, $ S \step \Delta_S $ was derived from the Rule~\probCCSReducRule, \ie $ S \labelledstep{\tau} \Delta_S $.
	To strengthen our induction hypothesis and to capture labels different from $ \tau $ we prove
	\begin{align*}
		\forall S, \Delta_S \logdot S \labelledstep{u} \Delta_S \text{ implies } \left( \exists T_i \logdot \OuterEncoding{S}\set{ \xLongrightarrow[p_i]{\hat{u}} T_i }_{i \in I} \wedge \OuterEncoding{\Delta_S} \equiv \sum_{i \in I}p_i T_i \right)
	\end{align*}
	where $ \xLongrightarrow[p_i]{\hat{\tau}} $ is $ \xlongrightarrow[p_{i, 1}]{\tau}\cdots\xlongrightarrow[p_{i, n}]{\tau} $, $ \xLongrightarrow[p_i]{\hat{x}} $ is $ \xlongrightarrow[p_{i, 1}]{\tau}\cdots\xlongrightarrow[p_{i, j}]{x\branchIn_i\langle \rangle}\cdots\xlongrightarrow[p_{i, n}]{\tau} $, $ \xLongrightarrow[p_i]{\hat{\out{x}}} $ is $ \xlongrightarrow[p_{i, 1}]{\tau}\cdots\xlongrightarrow[p_{i, j}]{\out{x}\branchIn_i\langle \rangle}\cdots\xlongrightarrow[p_{i, n}]{\tau} $, and in all three cases $ p_i = p_{i, 1} \cdot \ldots \cdot p_{i, n} $.
	We perform an induction over the derivation of $ S \labelledstep{u} \Delta_S $ using a case split over the rules in Figure~\ref{fig:ProbCCSSemantics}.
	\begin{description}
		\item[\probCCSProbChoiceRule:] We consider three subcases:
			\begin{description}
				\item[$ u = x $:] In this case $ S = \probCCSChoice{x}{i \in I}{p_i}{P_i} $ as well as $ \Delta_S = \sum_{i \in I} p_i P_i $.
					By Definition~\ref{def:encCCSPi}, then $ \EncPCCSPPi{S} = \probPiInEmpty{x}{\probPiRes{\encNameI}{{\paraBrack{\probPiSelectOutEmpty{\encNameI}{i \in I}{i}{\EncPCCSPPi{P_i}}}{\encNameI}}}} $ and we have $ \OuterEncoding{\Delta_S} = \sum_{i \in I} p_i \Context{}{}{\EncPCCSPPi{P_i}} $.
					$ \OuterEncoding{S} $ can emulate the step $ S \labelledstep{\tau} \Delta_S $ using the Rules \probPiResRule, \probPiParLRule, \probPiComRule, \probPiBranchRule, and \probPiSelectRule by:
					\begin{align*}
						\OuterEncoding{S}\set{ \xlongrightarrow[1]{x\branchIn_i\langle \rangle}\xlongrightarrow[p_i]{\tau} T_i}_{i \in I} \quad \text{ where } T_i = \Context{}{}{\probPiRes{\encNameI}{{\left( \EncPCCSPPi{P_i} \mid \nul \right)}}}
					\end{align*}
					The renaming policy $ \renamingPCCSPPi $ ensures that $ \encNameI \notin \freeNames{\EncPCCSPPi{P_i}} $ for all $ i \in I $.
					Then $ \OuterEncoding{\Delta_S} \equiv \sum_{i \in I} p_i T_i $.
				\item[$ u = \out{x} $:] In this case $ S = \probCCSChoice{\out{x}}{i \in I}{p_i}{P_i} $ and $ \Delta_S = \sum_{i \in I} p_i P_i $.
					By Definition~\ref{def:encCCSPi}, then $ \EncPCCSPPi{S} = \probPiSelectOutEmpty{x}{i \in I}{i}{\EncPCCSPPi{P_i}} $ and we have $ \OuterEncoding{\Delta_S} = \sum_{i \in I} p_i \Context{}{}{\EncPCCSPPi{P_i}} $.
					$ \OuterEncoding{S} $ can emulate the step $ S \labelledstep{\tau} \Delta_S $ using the Rules \probPiResRule, \probPiParLRule, and \probPiSelectRule by
					\begin{align*}
						\OuterEncoding{S}\set{ \xlongrightarrow[p_i]{\out{x}\branchIn_i\langle \rangle} T_i}_{i \in I} \quad \text{ where } T_i = \Context{}{}{\EncPCCSPPi{P_i}}
					\end{align*}
					and where the Rules \probPiResRule and \probPiParLRule are necessary to do steps in the inner part of the encoding.
					Then $ \OuterEncoding{\Delta_S} = \sum_{i \in I} p_i T_i $ and thus $ \OuterEncoding{\Delta_S} \equiv \sum_{i \in I} p_i T_i $.
				\item[$ u = \tau $:] In this case $ S = \probCCSChoice{\tau}{i \in I}{p_i}{P_i} $ as well as $ \Delta_S = \sum_{i \in I} p_i P_i $.
					By Definition~\ref{def:encCCSPi}, then $ \EncPCCSPPi{S} = \probPiResBrack{\encNameTau}{\para{\probPiSelectOutEmpty{\encNameTau}{i \in I}{i}{\EncPCCSPPi{P_i}}}{\encNameTau}} $ and we have $ \OuterEncoding{\Delta_S} = \sum_{i \in I} p_i \Context{}{}{\EncPCCSPPi{P_i}} $.
					$ \OuterEncoding{S} $ can emulate the step $ S \labelledstep{\tau} \Delta_S $ using the Rules \probPiResRule, \probPiParLRule, \probPiComRule, \probPiBranchRule, and \probPiSelectRule by:
					\begin{align*}
						\OuterEncoding{S}\set{ \xlongrightarrow[p_i]{\tau} T_i}_{i \in I} \quad \text{ where } T_i = \Context{}{}{\probPiRes{\encNameTau}{{\left( \EncPCCSPPi{P_i} \mid \nul \right)}}}
					\end{align*}
					The renaming policy $ \renamingPCCSPPi $ ensures that $ \encNameTau \notin \freeNames{\EncPCCSPPi{P_i}} $ for all $ i \in I $.
					Then $ \OuterEncoding{\Delta_S} \equiv \sum_{i \in I} p_i T_i $.
			\end{description}
		\item[\probCCSParLRule:] In this case $ S = \para{P}{Q} $, $ P \labelledstep{u} \Delta_P = \sum_{i \in I} p_i P_i $, and $ \Delta_S = \Delta_P \mid \pointDis{Q} = \sum_{i \in I} p_i {\left( P_i \mid Q \right)} $.
			By Definition~\ref{def:encCCSPi}, then $ \EncPCCSPPi{S} = \EncPCCSPPi{P} \mid \EncPCCSPPi{Q} $ and $ \OuterEncoding{\Delta_S} = \sum_{i \in I} p_i \Context{}{}{\EncPCCSPPi{P_i} \mid \EncPCCSPPi{Q}} $.
			By the induction hypothesis, the step $ P \labelledstep{u} \Delta_P $ implies $ \OuterEncoding{P}\set{ \xLongrightarrow[p_i]{\hat{u}} T_{i, P} }_{i \in I} $ and $ \OuterEncoding{\Delta_P} \equiv \sum_{i \in I}p_i T_{i, P} $.
			$ \OuterEncoding{S} $ can emulate the step $ S \labelledstep{u} \Delta_S $ using the Rules \probPiResRule and \probPiParLRule to apply the steps in $ \OuterEncoding{P}\set{ \xLongrightarrow[p_i]{\hat{u}} T_{i, P} }_{i \in I} $ such that:
			\begin{align*}
				\OuterEncoding{S}\set{ \xLongrightarrow[p_i]{\hat{u}} T_i }_{i \in I} \quad \text{ where } T_i = \Context{}{}{T_{i, P}' \mid \EncPCCSPPi{Q}} \text{ and } T_{i, P} = \Context{}{}{T_{i, P}'}
			\end{align*}
			Because $ \OuterEncoding{\Delta_P} \equiv \sum_{i \in I}p_i T_{i, P} $ and $ \OuterEncoding{\Delta_P} = \sum_{i \in I}p_i \Context{}{}{\EncPCCSPPi{P_i}} $, then $ \OuterEncoding{\Delta_S} \equiv \sum_{i \in I} p_i T_i $.
		\item[\probCCSParRRule:] In this case $ S = \para{P}{Q} $, $ Q \labelledstep{u} \Delta_Q = \sum_{i \in I} p_i Q_i $, and $ \Delta_S = \pointDis{P} \mid \Delta_Q = \sum_{i \in I} p_i {\left( P \mid Q_i \right)} $.
			By Definition~\ref{def:encCCSPi}, then $ \EncPCCSPPi{S} = \EncPCCSPPi{P} \mid \EncPCCSPPi{Q} $ and $ \OuterEncoding{\Delta_S} = \sum_{i \in I} p_i \Context{}{}{\EncPCCSPPi{P} \mid \EncPCCSPPi{Q_i}} $.
			By the induction hypothesis, the step $ Q \labelledstep{u} \Delta_Q $ implies $ \OuterEncoding{Q}\set{ \xLongrightarrow[p_i]{\hat{u}} T_{i, Q} }_{i \in I} $ and $ \OuterEncoding{\Delta_Q} \equiv \sum_{i \in I}p_i T_{i, Q} $.
			$ \OuterEncoding{S} $ can emulate the step $ S \labelledstep{u} \Delta_S $ using the Rules \probPiResRule, \probPiParLRule, and \probPiParRRule to apply the steps in $ \OuterEncoding{Q}\set{ \xLongrightarrow[p_i]{\hat{u}} T_{i, Q} }_{i \in I} $ such that:
			\begin{align*}
				\OuterEncoding{S}\set{ \xLongrightarrow[p_i]{\hat{u}} T_i }_{i \in I} \quad \text{ where } T_i = \Context{}{}{\EncPCCSPPi{P} \mid T_{i, Q}'} \text{ and } T_{i, Q} = \Context{}{}{T_{i, Q}'}
			\end{align*}
			Because $ \OuterEncoding{\Delta_Q} \equiv \sum_{i \in I}p_i T_{i, Q} $ and $ \OuterEncoding{\Delta_Q} = \sum_{i \in I}p_i \Context{}{}{\EncPCCSPPi{Q_i}} $, then $ \OuterEncoding{\Delta_S} \equiv \sum_{i \in I} p_i T_i $.
		\item[\probCCSComLRule:] Here $ S = P \mid Q $, $ P \labelledstep{a} \Delta_P = \sum_{i \in I} p_i P_i $, $ Q \labelledstep{\out{a}} \Delta_Q = \sum_{j \in J} p_j Q_j $, and we have $ \Delta_S = \Delta_P \mid \Delta_Q = \sum_{i \in I, j \in J} p_i \cdot p_j {\left( P_i \mid Q_j \right)} $.
			By Definition~\ref{def:encCCSPi}, then $ \EncPCCSPPi{S} = \EncPCCSPPi{P} \mid \EncPCCSPPi{Q} $ and we have $ \OuterEncoding{\Delta_S} = \OuterEncoding{\Delta_P} \mid \OuterEncoding{\Delta_Q} = \sum_{i \in I, j \in J} p_i \cdot p_j \Context{}{}{\EncPCCSPPi{P_i} \mid \EncPCCSPPi{Q_j}} $.
			By the induction hypothesis, the step $ P \labelledstep{a} \Delta_P $ implies $ \OuterEncoding{P}\set{ \xLongrightarrow[p_i]{\hat{a}} T_{i, P} }_{i \in I} $ and $ \OuterEncoding{\Delta_P} \equiv \sum_{i \in I}p_i T_{i, P} $.
			By the induction hypothesis, the step $ Q \labelledstep{\out{a}} \Delta_Q $ implies $ \OuterEncoding{Q}\set{ \xLongrightarrow[p_j]{\hat{\out{a}}} T_{j, Q} }_{j \in J} $ and $ \OuterEncoding{\Delta_Q} \equiv \sum_{j \in J}p_j T_{j, Q} $.
			$ \OuterEncoding{S} $ can emulate the step $ S \labelledstep{\tau} \Delta_S $ using the Rules \probPiResRule, \probPiParLRule, and \probPiParRRule to apply the steps in the sequences $ \OuterEncoding{P}\set{ \xLongrightarrow[p_i]{\hat{u}} T_{i, P} }_{i \in I} $ and $ \OuterEncoding{Q}\set{ \xLongrightarrow[p_j]{\hat{u}} T_{j, Q} }_{j \in J} $ such that:
			\begin{align*}
				\OuterEncoding{S}\set{ \xLongrightarrow[p_i \cdot p_j]{\hat{\tau}} T_{i, j} }_{i \in I, j \in J} \quad & \text{where } T_{i, j} = \Context{}{}{T_{i, P}' \mid T_{j, Q}'} \text{, }\\
				& T_{i, P} = \Context{}{}{T_{i, P}'} \text{, and } T_{j, Q} = \Context{}{}{T_{j, Q}'}
			\end{align*}
			Because $ \OuterEncoding{\Delta_P} \equiv \sum_{i \in I}p_i T_{i, P} $ and $ \OuterEncoding{\Delta_P} = \sum_{i \in I}p_i \Context{}{}{\EncPCCSPPi{P_i}} $ and $ \OuterEncoding{\Delta_Q} \equiv \sum_{j \in J}p_j T_{j, Q} $ and $ \OuterEncoding{\Delta_Q} = \sum_{j \in J}p_j \Context{}{}{\EncPCCSPPi{Q_j}} $, then $ \OuterEncoding{\Delta_S} \equiv \sum_{i \in I, j \in J} p_i \cdot p_j T_{i, j} $.
		\item[\probCCSComRRule:] This case is symmetric to the last case for \probCCSComLRule.
		\item[\probCCSResRule:] In this case $ S = \probCCSRes{P}{A} $, $ P \labelledstep{u} \Delta_P = \sum_{i \in I} p_i P_i $, $ u \notin A \cup \out{A} $, and $ \Delta_S = \probCCSRes{\Delta_P}{A} = \sum_{i \in I} p_i {\left( \probCCSRes{P_i}{A} \right)} $.
			By Definition~\ref{def:encCCSPi}, then $ \EncPCCSPPi{S} = \probPiRes{A}{\EncPCCSPPi{P}} $ and $ \OuterEncoding{\Delta_S} = \sum_{i \in I} p_i \Context{}{}{\probPiRes{A}{\EncPCCSPPi{P_i}}} $.
			By the induction hypothesis, then $ P \labelledstep{u} \Delta_P $ implies $ \OuterEncoding{P}\set{ \xLongrightarrow[p_i]{\hat{u}} T_{i, P} }_{i \in I} $ and $ \OuterEncoding{\Delta_P} \equiv \sum_{i \in I}p_i T_{i, P} $.
			$ \OuterEncoding{S} $ can emulate the step $ S \labelledstep{u} \Delta_S $ using the Rules \probPiResRule and \probPiParLRule to apply the steps in $ \OuterEncoding{P}\set{ \xLongrightarrow[p_i]{\hat{u}} T_{i, P} }_{i \in I} $ such that:
			\begin{align*}
				\OuterEncoding{S}\set{ \xLongrightarrow[p_i]{\hat{u}} T_i }_{i \in I} \quad \text{ where } T_i = \Context{}{}{\probPiRes{A}{T_{i, P}'}} \text{ and } T_{i, P} = \Context{}{}{T_{i, P}'}
			\end{align*}
			Because $ \OuterEncoding{\Delta_P} \equiv \sum_{i \in I}p_i T_{i, P} $ and $ \OuterEncoding{\Delta_P} = \sum_{i \in I}p_i \Context{}{}{\EncPCCSPPi{P_i}} $, then $ \OuterEncoding{\Delta_S} \equiv \sum_{i \in I} p_i T_i $.
		\item[\probCCSRelabelRule:] In this case $ S = P{\left[ f \right]} $, $ P \labelledstep{v} \Delta_P = \sum_{i \in I} p_i P_i $, $ f(v) = u $, and also $ \Delta_S = \Delta_P{\left[ f \right]} = \sum_{i \in I} p_i {\left( P_i{\left[ f \right]} \right)} $.
			Further, by Definition~\ref{def:encCCSPi}, it follows $ \EncPCCSPPi{S} = \EncPCCSPPi{P}\sub{\mathsf{ran}_f}{\mathsf{dom}_f} $ as well as $ \OuterEncoding{\Delta_S} = \sum_{i \in I} p_i \Context{}{}{\EncPCCSPPi{P_i}\sub{\mathsf{ran}_f}{\mathsf{dom}_f}} $.
			By induction hypothesis, then $ P \labelledstep{v} \Delta_P $ implies $ \OuterEncoding{P}\set{ \xLongrightarrow[p_i]{\hat{v}} T_{i, P} }_{i \in I} $ and $ \OuterEncoding{\Delta_P} \equiv \sum_{i \in I}p_i T_{i, P} $.
			$ \OuterEncoding{S} $ can emulate the step $ S \labelledstep{u} \Delta_S $ using the Rules \probPiResRule and \probPiParLRule to apply the steps in $ \OuterEncoding{P}\set{ \xLongrightarrow[p_i]{\hat{v}} T_{i, P} }_{i \in I} $ such that:
			\begin{align*}
				\OuterEncoding{S}\set{ \xLongrightarrow[p_i]{\hat{u}} T_i }_{i \in I} \quad \text{ where } T_i = \Context{}{}{T_{i, P}'\sub{\mathsf{ran}_f}{\mathsf{dom}_f}} \text{ and } T_{i, P} = \Context{}{}{T_{i, P}'}
			\end{align*}
			Because $ \OuterEncoding{\Delta_P} \equiv \sum_{i \in I}p_i T_{i, P} $ and $ \OuterEncoding{\Delta_P} = \sum_{i \in I}p_i \Context{}{}{\EncPCCSPPi{P_i}} $, then $ \OuterEncoding{\Delta_S} \equiv \sum_{i \in I} p_i T_i $.
		\item[\probCCSRecRule:] In this case $ S = C{\left\langle \tilde{y} \right\rangle} $, $ C \stackrel{\text{def}}{=} {\left( \tilde{x} \right)}P $, and $ \Delta_S = \pointDis{P\sub{\tilde{y}}{\tilde{x}}} $.
			By Definition~\ref{def:encCCSPi}, then we have $ \EncPCCSPPi{S} = \out{C}{\left( \tilde{y} \right)} $ and $ \OuterEncoding{\Delta_S} = \pointDis{\Context{}{}{\EncPCCSPPi{P\sub{\tilde{y}}{\tilde{x}}}}} $.
			$ \OuterEncoding{S} $ can emulate the step $ S \labelledstep{\tau} \Delta_S $ using the Rules \probPiResRule, \probPiParLRule, \probPiRepRule, \probPiOutRule, and \probPiComRule to reduce the replicated input in the outer encoding such that:
			\begin{align*}
				\OuterEncoding{S}\set{ \xlongrightarrow[1]{\tau} T } \quad \text{ where } T = \Context{}{}{\nul \mid \EncPCCSPPi{P}\sub{\RenamingPCCSPPi{\tilde{y}}}{\RenamingPCCSPPi{\tilde{x}}}}
			\end{align*}
			By Lemma~\ref{lem:nameInvariance}, then $ \OuterEncoding{\Delta_S} \equiv \pointDis{T} $.
	\end{description}
	For $ u = \tau $ we obtain from the above induction that $ S \step \Delta_S $ implies the existence of some $ \Delta_T = \sum_{i \in I} p_i T_i $ such that $ \OuterEncoding{S} \transstep \Delta_T $ and $ \OuterEncoding{\Delta_S} \equiv \Delta_T $, where $ \OuterEncoding{S} \transstep \Delta_T $ is a non-empty and finite sequence of steps.
	The proof of this lemma then is by induction on the number of steps in the source term sequence $ S \transstep \Delta_S $.
\end{proof}

For soundness, we have to prove that the encoding does not introduce any new behaviour.
Therefore, we show that every sequence of steps on the target belongs to a matching sequence of steps on the source.

\begin{lemma}[Weak \poc, Weak Soundness, $ \outerEncoding $/$ \encPCCSPPi $]
	\label{lem:soundness}
	\begin{align*}
		& \forall S, \Delta_T \logdot \OuterEncoding{S} \transstep \Delta_T \text{ implies }\\
		& \left( \exists \Delta_S', \Delta_T' \logdot S \transstep \Delta_S' \wedge \Delta_T \transstep \Delta_T' \wedge \OuterEncoding{\Delta_S'} \equiv \Delta_T' \right)
	\end{align*}
\end{lemma}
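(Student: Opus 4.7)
The plan is to proceed by induction on the length $ n $ of the emulating sequence $ \OuterEncoding{S} \transstep \Delta_T $. For $ n = 0 $ we take $ \Delta_S' = \pointDis{S} $ and $ \Delta_T' = \Delta_T = \pointDis{\OuterEncoding{S}} $; then $ \OuterEncoding{\Delta_S'} \equiv \Delta_T' $ follows from Lemma~\ref{lem:reflexivityRelationDistiribution}. For the inductive step we consider the last step $ \Delta \step \Delta_T $ with $ \OuterEncoding{S} \transstep \Delta $. I would strengthen the induction hypothesis so that it also carries a shape invariant on $ \Delta $: every process in its support is structurally congruent either to $ \OuterEncoding{S_i'} $ for some $ S_i' $ reachable from $ S $, or to such an encoding interrupted after an $ \firststep $-step where a unique matching $ \secondstep $-step is still pending.

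The crux is then a case analysis on $ \Delta \step \Delta_T $ according to whether the individual steps combined by Definition~\ref{def:stepDistributions} are $ \taustep $-, $ \firststep $-, $ \secondstep $-, or $ \repstep $-steps. A $ \taustep $-step reducing an instance of $ \encNameTau $ and a $ \repstep $-step reducing a replicated process constant each correspond to exactly one source step, and I would mirror the argument of Lemma~\ref{lem:completeness} in reverse by replaying the appropriate source rules from Figure~\ref{fig:ProbCCSSemantics}. An $ \firststep $-step on its own only commits to a pending $ \secondstep $-step; the invariant is updated to record this pending state but no source step is produced yet. An $ \secondstep $-step can only reduce an instance of the reserved name $ \encNameI $, whose restriction is introduced by the first clause of $ \EncPCCSPPi $ with a scope containing exactly one matching probabilistic selecting output, so it must be the unique completion of a previously introduced $ \firststep $-step; together they emulate one source communication step derivable via \probCCSComLRule.

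Finally, to build $ \Delta_T \transstep \Delta_T' $ I perform, in a single distribution step in the sense of Definition~\ref{def:stepDistributions}, exactly those $ \secondstep $-steps that close the pending $ \firststep $-steps recorded in the support of $ \Delta_T $; branches already in full-encoding shape are left untouched by exploiting clause (b) of that definition (take $ \Theta_i = \pointDis{P_i} $). The resulting $ \Delta_T' $ is then structurally congruent to $ \OuterEncoding{\Delta_S'} $ for the source distribution $ \Delta_S' $ obtained by concatenating all matched source steps. The main obstacle I anticipate is the bookkeeping needed to justify the shape invariant: arguing that the renaming policy $ \renamingPCCSPPi $ together with the restrictions on $ \encNameI $, $ \encNameTau $, and the process constants $ C_1, \ldots, C_n $ really prevent any cross-talk between distinct emulations, so that the probabilities multiplied along an $ \firststep $--$ \secondstep $ pair match exactly the probabilities of the corresponding source choice and no spurious interleavings distort the correspondence.
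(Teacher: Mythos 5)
Your proposal is correct and follows essentially the same route as the paper's proof: induction on the length of $ \OuterEncoding{S} \transstep \Delta_T $, a case analysis of the final step into \taustep-, \firststep-, \secondstep-, and \repstep-steps, and completion of the pending \secondstep-steps (enabled uniquely by the restriction on $ \encNameI $) in the suffix $ \Delta_T \transstep \Delta_T' $. The only cosmetic difference is that you carry an explicit shape invariant on the reached distribution, whereas the paper instead strengthens the induction hypothesis to require that the completing suffix consists only of \secondstep-steps and then commutes the newly added step past those postponed \secondstep-steps; the two devices encode the same information.
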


\begin{proof}
	We strengthen the proof goal for the induction, by assuming that the sequence $ \Delta_T \transstep \Delta_T' $ contains only \secondstep-steps.
	The proof is by induction on the number of steps in $ \OuterEncoding{S} \transstep \Delta_T $.
	The base case for zero steps, \ie $ \Delta_T = \pointDis{\OuterEncoding{S}} $, holds trivially by choosing $ \Delta_S = \pointDis{S} $ and $ \Delta_T' = \Delta_T $ such that $ \OuterEncoding{\Delta_S} = \Delta_T' $.
	For the induction step, assume $ \OuterEncoding{S} \transstep \Delta_T^* \step \Delta_T $.
	By the induction hypothesis, there are some $ \Delta_S^{**} $ and $ \Delta_T^{**} $ such that $ S \transstep \Delta_S^{**} $, $ \Delta_T^* \transstep \Delta_T^{**} $, and $ \OuterEncoding{\Delta_S^{**}} \equiv \Delta_T^{**} $, where the sequence $ \Delta_T^* \transstep \Delta_T^{**} $ contains only \secondstep-steps.
	Note that, by Definition~\ref{def:encCCSPi} and because of the renaming policy $ \renamingPCCSPPi $, the restriction of $ \encNameI $ ensures that no other step on the target can be in conflict with an \secondstep-step.
	Because of that, we can combine the steps in $ \Delta_T^* \transstep \Delta_T^{**} $ and $ \Delta_T^* \step \Delta_T $ to the sequence $ \Delta_T^* \step \Delta_T \transstep \Delta_T^{***} $, where $ \Delta_T \transstep \Delta_T^{***} $ is the result of removing the step $ \Delta_T^* \step \Delta_T $ from $ \Delta_T^* \transstep \Delta_T^{**} $ if it is contained in this sequence and then reordering the steps such that the remaining steps in $ \Delta_T^* \transstep \Delta_T^{**} $ are applied after the step $ \Delta_T^* \step \Delta_T $.
	We have to proof that there are some $ \Delta_S' $ and $ \Delta_T' $ such that $ S \transstep \Delta_S' $, $ \Delta_T \transstep \Delta_T' $, and $ \OuterEncoding{\Delta_S'} \equiv \Delta_T' $, where the sequence $ \Delta_T \transstep \Delta_T' $ contains only \secondstep-steps.
	Therefore, we construct $ \Delta_S^{**} \transstep \Delta_S' $ and the sequence $ \Delta_T^{***} \transstep \Delta_T' $ containing only \secondstep-steps such that $ S \transstep \Delta_S^{**} \transstep \Delta_S' $, $ \Delta_T^* \step \Delta_T \transstep \Delta_T^{***} \transstep \Delta_T' $, and $ \OuterEncoding{\Delta_S'} \equiv \Delta_T' $.

	By Definition~\ref{def:stepDistributions}, $ \Delta_T^* = \sum_{i \in I} p_i T_i^* $, $ \Delta_T = \sum_{i \in I} p_i \cdot \Delta_{T, i} $, $ \sum_{i \in I} p_i = 1 $, and $ T_i^* \step \Delta_{T, i} $ or $ \Delta_{T, i} = \pointDis{T_i^*} $ for all $ i \in I $.
	We perform a case split on the nature of $ T_i^* \step \Delta_{T, i} $ for all $ i \in I $ with $ \Delta_{T, i} \neq \pointDis{T_i^*} $ to generate the initially empty sets $ \mathcal{S} $ and $ \mathcal{T} $ of source and target term steps.
	We use $ \mathcal{S} $ and $ \mathcal{T} $ to collect the steps that we need for the sequences $ \Delta_S^{**} \transstep \Delta_S' $ and $ \Delta_T^{***} \transstep \Delta_T' $.
	\begin{description}
		\item[$ \Delta_T^* \step \Delta_ T $ is an \firststep-step:] By Definition~\ref{def:firststep}, then $ \Delta_T^* \step \Delta_ T $ is a communication step on a translated source term name $ x $.
			To complete the emulation of the corresponding source term communication on $ x $, we need to perform the \secondstep-step that was enabled by this \firststep-step.
			Accordingly, we add the respective source term step on $ x $ to $ \mathcal{S} $ and the \secondstep-step that was enabled by $ \Delta_T^* \step \Delta_ T $ to $ \mathcal{T} $.
		\item[$ \Delta_T^* \step \Delta_ T $ is an \secondstep-step:] By Definition~\ref{def:secondstep}, then $ T_i^* \step \Delta_{T, i} $ is a communication step on an instance of the reserved name $ \encNameI $.
			By Definition~\ref{def:encCCSPi} and Definition~\ref{def:firststep}, all in- and outputs on $ \encNameI $ are initially guarded in the encoding and can only be unguarded by an \firststep-step.
			Accordingly, $ \OuterEncoding{S} \transstep \Delta_T^* $ contains the corresponding \firststep-step that unguarded the input on $ \encNameI $ reduced in $ T_i^* \step \Delta_{T, i} $.
			Since $ \OuterEncoding{\Delta_S^{**}} \equiv \Delta_T^{**} $, then $ S \transstep \Delta_S^{**} $ already contains the corresponding communication step in the source, \ie in this case we do not have to add any steps to $ \mathcal{S} $ or $ \mathcal{T} $.
		\item[$ \Delta_T^* \step \Delta_ T $ is a \taustep-step:] By Definition~\ref{def:taustep}, then $ T_i^* \step \Delta_{T, i} $ is a communication step on an instance of the reserved name $ \encNameTau $.
			In this case we add the source term $ \tau $-step that is emulated by $ T_i^* \step \Delta_{T, i} $ to $ \mathcal{S} $ and leave $ \mathcal{T} $ unchanged.
		\item[$ \Delta_T^* \step \Delta_ T $ is a \repstep-step:] By Definition~\ref{def:repstep}, then $ \Delta_T^* \step \Delta_ T $ reduce a replicated input to emulate the unfolding of recursion.
			Again we add the source term step to unfold a recursion that is emulated by $ T_i^* \step \Delta_{T, i} $ to $ \mathcal{S} $ and leave $ \mathcal{T} $ unchanged.
		\item[Otherwise:] By Definition~\ref{def:encCCSPi}, all steps of an encoded source term are \firststep-steps, \secondstep-steps, \taustep-steps, or \repstep-steps.
	\end{description}
	We observe that in all cases, we need at most one step in the source and target.
	Since $ \Delta_T^* \transstep \Delta_T^{**} $ contains only \secondstep-steps, it can only complete the emulation of source terms steps and not start new such emulations.
	Because of $ \OuterEncoding{\Delta_S^{**}} \equiv \Delta_T^{**} $ and because the emulations of the source terms steps in $ \mathcal{S} $ were enabled in $ \Delta_T^{*} $, the source term steps in $ \mathcal{S} $ are enabled in $ \Delta_S^{**} $.
	If $ \mathcal{S} = \emptyset $ then we can choose $ \Delta_S' = \Delta_S^{**} $.
	Else $ \Delta_S^{**} \step \Delta_S' $ be the result of applying for each branch $ i \in I $ in the distribution $ \Delta_S^{**} $ with a step in $ \mathcal{S} $ the corresponding step.
	Similarly, if $ \mathcal{T} = \emptyset $ then $ \Delta_T' = \Delta_T^{***} $ and else let $ \Delta_T^{***} \step \Delta_T' $ apply the steps in $ \mathcal{T} $ on the respective branches in $ \Delta_T^{***} $.
	By Definition~\ref{def:encCCSPi}, $ \outerEncoding $ may produce $ \nul $ or $ \probPiRes{x}{\nul} $ as junk, \ie as leftovers from a completed emulation.
	However, all forms of junk produced by $ \outerEncoding $ are $ \nul $ and superfluous restrictions and are not observable modulo $ \equiv $.
	Since all initiated emulation attempts are completed and since $ \Delta_S' $ results from performing all source term steps emulated in the target, $ \OuterEncoding{\Delta_S'} \equiv \Delta_T' $.
\end{proof}

Divergence reflection ensures that the encoding cannot introduce new sources of divergence.

\begin{lemma}[Divergence Reflection, $ \outerEncoding $/$ \encPCCSPPi $]
	\label{lem:divergenceReflection}
	$ $\\
	For every $ S $, $ \OuterEncoding{S} \infiniteSteps $ implies $ S \infiniteSteps $.
\end{lemma}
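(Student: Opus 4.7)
The plan is to prove the contrapositive via weak soundness (Lemma \ref{lem:soundness}). Assume $ \OuterEncoding{S} \infiniteSteps $, so there is an infinite reduction sequence $ \OuterEncoding{S} \step \Delta_{T, 1} \step \Delta_{T, 2} \step \ldots $ in the target. I will use this sequence to extract an infinite source reduction sequence.

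First, I iteratively apply Lemma \ref{lem:soundness} to each prefix $ \OuterEncoding{S} \transstep \Delta_{T, n} $ to obtain source distributions $ \Delta_{S, n}' $ with $ S \transstep \Delta_{S, n}' $ and $ \OuterEncoding{\Delta_{S, n}'} \equiv \Delta_{T, n}' $ for some completion $ \Delta_{T, n} \transstep \Delta_{T, n}' $ consisting only of \secondstep-steps. Inspecting the inductive construction in the proof of Lemma \ref{lem:soundness}, the source sequence is extended by at most one step per target step, and the new source distribution $ \Delta_{S, n}' $ extends $ \Delta_{S, n-1}' $. This yields an increasing chain $ \pointDis{S} = \Delta_{S, 0}' \transstep \Delta_{S, 1}' \transstep \Delta_{S, 2}' \transstep \ldots $ of source distributions.

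It remains to show that the lengths $ m_n $ of the source prefixes $ S \step^{m_n} \Delta_{S, n}' $ are unbounded. Using Definition~\ref{def:encCCSPi} and the four classification Definitions~\ref{def:firststep}, \ref{def:secondstep}, \ref{def:taustep}, and \ref{def:repstep}, every target step is an \firststep-, \secondstep-, \taustep-, or \repstep-step. By the case analysis in the proof of Lemma \ref{lem:soundness}, each \taustep-, \repstep-, and \firststep-step contributes exactly one new source step to the chain, whereas an \secondstep-step contributes none. Moreover, since every in-/output on $ \encNameI $ is initially guarded in the encoding and only unguarded by a preceding \firststep-step on the corresponding source name, the number of \secondstep-steps in any prefix cannot exceed the number of \firststep-steps. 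Writing $ a_n, b_n, c_n, d_n $ for the counts of \taustep-, \repstep-, \firststep-, and \secondstep-steps in the first $ n $ target steps, we have $ d_n \leq c_n $, hence $ m_n = a_n + b_n + c_n \geq n - d_n \geq n/2 $. Thus $ m_n \to \infty $, so the chain above is an infinite source reduction, establishing $ S \infiniteSteps $.

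The main obstacle is the bookkeeping that each target step can be charged to at most one source step of the induced emulation and that \secondstep-steps never outnumber their preceding \firststep-steps. This is ensured by the restriction of the reserved name $ \encNameI $ in Definition~\ref{def:encCCSPi} and the renaming policy $ \renamingPCCSPPi $, which together guarantee that each \firststep-step enables exactly one matching \secondstep-step and excludes cross-emulation interference---the same structural property already exploited in the proof of Lemma \ref{lem:soundness}.
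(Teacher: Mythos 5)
Your proposal is correct and follows essentially the same route as the paper: invoke Lemma~\ref{lem:soundness} to extract a matching source sequence, observe that each \firststep-, \taustep-, and \repstep-step contributes exactly one source step while \secondstep-steps are bounded by the \firststep-steps that unguard them, and conclude that an infinite target sequence forces an infinite source sequence. Your explicit count $m_n \geq n - d_n \geq n/2$ merely makes rigorous what the paper states informally as ``these steps cannot introduce new loops.''
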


\begin{proof}
	By Lemma~\ref{lem:soundness}, for every sequence $ \OuterEncoding{S} \transstep \Delta_T $ there are some $ \Delta_S' $ and $ \Delta_T' $ such that $ S \transstep \Delta_S' $, $ \Delta_T \transstep \Delta_T' $, and $ \EncPCCSPPi{\Delta_S} \equiv \Delta_T $, where in the proof of Lemma~\ref{lem:soundness} we additionally show that the sequence $ \Delta_T \transstep \Delta_T' $ is a sequence of \secondstep-steps.
	Moreover, from the construction of $ S \transstep \Delta_S' $, this sequence contains exactly one source term step for every \firststep-step, every \taustep-step, and every \repstep-step in $ \OuterEncoding{S} \transstep \Delta_T $.
	The \secondstep-steps are initially guarded, can be unguarded only by an \firststep-step, and for each \firststep-step exactly one \secondstep-step is unguarded.
	We conclude that \firststep-steps, \secondstep-steps, \taustep-steps, and \repstep-steps cannot introduce new loops.
	Since there are no other kinds of steps, this ensures divergence reflection.
\end{proof}

Success sensitiveness ensures that the translation passes a test if and only if the source term passes this test.

\begin{lemma}[Success Sensitiveness, $ \outerEncoding $/$ \encPCCSPPi $]
	\label{lem:successSensitiveness}
	$ $\\
	For every $ S $, $ \ReachBarb{S}{\success} $ iff $ \ReachBarb{\OuterEncoding{S}}{\success} $.
\end{lemma}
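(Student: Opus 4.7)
The plan is to prove the two directions using the already-established completeness and weak soundness lemmata, together with the observation that the encoding neither introduces nor removes unguarded occurrences of $\success$. Precisely, an inspection of Figure~\ref{fig:enc} shows that $\success$ appears in $\EncPCCSPPi{S}$ exactly where it appears unguarded in $S$ (the only clause producing $\success$ is $\EncPCCSPPi{\success} = \success$, and all other clauses place their subterm translations under the same guarding context as in the source). Hence $\HasBarb{P}{\success}$ iff $\HasBarb{\OuterEncoding{P}}{\success}$, and the same equivalence lifts pointwise to distributions.

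For the direction $\ReachBarb{S}{\success} \implies \ReachBarb{\OuterEncoding{S}}{\success}$, I would pick some source sequence $S \transstep \Delta_S$ with $\HasBarb{\Delta_S}{\success}$. By Lemma~\ref{lem:completeness} there is a distribution $\Delta_T$ with $\OuterEncoding{S} \transstep \Delta_T$ and $\OuterEncoding{\Delta_S} \equiv \Delta_T$. The observation above gives $\HasBarb{\OuterEncoding{\Delta_S}}{\success}$, and then Lemma~\ref{lem:successSensitivenessTargetRelation} transports the barb across $\equiv$ to $\Delta_T$, yielding $\ReachBarb{\OuterEncoding{S}}{\success}$.

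For the direction $\ReachBarb{\OuterEncoding{S}}{\success} \implies \ReachBarb{S}{\success}$, I would take a sequence $\OuterEncoding{S} \transstep \Delta_T$ with $\HasBarb{\Delta_T}{\success}$. By Lemma~\ref{lem:soundness} there are $\Delta_S'$ and $\Delta_T'$ with $S \transstep \Delta_S'$, $\Delta_T \transstep \Delta_T'$, and $\OuterEncoding{\Delta_S'} \equiv \Delta_T'$. The key auxiliary observation is that no reduction rule of $\PPi$ in Figure~\ref{fig:ProbPiSemantics} consumes an unguarded occurrence of $\success$, so the $\success$-barb is preserved along $\Delta_T \transstep \Delta_T'$ branchwise; thus $\HasBarb{\Delta_T'}{\success}$. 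Using Lemma~\ref{lem:successSensitivenessTargetRelation} on $\OuterEncoding{\Delta_S'} \equiv \Delta_T'$ gives $\HasBarb{\OuterEncoding{\Delta_S'}}{\success}$, and the encoding observation then yields $\HasBarb{\Delta_S'}{\success}$, so $\ReachBarb{S}{\success}$.

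The only subtle step is justifying that the $\success$-barb in $\Delta_T$ survives the additional reductions $\Delta_T \transstep \Delta_T'$ produced by weak soundness; this is the main (but easy) obstacle, and is handled by noting that $\success$ is inert under the target semantics, so an unguarded $\success$ in some branch of $\Delta_T$ remains unguarded in the corresponding branch of any $\Delta_T'$ reachable from $\Delta_T$. All remaining work is routine bookkeeping: namely, the branchwise lifting of the encoding observation to distributions, which follows directly from Definition~\ref{def:relationsDistributions} and the pointwise definition of $\OuterEncoding{\Delta}$.
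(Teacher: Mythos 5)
Your proposal is correct and follows essentially the same route as the paper's proof: both directions rest on the observation that $\HasBarb{S^*}{\success}$ iff $\HasBarb{\OuterEncoding{S^*}}{\success}$ (lifted to distributions), with Lemma~\ref{lem:completeness} plus Lemma~\ref{lem:successSensitivenessTargetRelation} for the forward direction and Lemma~\ref{lem:soundness} plus preservation of the $\success$-barb along $\Delta_T \transstep \Delta_T'$ for the converse. The paper asserts that last preservation step without comment, whereas you justify it explicitly via the inertness of $\success$; otherwise the arguments coincide.
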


\begin{proof}
	By Definition~\ref{def:encCCSPi}, $ \HasBarb{S^*}{\success} $ iff $\HasBarb{\OuterEncoding{S^*}}{\success} $ for all $ S^* $.
	Then also $ \HasBarb{\Delta_S^*}{\success} $ iff $\HasBarb{\OuterEncoding{\Delta_S^*}}{\success} $ for all distributions $ \Delta_S^* $.
	\begin{itemize}
		\item If $ \ReachBarb{S}{\success} $, then $ S \transstep \Delta_S $ and $ \HasBarb{\Delta_S}{\success} $.
			By Lemma~\ref{lem:completeness}, then $ \OuterEncoding{S} \transstep \Delta_T $ and $ \OuterEncoding{\Delta_S} \equiv \Delta_T $.
			By Definition~\ref{def:encCCSPi}, $ \HasBarb{\Delta_S}{\success} $ implies $ \HasBarb{\OuterEncoding{\Delta_S}}{\success} $.
			By Lemma~\ref{lem:successSensitivenessTargetRelation}, then $ \OuterEncoding{\Delta_S} \equiv \Delta_T $ implies $ \HasBarb{\Delta_T}{\success} $.
			Finally, $ \OuterEncoding{S} \transstep \Delta_T $ and $ \HasBarb{\Delta_T}{\success} $ imply $ \ReachBarb{\OuterEncoding{S}}{\success} $.
		\item If $ \ReachBarb{\OuterEncoding{S}}{\success} $, then $ \OuterEncoding{S} \transstep \Delta_T $ and $ \HasBarb{\Delta_T}{\success} $.
			By Lemma~\ref{lem:soundness}, then $ S \transstep \Delta_S' $, $ \Delta_T \transstep \Delta_T' $, and $ \OuterEncoding{\Delta_S'} \equiv \Delta_T' $.
			Because of $ \Delta_T \transstep \Delta_T' $, $ \HasBarb{\Delta_T}{\success} $ implies $ \HasBarb{\Delta_T'}{\success} $.
			By Lemma~\ref{lem:successSensitivenessTargetRelation}, $ \HasBarb{\Delta_T'}{\success} $ and $ \OuterEncoding{\Delta_S'} \equiv \Delta_T' $ imply $ \HasBarb{\OuterEncoding{\Delta_S'}}{\success} $.
			By Definition~\ref{def:encCCSPi}, then $ \HasBarb{\Delta_S'}{\success} $.
			Finally, $ S \transstep \Delta_S' $ and $ \HasBarb{\Delta_S'}{\success} $ imply $ \ReachBarb{S}{\success} $. \qed
	\end{itemize}
\end{proof}

\begin{theorem}
	\label{thm:propertiesEnc}
	The encoding $ \outerEncoding $ satisfies weak compositionality, name invariance, weak probabilistic operational correspondence \wrt $ \equiv $, divergence reflection, and success sensitiveness.
\end{theorem}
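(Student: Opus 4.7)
The plan is to assemble the theorem by citing the five individual lemmas that have already been established. Weak compositionality is exactly Lemma \ref{lem:weakCompositionality}, name invariance is Lemma \ref{lem:nameInvariance} (which in fact proves the stronger statement that $\alpha$-equivalence suffices even when $\sigma$ is not injective, since the encoding introduces no free names and respects the renaming policy $\renamingPCCSPPi$), divergence reflection is Lemma \ref{lem:divergenceReflection}, and success sensitiveness is Lemma \ref{lem:successSensitiveness}. So the first step is simply to invoke each of these by name.

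The only item that warrants a line of justification is weak probabilistic operational correspondence with respect to $\equiv$. Definition \ref{def:weakPOC} requires that the encoded source distribution and the target distribution be related by $\pointDis{\relationT}$, whereas Lemmas \ref{lem:completeness} and \ref{lem:soundness} conclude $\OuterEncoding{\Delta_S} \equiv \Delta_T$ using the lifting of structural congruence to distributions defined in the text preceding Figure \ref{fig:ProbPiSemantics}. I would therefore point out that this lifting coincides with $\pointDis{\equiv}$ in the sense of Definition \ref{def:relationsDistributions}, which is exactly what the definition of weak \poc demands. With this identification, the completeness requirement of Definition \ref{def:weakPOC} is Lemma \ref{lem:completeness} verbatim, and the weakly probabilistic sound requirement is Lemma \ref{lem:soundness} verbatim.

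There is no real obstacle at this point, since all the heavy lifting has been done in the preceding lemmas; the only subtlety is checking that the distribution-lifted $\equiv$ used in those lemmas really matches the $\pointDis{\relationT}$ construction needed by Definition \ref{def:weakPOC}, and that $\equiv$ is an admissible choice for $\relationT$ (which is supported by Lemma \ref{lem:successSensitivenessTargetRelation}, used in the proof of Lemma \ref{lem:successSensitiveness}). Once these observations are made, the theorem follows by conjoining the five lemmas.
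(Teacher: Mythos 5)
Your proposal matches the paper's proof: both assemble the theorem by invoking Lemmata~\ref{lem:weakCompositionality}, \ref{lem:nameInvariance}, \ref{lem:completeness}, \ref{lem:soundness}, \ref{lem:divergenceReflection}, and \ref{lem:successSensitiveness}, with Lemma~\ref{lem:successSensitivenessTargetRelation} justifying that $\equiv$ is an admissible (success sensitive) choice for $\relationT$. Your extra remark that the distribution-lifted $\equiv$ coincides with $\pointDis{\equiv}$ in the sense of Definition~\ref{def:relationsDistributions} is exactly the identification the paper itself makes in the text following the definition of structural congruence, so the argument is correct and essentially identical.
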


\begin{proof}[Proof of Theorem~\ref{thm:propertiesEnc}]
	The proof is by the Lemmata~\ref{lem:weakCompositionality}, \ref{lem:nameInvariance}, \ref{lem:completeness}, \ref{lem:soundness}, \ref{lem:divergenceReflection}, and \ref{lem:successSensitiveness}, where Lemma~\ref{lem:successSensitivenessTargetRelation} proves that $ \equiv $ is success sensitive.
\end{proof}


\section{Weak Probabilistic Operational Correspondence}
\label{app:weakPOC}

\begin{definition}[Correspondence Simulation, \cite{PetersGlabbeek15}]
	\label{def:correspondenceSimulation}
	A relation $ \relation $ is a \emph{(weak reduction) correspondence simulation} if for each $ (P, Q) \in \relation $:
	\begin{itemize}
		\item $ P \transstep P' $ implies $ \exists Q' \logdot Q \transstep Q' \wedge (P', Q') \in \relation $
		\item $ Q \transstep Q' $ implies $ \exists P'', Q'' \logdot P \transstep P'' \wedge Q' \transstep Q'' \wedge (P'', Q'') \in \relation $
	\end{itemize}
	Two terms are \emph{correspondence similar} if a correspondence simulation relates them.
\end{definition}

A probabilistic version of correspondence simulation for a relation between probability distributions can be derived straightforwardly from Definition~\ref{def:correspondenceSimulation}.

\begin{definition}[Probabilistic Correspondence Simulation]
	\label{def:probabilisticCorrespondenceSimulation}
	A relation $ \relation $ is a \emph{(weak) probabilistic (reduction) correspondence simulation} if for each $ (P, Q) \in \relation $:
	\begin{itemize}
		\item $ P \transstep \Delta $ implies $ \exists \Theta \logdot Q \transstep \Theta \wedge (\Delta, \Theta) \in \pointDis{\relation} $
		\item $ Q \transstep \Theta $ implies $ \exists \Delta', \Theta' \logdot P \transstep \Delta' \wedge \Theta \transstep \Theta' \wedge (\Delta', \Theta') \in \pointDis{\relation} $
	\end{itemize}
	Two terms are \emph{probabilistic correspondence similar} if a probabilistic correspondence simulation relates them.
\end{definition}

\begin{definition}[Probabilistic Correspondence Simulation on Distributions]
	\label{def:probabilisticCorrespondenceSimulationDistributions}
	A relation $ \relation $ on distributions is a \emph{(weak) probabilistic (reduction) correspondence simulation} if for each $ (\Delta, \Theta) \in \relation $:
	\begin{itemize}
		\item $ \Delta \transstep \Delta' $ implies $ \exists \Theta' \logdot \Theta \transstep \Theta' \wedge (\Delta', \Theta') \in \relation $
		\item $ \Theta \transstep \Theta' $ implies $ \exists \Delta'', \Theta'' \logdot \Delta \transstep \Delta'' \wedge \Theta' \transstep \Theta'' \wedge (\Delta'', \Theta'') \in \relation $
	\end{itemize}
	Two terms are \emph{probabilistic correspondence similar} if a probabilistic correspondence simulation relates them.
\end{definition}

If $ \relation $ is a preorder and a probabilistic correspondence simulation then so is $ \pointDis{\relation} $.

\begin{lemma}[Preservation of the Correspondence Property]
	\label{lem:preserveProbabilisticCorrespondenceSimulation}
	$ $\\
	If the preorder $ \relation $ is a probabilistic correspondence simulation then so is $ \pointDis{\relation} $.
\end{lemma}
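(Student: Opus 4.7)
The plan is to unfold $\pointDis{\relation}$ on a pair of distributions and use the process-level simulation conditions on each branch, then repackage the results as a convex combination back into $\pointDis{\relation}$. Concretely, assume $(\Delta,\Theta)\in\pointDis{\relation}$, so by Definition~\ref{def:relationsDistributions} there is a finite index set $I$, probabilities $\sum_{i\in I}p_i=1$, and processes with $\Delta=\sum_{i\in I}p_i\pointDis{P_i}$, $\Theta=\sum_{i\in I}p_i\pointDis{Q_i}$, and $(P_i,Q_i)\in\relation$ for all $i\in I$. I will verify both clauses of Definition~\ref{def:probabilisticCorrespondenceSimulationDistributions}.

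For the first clause, suppose $\Delta\transstep\Delta'$. Using Definition~\ref{def:stepDistributions} and its iteration to $\transstep$, this unfolds into a family $P_i\transstep\Delta_i$ (or $\Delta_i=\pointDis{P_i}$ if branch $i$ does nothing), with $\Delta'=\sum_{i\in I}p_i\Delta_i$. For each moving branch, apply the completeness-style clause of the process-level simulation to obtain $\Theta_i$ with $Q_i\transstep\Theta_i$ and $(\Delta_i,\Theta_i)\in\pointDis{\relation}$; for each non-moving branch, take $\Theta_i=\pointDis{Q_i}$, and note $(\pointDis{P_i},\pointDis{Q_i})\in\pointDis{\relation}$ directly from $(P_i,Q_i)\in\relation$ by Definition~\ref{def:relationsDistributions}. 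Reassembling via Definition~\ref{def:stepDistributions} yields $\Theta\transstep\Theta'=\sum_{i\in I}p_i\Theta_i$, so it remains to show $(\Delta',\Theta')\in\pointDis{\relation}$.

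For the second clause, suppose $\Theta\transstep\Theta'$, so $\Theta'=\sum_{i\in I}p_i\Theta_i$ with $Q_i\transstep\Theta_i$ or $\Theta_i=\pointDis{Q_i}$ for each $i$. For moving branches, apply the soundness-style clause of the process-level simulation to get $\Delta_i',\Theta_i'$ with $P_i\transstep\Delta_i'$, $\Theta_i\transstep\Theta_i'$, and $(\Delta_i',\Theta_i')\in\pointDis{\relation}$; for non-moving branches, take $\Delta_i'=\pointDis{P_i}$ and $\Theta_i'=\pointDis{Q_i}$, which are related by reflexivity of $\relation$ together with Definition~\ref{def:relationsDistributions} (equivalently, by Lemma~\ref{lem:reflexivityRelationDistiribution}). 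This gives $\Delta\transstep\Delta''=\sum_{i\in I}p_i\Delta_i'$ and $\Theta'\transstep\Theta''=\sum_{i\in I}p_i\Theta_i'$, and again we need $(\Delta'',\Theta'')\in\pointDis{\relation}$.

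The main obstacle, and the step I would treat most carefully, is the closure of $\pointDis{\relation}$ under convex combinations: given $(\Delta_i,\Theta_i)\in\pointDis{\relation}$ for each $i\in I$, I must check $(\sum_i p_i\Delta_i,\sum_i p_i\Theta_i)\in\pointDis{\relation}$. I would do this by expanding each $\Delta_i=\sum_{j\in J_i}q_{i,j}\pointDis{R_{i,j}}$ and $\Theta_i=\sum_{j\in J_i}q_{i,j}\pointDis{S_{i,j}}$ with $(R_{i,j},S_{i,j})\in\relation$, and then viewing the combined sum over the disjoint union of index sets with weights $p_i q_{i,j}$, which again matches Definition~\ref{def:relationsDistributions}. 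Transitivity of $\relation$ is not needed here; only reflexivity is, to cover idle branches when $\transstep$ takes zero steps on them.
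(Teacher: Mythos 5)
Your proof is correct and follows essentially the same route as the paper's: unfold $(\Delta,\Theta)\in\pointDis{\relation}$ into branches, apply the two process-level simulation clauses branch-wise (treating idle branches via $(P_i,Q_i)\in\relation$ directly), and reassemble the resulting per-branch distributions into a step of the other side. The only difference is that you explicitly verify the closure of $\pointDis{\relation}$ under convex combinations by flattening the nested sums over a disjoint union of index sets, a step the paper's proof leaves implicit; that is a welcome clarification rather than a different approach.
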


\begin{proof}[Proof of Lemma~\ref{lem:preserveProbabilisticCorrespondenceSimulation}]
	Assume a probabilistic correspondence simulation $ \relation $ that is a preorder and $ \left( \Delta, \Theta \right) \in \pointDis{\relation} $.
	Since $ \relation $ is a preorder and by the Lemmata~\ref{lem:reflexivityRelationDistiribution} and \ref{lem:transitvityRelationDistiribution}, $ \pointDis{\relation} $ is a preorder.
	By Definition~\ref{def:relationsDistributions}, then there is some index set $ I $ such that $ \Delta = \sum_{i \in I} p_i \pointDis{P_i} $, $ \Theta = \sum_{i \in I} p_i \pointDis{Q_i} $, $ \sum_{i \in I} p_i = 1 $, and for each $ i \in I $ we have $ \left( P_i, Q_i \right) \in \relation $.
	\begin{itemize}
		\item Assume $ \Delta \transstep \Delta' $.
			If $ \Delta' = \Delta $ then we can choose $ \Theta' = \Theta $ such that $ \Theta \transstep \Theta' $ and $ \left( \Delta', \Theta' \right) \in \pointDis{\relation} $.
			Else, by Definition~\ref{def:stepDistributions}, then $ \Delta' = \sum_{i \in I} p_i \Delta_i' $ and for some (at least one) $ i \in I $ a processes $ P_i $ in $ \Delta $ performed a sequence of at least one step $ P_i \transstep \Delta_i' $.
			For all other $ i $ we have $ \Delta_i' = \pointDis{P_i} $.
			Since $ \relation $ is a probabilistic correspondence simulation, $ \left( P_i, Q_i \right) \in \relation $ and $ P_i \step \Delta_i' $ imply $ Q_i \transstep \Theta_i' $ and $ \left( \Delta_i', \Theta_i' \right) \in \pointDis{\relation} $.
			For the $ i \in I $ without a step, we choose $ \Theta_i' = \pointDis{Q_i} $ such that $ Q_i \transstep \Theta_i' $ and $ \left( \Delta_i', \Theta_i' \right) \in \pointDis{\relation} $.
			Then $ \Theta \transstep \Theta' = \sum_{i \in I} p_i \Theta_i' $ and $ \left( \Delta_i', \Theta_i' \right) \in \pointDis{\relation} $.
		\item Assume $ \Theta \transstep \Theta' $.
			If $ \Theta' = \Theta $ then we can choose $ \Delta'' = \Delta $ and $ \Theta'' = \Theta $ such that $ \Delta \transstep \Delta'' $, $ \Theta' \transstep \Theta'' $, and $ \left( \Delta'', \Theta'' \right) \in \pointDis{\relation} $.
			Else, by Definition~\ref{def:stepDistributions}, then $ \Theta' = \sum_{i \in I} p_i \Theta_i' $ and for some (at least one) $ i \in I $ a processes $ Q_i $ in $ \Theta $ performed a sequence of at least one step $ Q_i \transstep \Theta_i' $.
			For all other $ i $ we have $ \Theta_i' = \pointDis{Q_i} $.
			Since $ \relation $ is a probabilistic correspondence simulation, $ \left( P_i, Q_i \right) \in \relation $ and $ Q_i \step \Theta_i' $ imply $ P_i \transstep \Delta_i'' $, $ \Theta_i' \transstep \Theta_i'' $, and $ \left( \Delta_i'', \Theta_i'' \right) \in \pointDis{\relation} $.
			For the $ i \in I $ without a step, we choose $ \Delta_i'' = \pointDis{P_i} $ and $ \Theta_i'' = \Theta_i $ such that $ P_i \transstep \Delta_i'' $, $ \Theta_i' \transstep \Theta_i'' $ and $ \left( \Delta_i'', \Theta_i'' \right) \in \pointDis{\relation} $.
			Then $ \Delta \transstep \Delta'' = \sum_{i \in I} p_i \Delta_i'' $, $ \Theta' \transstep \Theta'' = \sum_{i \in I} p_i \Theta_i'' $ and $ \left( \Delta_i'', \Theta_i'' \right) \in \pointDis{\relation} $.
	\end{itemize}
	We conclude that $ \pointDis{\relation} $ is a probabilistic correspondence simulation.
\end{proof}

\begin{theorem}[Weak \poc]
	\label{thm:weakPOC}
	$ \enc{\cdot} $ is weakly probabilistically operationally corresponding \wrt a preorder $ \relationT \subseteq \processes{\target}^2 $ that is a probabilistic correspondence simulation iff \\ $ \exists \relationEnc \logdot \left( \forall S \logdot \left( S, \enc{S} \right) \in \relationEnc \right) \wedge \relationT = \restrictRelation{\relationEnc}{\processes{\target}} \wedge \left( \forall S, T \logdot \left( S, T \right) \in \relationEnc \longrightarrow \left( \enc{S}, T \right) \in \relationT \right) \wedge \relationEnc $ is a preorder and a probabilistic correspondence simulation.
\end{theorem}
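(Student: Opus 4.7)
The plan is to prove the two directions of the biconditional separately. The forward direction ($\Rightarrow$) requires constructing an explicit $\relationEnc$ witnessing the four conditions, while the backward direction ($\Leftarrow$) derives weak \poc and the preorder/simulation properties of $\relationT$ from the assumed $\relationEnc$, essentially by instantiating the simulation property of $\relationEnc$ at the pair $(S, \enc{S})$ provided by condition~1.

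For the forward direction, I would take $\relationEnc$ to be the union of $\relationT$, the source-target pairs induced via $\enc{\cdot}$, and the identity on source terms:
\[
	\relationEnc = \relationT \cup \set{(S, T) \mid S \in \processes{\source} \wedge T \in \processes{\target} \wedge (\enc{S}, T) \in \relationT} \cup \set{(S, S) \mid S \in \processes{\source}}.
\]
Condition~2 holds because the added pairs contribute nothing to the target/target restriction; condition~1 follows from the middle set together with reflexivity of $\relationT$; condition~3 is immediate. Reflexivity of $\relationEnc$ is direct, transitivity is a short case split using transitivity of $\relationT$. The bulk of the work is showing that $\relationEnc$ is a probabilistic correspondence simulation. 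Target/target pairs are handled by assumption on $\relationT$. Source/source pairs are pure identities and can be matched by the same step, using $\pointDis{\relationEnc}$-reflexivity (Lemma~\ref{lem:reflexivityRelationDistiribution}). For a source/target pair $(S, T)$ with $(\enc{S}, T) \in \relationT$, a source step $S \transstep \Delta_S$ is matched by: (i) applying weak \poc completeness to obtain $\enc{S} \transstep \Delta_T$ with $(\enc{\Delta_S}, \Delta_T) \in \pointDis{\relationT}$, (ii) using the $\relationT$-simulation at $(\enc{S}, T)$ to get $T \transstep \Theta$ with $(\Delta_T, \Theta) \in \pointDis{\relationT}$, and (iii) using transitivity of $\pointDis{\relationT}$ (Lemma~\ref{lem:transitvityRelationDistiribution}) to obtain $(\enc{\Delta_S}, \Theta) \in \pointDis{\relationT}$, which is the same as $(\Delta_S, \Theta) \in \pointDis{\relationEnc}$. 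A step $T \transstep \Theta$ is matched dually: the second clause of the $\relationT$-simulation gives $\enc{S} \transstep \Delta_T''$, $\Theta \transstep \Theta''$ with $(\Delta_T'', \Theta'') \in \pointDis{\relationT}$; weak \poc soundness yields $S \transstep \Delta_S'$, $\Delta_T'' \transstep \Delta_T'''$ with $(\enc{\Delta_S'}, \Delta_T''') \in \pointDis{\relationT}$; then Lemma~\ref{lem:preserveProbabilisticCorrespondenceSimulation} applied to $(\Delta_T'', \Theta'') \in \pointDis{\relationT}$ with the step $\Delta_T'' \transstep \Delta_T'''$ furnishes $\Theta'' \transstep \Theta'''$ with $(\Delta_T''', \Theta''') \in \pointDis{\relationT}$; transitivity finally gives $(\enc{\Delta_S'}, \Theta''') \in \pointDis{\relationT}$, i.e., $(\Delta_S', \Theta''') \in \pointDis{\relationEnc}$, with $\Theta \transstep \Theta'' \transstep \Theta'''$.

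For the backward direction, assume an $\relationEnc$ with the four properties. Weak \poc completeness and soundness both follow from instantiating the $\relationEnc$-simulation at $(S, \enc{S})$ (condition~1): a source step $S \transstep \Delta_S$ is matched by $\enc{S} \transstep \Delta_T$ with $(\Delta_S, \Delta_T) \in \pointDis{\relationEnc}$, and decomposing this lifted pair pointwise and applying condition~3 to each component turns it into $(\enc{\Delta_S}, \Delta_T) \in \pointDis{\relationT}$; the soundness clause is entirely analogous using the second clause of the simulation. Reflexivity and transitivity of $\relationT$ are inherited from those of $\relationEnc$ via condition~2, and the simulation property of $\relationT$ on a target/target pair is obtained by lifting to $\relationEnc$ via condition~2, applying $\relationEnc$-simulation, and observing that target reductions stay in the target, so the resulting $\pointDis{\relationEnc}$-matches are actually $\pointDis{\relationT}$-matches after restriction.

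The main obstacle will be the forward direction's source/target case, specifically threading the four ingredients---weak \poc completeness/soundness, $\relationT$-simulation on processes, its lifting to distributions via Lemma~\ref{lem:preserveProbabilisticCorrespondenceSimulation}, and transitivity of $\pointDis{\relationT}$---through the soundness subcase so that the final pair relates $\Delta_S'$ and $\Theta'''$ after a correctly chained $\Theta \transstep \Theta'' \transstep \Theta'''$. Everything else reduces to bookkeeping and invocations of the preservation lemmas on $\pointDis{\cdot}$ that have already been established.
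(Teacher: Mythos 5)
Your proposal is correct and follows essentially the same route as the paper: the paper defines $\relationEnc$ as the reflexive--transitive closure of $\relationT \cup \set{(S,\enc{S})}$, which unfolds to exactly your explicit union, and the simulation argument for the source/target pairs (completeness plus $\relationT$-simulation plus transitivity in one direction; $\relationT$-simulation, weak soundness, Lemma~\ref{lem:preserveProbabilisticCorrespondenceSimulation} and transitivity in the other) as well as the backward direction via Lemmata~\ref{lem:liftCondition}, \ref{lem:relationTIsPreorder} and \ref{lem:relationTIsCorrespondence} coincide with the paper's proof. Your explicit treatment of the source/source identity pairs via reflexivity of $\pointDis{\relationEnc}$ is a small point the paper glosses over, but otherwise the two proofs are the same.
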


One of the condition in Theorem~\ref{thm:weakPOC} is that the relation $ \relationT $ on the target is obtained from the induced relation $ \relationEnc $ between source and target by reduction on target terms, \ie $ \relationT = \restrictRelation{\relationEnc}{\processes{\target}} $.
We prove that this property is preserved by the lift operation in Definition~\ref{def:relationsDistributions}.

\begin{lemma}
	\label{lem:preservationReduction}
	If $ \relationT = \restrictRelation{\relationEnc}{\processes{\target}} $ then $ \pointDis{\relationT} = \restrictRelation{\pointDis{\relationEnc}}{\processes{\target}} $.
\end{lemma}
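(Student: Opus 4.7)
The plan is to unfold both definitions and establish the two inclusions $\subseteq$ and $\supseteq$ separately, each by directly exhibiting the decomposition required by Definition~\ref{def:relationsDistributions}. The only subtle observation is that a distribution $\Delta \in \mathcal{D}(\processes{\target})$ must in particular have its support in $\processes{\target}$, so any decomposition $\Delta = \sum_{i \in I} p_i \pointDis{P_i}$ with $p_i > 0$ forces each $P_i$ to be a target term; this connects the ``outer'' restriction to $\processes{\target}$ with the ``inner'' restriction required by $\relationT = \restrictRelation{\relationEnc}{\processes{\target}}$.

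For the inclusion $\pointDis{\relationT} \subseteq \restrictRelation{\pointDis{\relationEnc}}{\processes{\target}}$, I would take an arbitrary $(\Delta, \Theta) \in \pointDis{\relationT}$. By Definition~\ref{def:relationsDistributions} there is a finite index set $I$ with $\Delta = \sum_{i \in I} p_i \pointDis{P_i}$, $\Theta = \sum_{i \in I} p_i \pointDis{Q_i}$, $\sum_{i \in I} p_i = 1$, and $(P_i, Q_i) \in \relationT$ for each $i \in I$. Since $\relationT \subseteq \processes{\target}^2$, the $P_i$ and $Q_i$ are target terms, so $\Delta, \Theta \in \mathcal{D}(\processes{\target})$. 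The hypothesis $\relationT = \restrictRelation{\relationEnc}{\processes{\target}}$ yields $(P_i, Q_i) \in \relationEnc$ for each $i$, so the same decomposition witnesses $(\Delta, \Theta) \in \pointDis{\relationEnc}$, and hence $(\Delta, \Theta) \in \restrictRelation{\pointDis{\relationEnc}}{\processes{\target}}$.

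For the reverse inclusion, I would take $(\Delta, \Theta) \in \restrictRelation{\pointDis{\relationEnc}}{\processes{\target}}$. Then $\Delta, \Theta \in \mathcal{D}(\processes{\target})$ and $(\Delta, \Theta) \in \pointDis{\relationEnc}$, so by Definition~\ref{def:relationsDistributions} there exist $I$, $(p_i)_{i \in I}$, $(P_i)_{i \in I}$, and $(Q_i)_{i \in I}$ with $\Delta = \sum_{i \in I} p_i \pointDis{P_i}$, $\Theta = \sum_{i \in I} p_i \pointDis{Q_i}$, and $(P_i, Q_i) \in \relationEnc$ for each $i$. Because $\Delta$ and $\Theta$ are supported in $\processes{\target}$, each $P_i$ and $Q_i$ with $p_i > 0$ lies in $\processes{\target}$; the remaining summands can be discarded or replaced without changing $\Delta$ or $\Theta$. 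By $\relationT = \restrictRelation{\relationEnc}{\processes{\target}}$ we obtain $(P_i, Q_i) \in \relationT$ for the remaining indices, and the same decomposition witnesses $(\Delta, \Theta) \in \pointDis{\relationT}$.

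The main obstacle, such as it is, is the bookkeeping around zero-probability summands: Definition~\ref{def:relationsDistributions} does not require the decomposition to be supported, so in principle the given decomposition might list a $P_i \notin \processes{\target}$ with $p_i = 0$. I would handle this explicitly by restricting to $I' = \{ i \in I \mid p_i > 0 \}$, noting that both $\Delta = \sum_{i \in I'} p_i \pointDis{P_i}$ and $\Theta = \sum_{i \in I'} p_i \pointDis{Q_i}$ still hold and that $\sum_{i \in I'} p_i = 1$, so the reduced decomposition is a valid witness whose indices all satisfy $P_i, Q_i \in \processes{\target}$.
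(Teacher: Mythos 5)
Your proposal is correct and follows essentially the same route as the paper: both simply unfold Definition~\ref{def:relationsDistributions} on each side and use $\relationT = \restrictRelation{\relationEnc}{\processes{\target}}$ to convert the componentwise conditions. Your explicit two-inclusion structure and the bookkeeping for zero-probability summands only make rigorous what the paper's more compressed definitional argument leaves implicit.
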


\begin{proof}
	Assume $ \relationT = \restrictRelation{\relationEnc}{\processes{\target}} $.
	By Definition~\ref{def:relationsDistributions}, $ \left( \Delta, \Theta \right) \in \pointDis{\relationEnc} $ if
	\begin{enumerate}[(i)]
		\item $ \Delta = \sum_{i \in I} p_i \pointDis{P_i} $, where $ I $ is a finite index set and $ \sum_{i \in I} p_i = 1 $,
		\item for each $ i \in I $ there is a process $ Q_i $ such that $ \left( P_i, Q_i \right) \in \relationEnc $, and
		\item $ \Theta = \sum_{i \in I} p_i \pointDis{Q_i} $.
	\end{enumerate}
	Then $ \left( \Delta, \Theta \right) \in \restrictRelation{\pointDis{\relationEnc}}{\processes{\target}} $ if
	\begin{enumerate}[(i)]
		\item $ \Delta = \sum_{i \in I} p_i \pointDis{P_i} $ in the target, where $ I $ is a finite index set and $ \sum_{i \in I} p_i = 1 $,
		\item for each $ i \in I $ there is a process $ Q_i $ such that $ \left( P_i, Q_i \right) \in \restrictRelation{\relationEnc}{\processes{\target}} $, and
		\item $ \Theta = \sum_{i \in I} p_i \pointDis{Q_i} $ in the target.
	\end{enumerate}
	Since $ \relationT = \restrictRelation{\relationEnc}{\processes{\target}} $ and by Definition~\ref{def:relationsDistributions}, then $ \pointDis{\relationT} = \restrictRelation{\pointDis{\relationEnc}}{\processes{\target}} $.
\end{proof}

The condition $ \relationT = \restrictRelation{\relationEnc}{\processes{\target}} $ in Theorem~\ref{thm:weakPOC} allows us to prove that $ \relationT $ is a preorder if $ \relationEnc $ is a preorder.

\begin{lemma}
	\label{lem:relationTIsPreorder}
	If $ \relationEnc $ is a preorder and $ \relationT = \restrictRelation{\relationEnc}{\processes{\target}} $ then $ \relationT $ is a preorder.
\end{lemma}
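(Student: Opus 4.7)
The plan is to unfold the definition of a preorder into reflexivity and transitivity and verify each property separately, using the fact that restriction $\restrictRelation{\cdot}{\processes{\target}}$ simply keeps those pairs whose components both lie in $\processes{\target}$.

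For reflexivity, I would pick an arbitrary $T \in \processes{\target}$. Since $\relationEnc$ is a preorder it is reflexive, so $\left( T, T \right) \in \relationEnc$. Because both components lie in $\processes{\target}$, the pair survives the restriction and hence $\left( T, T \right) \in \restrictRelation{\relationEnc}{\processes{\target}} = \relationT$, giving reflexivity of $\relationT$.

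For transitivity, I would take arbitrary $T_1, T_2, T_3 \in \processes{\target}$ with $\left( T_1, T_2 \right) \in \relationT$ and $\left( T_2, T_3 \right) \in \relationT$. Unfolding the restriction yields $\left( T_1, T_2 \right), \left( T_2, T_3 \right) \in \relationEnc$, and transitivity of $\relationEnc$ gives $\left( T_1, T_3 \right) \in \relationEnc$. Since $T_1, T_3 \in \processes{\target}$, this pair also lies in the restriction, so $\left( T_1, T_3 \right) \in \relationT$.

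There is no real obstacle here; the proof is essentially an unfolding argument, and the only thing to be careful about is that $\relationEnc$ is defined on a larger domain (covering both source and target terms) so that its reflexivity and transitivity, when restricted to target terms, directly transfer to $\relationT$. I would present both steps in two short bullet-free paragraphs and close with the conclusion that $\relationT$ is a preorder.
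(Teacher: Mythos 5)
Your proposal is correct and follows essentially the same argument as the paper: reflexivity of $\relationT$ is obtained by restricting the reflexive pairs $\left( T, T \right) \in \relationEnc$ to target terms, and transitivity by lifting the two given pairs into $\relationEnc$, applying its transitivity, and observing that the resulting pair survives the restriction. No gaps.
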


\begin{proof}
	Assume a preorder $ \relationEnc $ and $ \relationT = \restrictRelation{\relationEnc}{\processes{\target}} $.
	\begin{description}
		\item[Reflexivity:] Since $ \relationEnc $ is reflexive, $ \left( T, T \right) \in \relationEnc $ for all target terms $ T $.
			Because of $ \relationT = \restrictRelation{\relationEnc}{\processes{\target}} $, then $ \left( T, T \right) \in \relationT $.
		\item[Transitivity:] Assume $ \left( T_1, T_2 \right) \in \relationT $ and $ \left( T_2, T_3 \right) \in \relationT $ for some target terms $ T_1 $, $ T_2 $, and $ T_3 $.
			Because of $ \relationT = \restrictRelation{\relationEnc}{\processes{\target}} $, then $ \left( T_1, T_2 \right) \in \relationEnc $ and $ \left( T_2, T_3 \right) \in \relationEnc $.
			Since $ \relationEnc $ is transitive, then $ \left( T_1, T_3 \right) \in \relationEnc $.
			Because of $ \relationT = \restrictRelation{\relationEnc}{\processes{\target}} $, then $ \left( T_1, T_3 \right) \in \relationT $.
	\end{description}
	We conclude that $ \relationT $ is a preorder.
\end{proof}

The condition $ \forall S, T \logdot \left( S, T \right) \in \relationEnc \longrightarrow \left( \enc{S}, T \right) \in \relationT $ is necessary to ensure (with the remaining properties) that the encoding satisfies weak \poc in the 'only if'-case of Theorem~\ref{thm:weakPOC}.
Therefore we lift this property to distributions.

\begin{lemma}
	\label{lem:liftCondition}
	If $ \forall S, T \logdot \left( S, T \right) \in \relationEnc \longrightarrow \left( \enc{S}, T \right) \in \relationT $ then for all distributions $ \Delta_S $ on the source and all distributions $ \Delta_T $ on the target $ \left( \Delta_S, \Delta_T \right) \in \pointDis{\relationEnc} $ implies $ \left( \enc{\Delta_S}, \Delta_T \right) \in \pointDis{\relationT} $.
\end{lemma}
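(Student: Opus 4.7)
The plan is to unfold the definitions on both sides and read off the required witness. Concretely, assume the hypothesis $\forall S, T.\; (S, T) \in \relationEnc \longrightarrow (\enc{S}, T) \in \relationT$ and a pair $(\Delta_S, \Delta_T) \in \pointDis{\relationEnc}$. By Definition~\ref{def:relationsDistributions} applied to $\relationEnc$, there is a finite index set $I$ with $\sum_{i \in I} p_i = 1$, source processes $S_i$, and target processes $T_i$ such that $\Delta_S = \sum_{i \in I} p_i \pointDis{S_i}$, $(S_i, T_i) \in \relationEnc$ for each $i \in I$, and $\Delta_T = \sum_{i \in I} p_i \pointDis{T_i}$.

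Next I would use the pointwise lifting of the encoding to distributions, which yields $\enc{\Delta_S} = \sum_{i \in I} p_i \pointDis{\enc{S_i}}$. Then the hypothesis applied to each pair $(S_i, T_i) \in \relationEnc$ gives $(\enc{S_i}, T_i) \in \relationT$ for every $i \in I$. At this point the three clauses of Definition~\ref{def:relationsDistributions} for $(\enc{\Delta_S}, \Delta_T) \in \pointDis{\relationT}$ are immediate: clause~(a) holds because $\enc{\Delta_S} = \sum_{i \in I} p_i \pointDis{\enc{S_i}}$ with the same weights $p_i$; clause~(b) is witnessed by choosing the $T_i$ already supplied; and clause~(c) holds because $\Delta_T = \sum_{i \in I} p_i \pointDis{T_i}$.

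There is essentially no obstacle beyond bookkeeping: the result is a direct pointwise transfer along the hypothesis, mirroring the structure of Lemma~\ref{lem:preservationReduction}. The only subtlety worth flagging is the implicit convention that $\enc{\Delta_S} = \sum_{i \in I} p_i \pointDis{\enc{S_i}}$, which is how the encoding is extended from processes to distributions in the same manner as parallel composition and restriction were extended earlier in Section~\ref{app:calculi}; once this is made explicit, the proof reduces to rewriting the decomposition guaranteed by $\pointDis{\relationEnc}$ as the decomposition required by $\pointDis{\relationT}$.
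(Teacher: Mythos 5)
Your proof is correct and follows essentially the same route as the paper's: decompose $\left( \Delta_S, \Delta_T \right) \in \pointDis{\relationEnc}$ via Definition~\ref{def:relationsDistributions}, apply the hypothesis pointwise to each pair $\left( S_i, T_i \right)$, and reassemble via the same definition. The only difference is that you make explicit the convention $\enc{\Delta_S} = \sum_{i \in I} p_i \pointDis{\enc{S_i}}$, which the paper leaves implicit in its final step; that is a reasonable point to flag but not a divergence in approach.
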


\begin{proof}
	Assume $ \forall S, T \logdot \left( S, T \right) \in \relationEnc \longrightarrow \left( \enc{S}, T \right) \in \relationT $ and $ \left( \Delta_S, \Delta_T \right) \in \pointDis{\relationEnc} $.
	By Definition~\ref{def:relationsDistributions}, then there is some index set $ I $ such that $ \Delta_S = \sum_{i \in I} p_i \pointDis{S_i} $, $ \Delta_T = \sum_{i \in I} p_i \pointDis{T_i} $, $ \sum_{i \in I} p_i = 1 $, and for each $ i \in I $ we have $ \left( S_i, T_i \right) \in \relationEnc $.
	With $ \forall S, T \logdot \left( S, T \right) \in \relationEnc \longrightarrow \left( \enc{S}, T \right) \in \relationT $, then $ \left( \enc{S_i}, T_i \right) \in \relationT $ for each $ i \in I $.
	By Definition~\ref{def:relationsDistributions}, then $ \left( \enc{\Delta_S}, \Delta_T \right) \in \pointDis{\relationT} $.
\end{proof}

To prove Theorem~\ref{thm:weakPOC}, it is necessary to ensure that $ \relationT $ is a probabilistic correspondence simulation if $ \relationEnc $ is.

\begin{lemma}
	\label{lem:relationTIsCorrespondence}
	If $ \relationEnc $ is a probabilistic correspondence simulation and $ \relationT = \restrictRelation{\relationEnc}{\processes{\target}} $ then $ \relationT $ is a probabilistic correspondence simulation.
\end{lemma}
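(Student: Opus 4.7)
The plan is to unfold Definition~\ref{def:probabilisticCorrespondenceSimulation} applied to $\relationT$ and transfer each of its two clauses to the corresponding clause for $\relationEnc$, then translate back using the identity $\relationT = \restrictRelation{\relationEnc}{\processes{\target}}$ lifted to distributions by Lemma~\ref{lem:preservationReduction}. First I would fix an arbitrary pair $(T_1, T_2) \in \relationT$. By the definition of $\restrictRelation{\cdot}{\processes{\target}}$, both $T_1, T_2 \in \processes{\target}$ and $(T_1, T_2) \in \relationEnc$.

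For the first clause, assume $T_1 \transstep \Delta$. Since $\relationEnc$ is a probabilistic correspondence simulation, there exists $\Theta$ with $T_2 \transstep \Theta$ and $(\Delta, \Theta) \in \pointDis{\relationEnc}$. Because reductions of a target term stay within $\processes{\target}$, the distributions $\Delta$ and $\Theta$ are both supported on target processes. Hence Lemma~\ref{lem:preservationReduction}, which states $\pointDis{\relationT} = \restrictRelation{\pointDis{\relationEnc}}{\processes{\target}}$, yields $(\Delta, \Theta) \in \pointDis{\relationT}$, as required.

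For the second clause, the argument is symmetric. Assume $T_2 \transstep \Theta$. By the correspondence property of $\relationEnc$ there are $\Delta', \Theta'$ such that $T_1 \transstep \Delta'$, $\Theta \transstep \Theta'$, and $(\Delta', \Theta') \in \pointDis{\relationEnc}$. Again, all four distributions are over target terms (the carried processes evolve only within $\processes{\target}$), so Lemma~\ref{lem:preservationReduction} gives $(\Delta', \Theta') \in \pointDis{\relationT}$, completing the second simulation condition.

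Combining both clauses yields that $\relationT$ is a probabilistic correspondence simulation in the sense of Definition~\ref{def:probabilisticCorrespondenceSimulation}. There is no substantial obstacle here: the only content beyond routine unfolding is the tacit closure observation that a target process reduces only to a distribution supported on target processes, which is immediate from the target semantics, together with the already established lifting Lemma~\ref{lem:preservationReduction}.
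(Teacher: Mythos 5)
Your proof is correct and follows essentially the same route as the paper's: unfold the restriction to get $(T_1,T_2)\in\relationEnc$, invoke the two clauses of the correspondence simulation property of $\relationEnc$, and transfer the resulting pairs of distributions back to $\pointDis{\relationT}$ via Lemma~\ref{lem:preservationReduction}. The only difference is that you make explicit the (correct and harmless) observation that target terms reduce to distributions over target terms, which the paper leaves implicit.
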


\begin{proof}
	Assume that $ \relationEnc $ is a probabilistic correspondence simulation and $ \relationT = \restrictRelation{\relationEnc}{\processes{\target}} $.
	Moreover, assume $ \left( T_1, T_2 \right) \in \relationT $.
	Because of $ \relationT = \restrictRelation{\relationEnc}{\processes{\target}} $, then $ \left( T_1, T_2 \right) \in \relationEnc $.
	\begin{description}
		\item[Case (i):] Assume $ T_1 \transstep \Delta $.
			Since $ \relationEnc $ is a probabilistic correspondence simulation, then $ T_2 \transstep \Theta $ and $ \left( \Delta, \Theta \right) \in \pointDis{\relationEnc} $.
			With $ \relationT = \restrictRelation{\relationEnc}{\processes{\target}} $ and Lemma~\ref{lem:preservationReduction}, then $ \left( \Delta, \Theta \right) \in \pointDis{\relationT} $.
		\item[Case (ii):] Assume $ T_2 \transstep \Theta $.
			Since $ \relationEnc $ is a probabilistic correspondence simulation, then $ T_1 \transstep \Delta' $, $ \Theta \transstep \Theta' $, and $ \left( \Delta', \Theta' \right) \in \pointDis{\relationEnc} $.
			With $ \relationT = \restrictRelation{\relationEnc}{\processes{\target}} $ and Lemma~\ref{lem:preservationReduction}, then $ \left( \Delta', \Theta' \right) \in \pointDis{\relationT} $.
	\end{description}
	We conclude that $ \relationT $ is a probabilistic correspondence simulation.
\end{proof}

Next we show the main theorem of Section~\ref{app:weakPOC}.
Weak probabilistic operational correspondence induces a relation between source and target terms that is a probabilistic correspondence simulation.
More precisely, Theorem~\ref{thm:weakPOC} states:
\begin{quote}
	$ \enc{\cdot} $ is weakly probabilistically operationally corresponding \wrt a preorder $ \relationT \subseteq \processes{\target}^2 $ that is a probabilistic correspondence simulation iff $ \exists \relationEnc \logdot \left( \forall S \logdot \left( S, \enc{S} \right) \in \relationEnc \right) \wedge \relationT = \restrictRelation{\relationEnc}{\processes{\target}} $\\ $ \wedge \; \left( \forall S, T \logdot \left( S, T \right) \in \relationEnc \longrightarrow \left( \enc{S}, T \right) \in \relationT \right) \wedge \relationEnc $ is a preorder and a probabilistic correspondence simulation.
\end{quote}

\begin{proof}[Proof of Theorem~\ref{thm:weakPOC}]
	We prove the two directions of the result separately.
	\begin{description}
		\item[if ($ \longrightarrow $):] Assume that $ \enc{\cdot} $ is weakly probabilistically operationally corresponding \wrt a preorder $ \relationT \subseteq \processes{\target}^2 $ that is a probabilistic correspondence simulation.
			We construct $ \relationEnc $ from $ \relationT $ by adding $ \left( S, \enc{S} \right) $ for all source terms $ S $ and then building the reflexive and transitive closure.
			Accordingly, $ \forall S \logdot \left( S, \enc{S} \right) \in \relationEnc $ holds by construction.
			Since we did not add any pairs of only source terms, \ie no pairs of the form $ \left( S_1, S_2 \right) $ where both $ S_1 $ and $ S_2 $ are source terms, and since the only such pairs added by the reflexive and transitive closure are of the form $ \left( S, S \right) $, we have $ \relationT = \restrictRelation{\relationEnc}{\processes{\target}} $.
			Next we prove that $ \forall S, T \logdot \left( S, T \right) \in \relationEnc \longrightarrow \left( \enc{S}, T \right) \in \relationT $.
			Therefore, fix some $ S $ and $ T $ and assume $ \left( S, T \right) \in \relationEnc $.
			By the construction of $ \relationEnc $, $ \left( S, \enc{S} \right) \in \relationEnc $ and all pairs relating a source and a target term contain a source term and its literal translation or result from such a pair, $ \relationT $, and the transitive closure in the construction of $ \relationEnc $.
			Hence, $ T = \enc{S} $ or $ \left( \enc{S}, T \right) \in \relationT $.
			In the former case, $ \left( \enc{S}, T \right) \in \relationT $ follows from the reflexivity of $ \relationT $.
			The latter case directly provides $ \left( \enc{S}, T \right) \in \relationT $.
			That $ \relationEnc $ is a preorder directly follows from the construction of $ \relationEnc $, because we used the reflexive and transitive closure.
			As last condition we have to show that $ \relationEnc $ is a probabilistic correspondence simulation.
			By Definition~\ref{def:probabilisticCorrespondenceSimulation}, for all $ \left( P, Q \right) \in \relationEnc$ and all $ P \transstep \Delta $ we have to find $ Q \transstep \Theta $ such that $ \left( \Delta, \Theta \right) \in \relationEnc $ and for all $ Q \transstep \Theta $ we have to find $ P \transstep \Delta' $ and $ \Theta \transstep \Theta' $ such that $ \left( \Delta', \Theta' \right) \in \pointDis{\relationEnc} $.
			By the construction of $ \relationEnc $, $ P $ may be a source or target term, but $ Q $ is a target term.
			\begin{itemize}
				\item Assume $ P \transstep \Delta $.
					If $ P $ is a source term, then we have $ \enc{P} \transstep \Delta' $ with $ \left( \Delta, \Delta' \right) \in \pointDis{\relationT} $, because of completeness in weak \poc in Definition~\ref{def:weakPOC}.
					Since $ Q $ is a target term and by the construction of $ \relationEnc $, $ \left( P, Q \right) \in \relationEnc $ implies $ \left( \enc{P}, Q \right) \in \relationEnc $.
					Because of $ \relationT = \restrictRelation{\relationEnc}{\processes{\target}} $, then $ \left( \enc{P}, Q \right) \in \relationT $.
					Then $ Q \transstep \Theta $ and $ \left( \Delta', \Theta \right) \in \pointDis{\relationT} $, because $ \relationT $ is a probabilistic correspondence simulation.
					With the transitivity of $ \relationT $ and thus $ \pointDis{\relationT} $, $ \left( \Delta, \Delta' \right) \in \pointDis{\relationT} $ and $ \left( \Delta', \Theta \right) \in \pointDis{\relationT} $ imply $ \left( \Delta, \Theta \right) \in \pointDis{\relationT} $.
					Finally, $ \left( \Delta, \Theta \right) \in \pointDis{\relationEnc} $ follows from $ \left( \Delta, \Theta \right) \in \pointDis{\relationT} $, by the construction of $ \relationEnc $.
					\\
					Else, assume that $ P $ is a target term.
					Then $ \left( P, Q \right) \in \relationEnc $ implies $ \left( P, Q \right) \in \relationT $, because $ Q $ is a target term and $ \relationT = \restrictRelation{\relationEnc}{\processes{\target}} $.
					Since $ \relationT $ is a probabilistic correspondence simulation (see Definition~\ref{def:probabilisticCorrespondenceSimulation}), then $ P \transstep \Delta $ implies $ Q \transstep \Theta $ with $ \left( \Delta, \Theta \right) \in \pointDis{\relationT} $.
					Finally, $ \left( \Delta, \Theta \right) \in \pointDis{\relationEnc} $ follows from $ \left( \Delta, \Theta \right) \in \pointDis{\relationT} $, by the construction of $ \relationEnc $.
				\item Assume $ Q \transstep \Theta $.
					If $ P $ is a source term, then $ \left( \enc{P}, Q \right) \in \relationEnc $, by the construction of $ \relationEnc $.
					Because of $ \relationT = \restrictRelation{\relationEnc}{\processes{\target}} $, then $ \left( \enc{P}, Q \right) \in \relationT $.
					Since $ \relationT $ is a probabilistic correspondence simulation (see second case of Definition~\ref{def:probabilisticCorrespondenceSimulation}), then $ Q \transstep \Theta $ implies $ \enc{P} \transstep \Delta_T' $, $ \Theta \transstep \Theta' $, and $ \left( \Delta_T', \Theta' \right) \in \pointDis{\relationT} $.
					From $ \enc{P} \transstep \Delta_T' $ we obtain $ P \transstep \Delta'' $ and $ \Delta_T' \transstep \Delta_T'' $ with $ \left( \Delta'', \Delta_T'' \right) \in \pointDis{\relationT} $, because of soundness in weak \poc in Definition~\ref{def:weakPOC}.
					By Lemma~\ref{lem:preserveProbabilisticCorrespondenceSimulation}, $ \pointDis{\relationT} $ is a probabilistic correspondence simulation.
					By the first case of Definition~\ref{def:probabilisticCorrespondenceSimulationDistributions}, then $ \left( \Delta_T', \Theta' \right) \in \pointDis{\relationT} $ and $ \Delta_T' \transstep \Delta_T'' $ imply $ \Theta' \transstep \Theta'' $ and $ \left( \Delta_T'', \Theta'' \right) \in \pointDis{\relationT} $.
					By transitivity, we obtain $ \left( \Delta'', \Theta'' \right) \in \pointDis{\relationT} $ and $ \Theta \transstep \Theta'' $.
					Finally, $ \left( \Delta'', \Theta'' \right) \in \pointDis{\relationEnc} $ follows from $ \left( \Delta'', \Theta'' \right) \in \pointDis{\relationT} $, by the construction of $ \relationEnc $.
					\\
					Else, assume that $ P $ is a target term.
					Then $ \left( P, Q \right) \in \relationEnc $ implies $ \left( P, Q \right) \in \relationT $, because $ Q $ is a target term and $ \relationT = \restrictRelation{\relationEnc}{\processes{\target}} $.
					Since $ \relationT $ is a probabilistic correspondence simulation (see Definition~\ref{def:probabilisticCorrespondenceSimulation}), then $ Q \transstep \Theta $ implies $ P \transstep \Delta' $, $ \Theta \transstep \Theta' $, and $ \left( \Delta', \Theta' \right) \in \pointDis{\relationT} $.
					Finally, $ \left( \Delta', \Theta' \right) \in \pointDis{\relationEnc} $ follows from $ \left( \Delta', \Theta' \right) \in \pointDis{\relationT} $, by the construction of $ \relationEnc $.
			\end{itemize}
		\item[only if ($ \longleftarrow $):] We assume that there is a relation $ \relationEnc $ such that $ \forall S \logdot \left( S, \enc{S} \right) \in \relationEnc $, $ \relationT = \restrictRelation{\relationEnc}{\processes{\target}} $, $ \forall S, T \logdot \left( S, T \right) \in \relationEnc \longrightarrow \left( \enc{S}, T \right) \in \relationT $, and $ \relationEnc $ is a preorder and a probabilistic correspondence simulation.
			We start with weak probabilistic operational correspondence.
			\begin{description}
				\item[Completeness:] Assume $ S \transstep \Delta_S $.
					Since $ \left( S, \enc{S} \right) \in \relationEnc $ and because $ \relationEnc $ is a probabilistic correspondence simulation, then $ \enc{S} \transstep \Delta_T $ and $ \left( \Delta_S, \Delta_T \right) \in \pointDis{\relationEnc} $.
					By $ \forall S, T \logdot \left( S, T \right) \in \relationEnc \longrightarrow \left( \enc{S}, T \right) \in \relationT $ and Lemma~\ref{lem:liftCondition}, then $ \left( \Delta_S, \Delta_T \right) \in \pointDis{\relationEnc} $ implies $ \left( \enc{\Delta_S}, \Delta_T \right) \in \pointDis{\relationT} $.
				\item[Weak Soundness:] Assume $ \enc{S} \transstep \Delta_T $.
					Since $ \left( S, \enc{S} \right) \in \relationEnc $ and because $ \relationEnc $ is a probabilistic correspondence simulation, then $ S \transstep \Delta_S $, $ \Delta_T \transstep \Delta_T' $, and $ \left( \Delta_S, \Delta_T' \right) \in \pointDis{\relationEnc} $.
					By $ \forall S, T \logdot \left( S, T \right) \in \relationEnc \longrightarrow \left( \enc{S}, T \right) \in \relationT $ and Lemma~\ref{lem:liftCondition}, $ \left( \Delta_S, \Delta_T' \right) \in \pointDis{\relationEnc} $ implies $ \left( \enc{\Delta_S}, \Delta_T' \right) \in \pointDis{\relationT} $.
			\end{description}
			By Definition~\ref{def:weakPOC}, then $ \enc{\cdot} $ is weakly probabilistically operationally corresponding \wrt $ \relationT $.
			Finally, since $ \relationEnc $ is a preorder and a probabilistic correspondence simulation and because of the Lemmata~\ref{lem:relationTIsPreorder} and \ref{lem:relationTIsCorrespondence}, $ \relationT $ is a preorder and a probabilistic correspondence simulation. \qed
	\end{description}
\end{proof}


\section{(Strong) Probabilistic Operational Correspondence}
\label{app:strongPOC}

\begin{definition}[Probabilistic Bisimulation]
	\label{def:probabilisticBisimulation}
	A relation $ \relation $ is a \emph{probabilistic (reduction) bisimulation} if for each $ (P, Q) \in \relation $:
	\begin{itemize}
		\item $ P \transstep \Delta $ implies $ \exists \Theta \logdot Q \transstep \Theta \wedge (\Delta, \Theta) \in \pointDis{\relation} $
		\item $ Q \transstep \Theta $ implies $ \exists \Delta \logdot P \transstep \Delta \wedge (\Delta, \Theta) \in \pointDis{\relation} $
	\end{itemize}
	Two terms are \emph{probabilistic bisimilar} if a probabilistic bisimulation relates them.
\end{definition}

We can reuse several of the auxiliary results derived in Section~\ref{app:weakPOC} for the proof of Theorem~\ref{thm:POC}.
But we have to adapt Lemma~\ref{lem:relationTIsCorrespondence} to probabilistic bisimulation.

\begin{lemma}
	\label{lem:relationTIsBisimulation}
	If $ \relationEnc $ is a probabilistic bisimulation and $ \relationT = \restrictRelation{\relationEnc}{\processes{\target}} $ then $ \relationT $ is a probabilistic bisimulation.
\end{lemma}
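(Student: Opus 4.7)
The plan is to mimic the proof of Lemma~\ref{lem:relationTIsCorrespondence} almost verbatim, adjusting only the response side of the second simulation clause to match the stronger bisimulation condition (no extra catch-up step on $\Theta$ is allowed). Concretely, I would assume $\relationEnc$ is a probabilistic bisimulation with $\relationT = \restrictRelation{\relationEnc}{\processes{\target}}$ and pick an arbitrary pair $\left( T_1, T_2 \right) \in \relationT$. Since $\relationT$ is defined by restriction, $\left( T_1, T_2 \right) \in \relationEnc$, so I can feed any challenge coming from a reduction of $T_1$ or $T_2$ into the bisimulation hypothesis on $\relationEnc$.

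For the first clause, given $T_1 \transstep \Delta$, the bisimulation property of $\relationEnc$ produces $T_2 \transstep \Theta$ with $\left( \Delta, \Theta \right) \in \pointDis{\relationEnc}$. Because $T_1, T_2$ are both target terms and reductions stay in $\processes{\target}$, the distributions $\Delta$ and $\Theta$ are supported on target terms, so I can invoke Lemma~\ref{lem:preservationReduction} to conclude $\left( \Delta, \Theta \right) \in \pointDis{\relationT}$. The symmetric clause, starting from $T_2 \transstep \Theta$, is handled the same way: the bisimulation hypothesis on $\relationEnc$ directly yields $T_1 \transstep \Delta$ with $\left( \Delta, \Theta \right) \in \pointDis{\relationEnc}$, which Lemma~\ref{lem:preservationReduction} transports to $\pointDis{\relationT}$.

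The only difference compared to Lemma~\ref{lem:relationTIsCorrespondence} is this second clause: for correspondence simulation we also had to produce a catch-up $\Theta \transstep \Theta'$, and we needed the transitive closure / extra lifting to handle it. Here, because bisimulation requires a strict match with no additional step on the challenged side, the proof is actually simpler—no extra reduction needs to be derived, and the full argument boils down to a direct application of Definition~\ref{def:probabilisticBisimulation} for $\relationEnc$ together with Lemma~\ref{lem:preservationReduction}.

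I do not anticipate a real obstacle. The only subtlety worth spelling out is the check that $\pointDis{\relationT}$ is indeed the restriction of $\pointDis{\relationEnc}$ to target distributions, but this is exactly the content of Lemma~\ref{lem:preservationReduction}, so the argument reduces to citing it at the two appropriate places and closing with ``hence $\relationT$ satisfies both clauses of Definition~\ref{def:probabilisticBisimulation}''.
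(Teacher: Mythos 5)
Your proposal matches the paper's proof essentially verbatim: unfold the restriction to get $\left( T_1, T_2 \right) \in \relationEnc$, invoke the bisimulation clauses of $\relationEnc$ for each challenge, and transport the resulting pair from $\pointDis{\relationEnc}$ to $\pointDis{\relationT}$ via Lemma~\ref{lem:preservationReduction}. The argument is correct and no further comment is needed.
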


\begin{proof}
	Assume that $ \relationEnc $ is a probabilistic bisimulation and $ \relationT = \restrictRelation{\relationEnc}{\processes{\target}} $.
	Moreover, assume $ \left( T_1, T_2 \right) \in \relationT $.
	Because of $ \relationT = \restrictRelation{\relationEnc}{\processes{\target}} $, then $ \left( T_1, T_2 \right) \in \relationEnc $.
	\begin{description}
		\item[Case (i):] Assume $ T_1 \transstep \Delta $.
			Since $ \relationEnc $ is a probabilistic bisimulation, then $ T_2 \transstep \Theta $ and $ \left( \Delta, \Theta \right) \in \pointDis{\relationEnc} $.
			With $ \relationT = \restrictRelation{\relationEnc}{\processes{\target}} $ and Lemma~\ref{lem:preservationReduction}, then $ \left( \Delta, \Theta \right) \in \pointDis{\relationT} $.
		\item[Case (ii):] Assume $ T_2 \transstep \Theta $.
			Since $ \relationEnc $ is a probabilistic bisimulation, then $ T_1 \transstep \Delta $ and $ \left( \Delta, \Theta \right) \in \pointDis{\relationEnc} $.
			With $ \relationT = \restrictRelation{\relationEnc}{\processes{\target}} $ and Lemma~\ref{lem:preservationReduction}, then $ \left( \Delta, \Theta \right) \in \pointDis{\relationT} $.
	\end{description}
	We conclude that $ \relationT $ is a probabilistic bisimulation.
\end{proof}

\begin{definition}[Probabilistic Operational Correspondence]
	\label{def:POC}
	An encoding $\enc{\cdot} : \processes{\source} \to \processes{\target}$ is \emph{probabilistic operationally corresponding} (\poc) \wrt $\relationT \subseteq \processes{\target}^2 $ if it is:
	\begin{description}
		\item[\quad Probabilistic Complete:] $ $\\
		\hspace*{2em} $ \forall S, \Delta_S \logdot S \transstep \Delta_S \text{ implies } \left( \exists \Delta_T \logdot \enc{S} \transstep \Delta_T \wedge \left( \enc{\Delta_S}, \Delta_T \right) \in \pointDis{\relationT} \right) $
		\item[\quad Probabilistic Sound:] $ $\\
		\hspace*{2em} $ \forall S, \Delta_T \logdot \enc{S} \transstep \Delta_T \text{ implies } \left( \exists \Delta_S \logdot S \transstep \Delta_S \wedge \left( \enc{\Delta_S}, \Delta_T \right) \in \pointDis{\relationT} \right) $
	\end{description}
\end{definition}

\begin{theorem}[\poc]
	\label{thm:POC}
	$ \enc{\cdot} $ is probabilistically operationally corresponding \wrt a preorder $ \relationT \subseteq \processes{\target}^2 $ that is a probabilistic bisimulation iff\\
	$ \exists \relationEnc \logdot \left( \forall S \logdot \left( S, \enc{S} \right) \in \relationEnc \right) \wedge \relationT = \restrictRelation{\relationEnc}{\processes{\target}} \wedge \left( \forall S, T \logdot \left( S, T \right) \in \relationEnc \longrightarrow \left( \enc{S}, T \right) \in \relationT \right) \wedge \relationEnc $ is a preorder and a probabilistic bisimulation.
\end{theorem}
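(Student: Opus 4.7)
The plan is to mirror the proof of Theorem~\ref{thm:weakPOC}, replacing probabilistic correspondence simulation by probabilistic bisimulation and using Lemma~\ref{lem:relationTIsBisimulation} in place of Lemma~\ref{lem:relationTIsCorrespondence}. The remaining auxiliary results from Section~\ref{app:weakPOC}---in particular Lemma~\ref{lem:preservationReduction}, Lemma~\ref{lem:relationTIsPreorder}, Lemma~\ref{lem:liftCondition}, and Lemma~\ref{lem:preserveProbabilisticCorrespondenceSimulation} (whose bisimulation variant follows in the same way)---can be reused with no substantial change, because all of them concern the lifting $ \pointDis{\relation} $ and the restriction $ \restrictRelation{\cdot}{\processes{\target}} $ rather than the specific simulation shape.

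For the direction $ \longrightarrow $, I would assume \poc of $ \enc{\cdot} $ \wrt a preorder $ \relationT $ that is a probabilistic bisimulation and define $ \relationEnc $ as the reflexive and transitive closure of $ \relationT \cup \set{ (S, \enc{S}) \mid S \in \processes{\source}} $. The three conditions $ \forall S \logdot (S, \enc{S}) \in \relationEnc $, $ \relationT = \restrictRelation{\relationEnc}{\processes{\target}} $, and $ (S, T) \in \relationEnc \longrightarrow (\enc{S}, T) \in \relationT $ follow from this construction exactly as in Theorem~\ref{thm:weakPOC}; likewise $ \relationEnc $ is a preorder by construction. For the bisimulation property of $ \relationEnc $, I would fix $ (P, Q) \in \relationEnc $ and split on whether $ P $ is a source or target term. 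If $ P $ is a target term, then $ (P, Q) \in \relationT $ and both directions of the bisimulation game are discharged directly by $ \relationT $ being a probabilistic bisimulation. If $ P $ is a source term, then $ (\enc{P}, Q) \in \relationT $; from $ P \transstep \Delta $ I use probabilistic completeness (Definition~\ref{def:POC}) to obtain $ \enc{P} \transstep \Delta'$ with $ (\enc{\Delta}, \Delta') \in \pointDis{\relationT} $ and then use the bisimulation property of $ \relationT $ to match $ \Delta' $ from $ Q $, combining with transitivity of $ \pointDis{\relationT} $ (which holds by Lemma~\ref{lem:transitvityRelationDistiribution}). Symmetrically, from $ Q \transstep \Theta $ I use the bisimulation property of $ \relationT $ to obtain $ \enc{P} \transstep \Delta' $ with $ (\Delta', \Theta) \in \pointDis{\relationT} $ and then probabilistic soundness of Definition~\ref{def:POC} to lift this to a matching source step $ P \transstep \Delta $ with $ (\enc{\Delta}, \Delta') \in \pointDis{\relationT} $.

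The main obstacle, compared with the weak version, is exactly this last step: whereas Theorem~\ref{thm:weakPOC} uses weak soundness---which allows a further weak step $ \Delta_T \transstep \Delta_T' $ on the target---here probabilistic soundness gives $ \Delta' $ directly without any additional target-side weakening, so the need for the intermediate ``$ \Theta \transstep \Theta'' $'' detour (and the corresponding invocation of Lemma~\ref{lem:preserveProbabilisticCorrespondenceSimulation} to chase $ \Delta_T' $ back down via $ \pointDis{\relationT} $) disappears. I expect this simplification to propagate through the argument so that the matching pair can be assembled purely from completeness/soundness of \poc and transitivity of $ \pointDis{\relationT} $, avoiding the simulation chase entirely.

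For the direction $ \longleftarrow $, I would assume $ \relationEnc $ with the stated properties and verify the two clauses of Definition~\ref{def:POC}. Given $ S \transstep \Delta_S $, reflexivity of $ \relationEnc $ yields $ (S, \enc{S}) \in \relationEnc $ and the bisimulation property gives $ \enc{S} \transstep \Delta_T $ with $ (\Delta_S, \Delta_T) \in \pointDis{\relationEnc} $; the implication $ (S, T) \in \relationEnc \longrightarrow (\enc{S}, T) \in \relationT $, lifted to distributions via Lemma~\ref{lem:liftCondition}, then delivers $ (\enc{\Delta_S}, \Delta_T) \in \pointDis{\relationT} $. Given $ \enc{S} \transstep \Delta_T $, the second clause of the bisimulation on $ (S, \enc{S}) \in \relationEnc $ directly supplies $ S \transstep \Delta_S $ with $ (\Delta_S, \Delta_T) \in \pointDis{\relationEnc} $, to which Lemma~\ref{lem:liftCondition} is applied as above. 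The remaining claims that $ \relationT $ is a preorder and a probabilistic bisimulation follow from Lemma~\ref{lem:relationTIsPreorder} and Lemma~\ref{lem:relationTIsBisimulation}, completing the proof.
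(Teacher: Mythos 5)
Your proposal is correct and follows essentially the same route as the paper: the same construction of $\relationEnc$ as the reflexive--transitive closure of $\relationT \cup \set{(S,\enc{S})}$, the same case split on whether $P$ is a source or target term, and the same observation that probabilistic soundness (unlike weak soundness) removes the need for the extra target-side step and the invocation of Lemma~\ref{lem:preserveProbabilisticCorrespondenceSimulation}. The only cosmetic difference is that you write the completeness/soundness conclusions as $(\enc{\Delta},\Delta') \in \pointDis{\relationT}$, faithfully to Definition~\ref{def:POC}, where the paper elides the encoding brackets; both arguments close the gap to $(\Delta,\Theta) \in \pointDis{\relationEnc}$ by transitivity in the same way.
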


\begin{proof}[Proof of Theorem~\ref{thm:POC}]
	We prove the two directions of the result separately.
	\begin{description}
		\item[if ($ \longrightarrow $):] Assume that $ \enc{\cdot} $ is probabilistically operationally corresponding \wrt a preorder $ \relationT \subseteq \processes{\target}^2 $ that is a probabilistic bisimulation.
			We construct $ \relationEnc $ from $ \relationT $ by adding $ \left( S, \enc{S} \right) $ for all source terms $ S $ and then building the reflexive and transitive closure.
			Accordingly, $ \forall S \logdot \left( S, \enc{S} \right) \in \relationEnc $ holds by construction.
			Since we did not add any pairs of only source terms, \ie no pairs of the form $ \left( S_1, S_2 \right) $ where both $ S_1 $ and $ S_2 $ are source terms, and since the only such pairs added by the reflexive and transitive closure are of the form $ \left( S, S \right) $, we have $ \relationT = \restrictRelation{\relationEnc}{\processes{\target}} $.
			Next we prove that $ \forall S, T \logdot \left( S, T \right) \in \relationEnc \longrightarrow \left( \enc{S}, T \right) \in \relationT $.
			Therefore, fix some $ S $ and $ T $ and assume $ \left( S, T \right) \in \relationEnc $.
			By the construction of $ \relationEnc $, $ \left( S, \enc{S} \right) \in \relationEnc $ and all pairs relating a source and a target term contain a source term and its literal translation or result from such a pair, $ \relationT $, and the transitive closure in the construction of $ \relationEnc $.
			Hence, $ T = \enc{S} $ or $ \left( \enc{S}, T \right) \in \relationT $.
			In the former case, $ \left( \enc{S}, T \right) \in \relationT $ follows from the reflexivity of $ \relationT $.
			The latter case directly provides $ \left( \enc{S}, T \right) \in \relationT $.
			That $ \relationEnc $ is a preorder directly follows from the construction of $ \relationEnc $, because we used the reflexive and transitive closure.
			As last condition we have to show that $ \relationEnc $ is a probabilistic bisimulation.
			By Definition~\ref{def:probabilisticBisimulation}, for all $ \left( P, Q \right) \in \relationEnc $ and all $ P \transstep \Delta $ we have to find $ Q \transstep \Theta $ such that $ \left( \Delta, \Theta \right) \in \relationEnc $ and for all $ Q \transstep \Theta $ we have to find $ P \transstep \Delta $ such that $ \left( \Delta, \Theta \right) \in \pointDis{\relationEnc} $.
			By the construction of $ \relationEnc $, $ P $ may be a source or target term, but $ Q $ is a target term.
			\begin{itemize}
				\item Assume $ P \transstep \Delta $.
					If $ P $ is a source term, then we have $ \enc{P} \transstep \Delta' $ with $ \left( \Delta, \Delta' \right) \in \pointDis{\relationT} $, because of completeness in \poc in Definition~\ref{def:POC}.
					Since $ Q $ is a target term and by the construction of $ \relationEnc $, $ \left( P, Q \right) \in \relationEnc $ implies $ \left( \enc{P}, Q \right) \in \relationEnc $.
					Because of $ \relationT = \restrictRelation{\relationEnc}{\processes{\target}} $, then $ \left( \enc{P}, Q \right) \in \relationT $.
					Then $ Q \transstep \Theta $ and $ \left( \Delta', \Theta \right) \in \pointDis{\relationT} $, because $ \relationT $ is a probabilistic bisimulation.
					With the transitivity of $ \relationT $ and thus $ \pointDis{\relationT} $, $ \left( \Delta, \Delta' \right) \in \pointDis{\relationT} $ and $ \left( \Delta', \Theta \right) \in \pointDis{\relationT} $ imply $ \left( \Delta, \Theta \right) \in \pointDis{\relationT} $.
					Finally, $ \left( \Delta, \Theta \right) \in \pointDis{\relationEnc} $ follows from $ \left( \Delta, \Theta \right) \in \pointDis{\relationT} $, by the construction of $ \relationEnc $.
					\\
					Else, assume that $ P $ is a target term.
					Then $ \left( P, Q \right) \in \relationEnc $ implies $ \left( P, Q \right) \in \relationT $, because $ Q $ is a target term and $ \relationT = \restrictRelation{\relationEnc}{\processes{\target}} $.
					Since $ \relationT $ is a probabilistic bisimulation (see Definition~\ref{def:probabilisticBisimulation}), then $ P \transstep \Delta $ implies $ Q \transstep \Theta $ with $ \left( \Delta, \Theta \right) \in \pointDis{\relationT} $.
					Finally, $ \left( \Delta, \Theta \right) \in \pointDis{\relationEnc} $ follows from $ \left( \Delta, \Theta \right) \in \pointDis{\relationT} $, by the construction of $ \relationEnc $.
				\item Assume $ Q \transstep \Theta $.
					If $ P $ is a source term, then $ \left( \enc{P}, Q \right) \in \relationEnc $, by the construction of $ \relationEnc $.
					Because of $ \relationT = \restrictRelation{\relationEnc}{\processes{\target}} $, then $ \left( \enc{P}, Q \right) \in \relationT $.
					Since $ \relationT $ is a probabilistic bisimulation (see second case of Definition~\ref{def:probabilisticBisimulation}), then $ Q \transstep \Theta $ implies $ \enc{P} \transstep \Delta_T $ and $ \left( \Delta_T, \Theta \right) \in \pointDis{\relationT} $.
					From $ \enc{P} \transstep \Delta_T $ we obtain $ P \transstep \Delta $ with $ \left( \Delta, \Delta_T \right) \in \pointDis{\relationT} $, because of soundness in \poc in Definition~\ref{def:POC}.
					By transitivity, we obtain $ \left( \Delta, \Theta \right) \in \pointDis{\relationT} $.
					Finally, $ \left( \Delta, \Theta \right) \in \pointDis{\relationEnc} $ follows from $ \left( \Delta, \Theta \right) \in \pointDis{\relationT} $, by the construction of $ \relationEnc $.
					\\
					Else, assume that $ P $ is a target term.
					Then $ \left( P, Q \right) \in \relationEnc $ implies $ \left( P, Q \right) \in \relationT $, because $ Q $ is a target term and $ \relationT = \restrictRelation{\relationEnc}{\processes{\target}} $.
					Since $ \relationT $ is a probabilistic bisimulation (see Definition~\ref{def:probabilisticBisimulation}), then $ Q \transstep \Theta $ implies $ P \transstep \Delta $ and $ \left( \Delta, \Theta \right) \in \pointDis{\relationT} $.
					Finally, $ \left( \Delta, \Theta \right) \in \pointDis{\relationEnc} $ follows from $ \left( \Delta, \Theta \right) \in \pointDis{\relationT} $, by the construction of $ \relationEnc $.
			\end{itemize}
		\item[only if ($ \longleftarrow $):] We assume that there is a relation $ \relationEnc $ such that $ \forall S \logdot \left( S, \enc{S} \right) \in \relationEnc $, $ \relationT = \restrictRelation{\relationEnc}{\processes{\target}} $, $ \forall S, T \logdot \left( S, T \right) \in \relationEnc \longrightarrow \left( \enc{S}, T \right) \in \relationT $, and $ \relationEnc $ is a preorder and a probabilistic bisimulation.
			We start with probabilistic operational correspondence.
			\begin{description}
				\item[Completeness:] Assume $ S \transstep \Delta_S $.
					Since $ \left( S, \enc{S} \right) \in \relationEnc $ and because $ \relationEnc $ is a probabilistic bisimulation, then $ \enc{S} \transstep \Delta_T $ and $ \left( \Delta_S, \Delta_T \right) \in \pointDis{\relationEnc} $.
					By $ \forall S, T \logdot \left( S, T \right) \in \relationEnc \longrightarrow \left( \enc{S}, T \right) \in \relationT $ and Lemma~\ref{lem:liftCondition}, then $ \left( \Delta_S, \Delta_T \right) \in \pointDis{\relationEnc} $ implies $ \left( \enc{\Delta_S}, \Delta_T \right) \in \pointDis{\relationT} $.
				\item[Soundness:] Assume $ \enc{S} \transstep \Delta_T $.
					Since $ \left( S, \enc{S} \right) \in \relationEnc $ and because $ \relationEnc $ is a probabilistic bisimulation, then $ S \transstep \Delta_S $ and $ \left( \Delta_S, \Delta_T \right) \in \pointDis{\relationEnc} $.
					By $ \forall S, T \logdot \left( S, T \right) \in \relationEnc \longrightarrow \left( \enc{S}, T \right) \in \relationT $ and Lemma~\ref{lem:liftCondition}, then $ \left( \Delta_S, \Delta_T \right) \in \pointDis{\relationEnc} $ implies $ \left( \enc{\Delta_S}, \Delta_T \right) \in \pointDis{\relationT} $.
			\end{description}
			By Definition~\ref{def:POC}, then $ \enc{\cdot} $ is probabilistically operationally corresponding \wrt $ \relationT $.
			Finally, since $ \relationEnc $ is a preorder and a probabilistic bisimulation and because of the Lemmata~\ref{lem:relationTIsPreorder} and \ref{lem:relationTIsBisimulation}, $ \relationT $ is a preorder and a probabilistic bisimulation. \qed
	\end{description}
\end{proof}

\begin{definition}[Strong Probabilistic Bisimulation]
	\label{def:strongProbabilisticBisimulation}
	A relation $ \relation $ is a \emph{strong probabilistic (reduction) bisimulation} if for each $ (P, Q) \in \relation $:
	\begin{itemize}
		\item $ P \step \Delta $ implies $ \exists \Theta \logdot Q \step \Theta \wedge (\Delta, \Theta) \in \pointDis{\relation} $
		\item $ Q \step \Theta $ implies $ \exists \Delta \logdot P \step \Delta \wedge (\Delta, \Theta) \in \pointDis{\relation} $
	\end{itemize}
	Two terms are \emph{strong probabilistic bisimilar} if a strong probabilistic bisimulation relates them.
\end{definition}

\begin{definition}[Strong Probabilistic Operational Correspondence]
	\label{def:strongPOC}
	An encoding $\enc{\cdot} : \processes{\source} \to \processes{\target}$ is \emph{strongly probabilistic operationally corresponding} (strong \poc) \wrt $\relationT \subseteq \processes{\target}^2 $ if it is:
	\begin{description}
		\item[\quad Strongly Probabilistic Complete:] $ $\\
		\hspace*{2em} $ \forall S, \Delta_S \logdot S \step \Delta_S \text{ implies } \left( \exists \Delta_T \logdot \enc{S} \step \Delta_T \wedge \left( \enc{\Delta_S}, \Delta_T \right) \in \pointDis{\relationT} \right) $
		\item[\quad Strongly Probabilistic Sound:] $ $\\
		\hspace*{2em} $ \forall S, \Delta_T \logdot \enc{S} \step \Delta_T \text{ implies } \left( \exists \Delta_S \logdot S \step \Delta_S \wedge \left( \enc{\Delta_S}, \Delta_T \right) \in \pointDis{\relationT} \right) $
	\end{description}
\end{definition}

Again, we adapt Lemma~\ref{lem:relationTIsCorrespondence} to strong probabilistic bisimulation.

\begin{lemma}
	\label{lem:relationTIsStrongBisimulation}
	If $ \relationEnc $ is a strong probabilistic bisimulation and $ \relationT = \restrictRelation{\relationEnc}{\processes{\target}} $ then $ \relationT $ is a strong probabilistic bisimulation.
\end{lemma}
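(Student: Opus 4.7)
The plan is to mirror the proof of Lemma~\ref{lem:relationTIsBisimulation} almost verbatim, the only change being to replace the weak transition $ \transstep $ by the strong transition $ \step $. The key observation is that strong probabilistic bisimulation differs from (weak) probabilistic bisimulation only in which transition relation is used to match moves, while the definition of $ \pointDis{\relation} $ and Lemma~\ref{lem:preservationReduction} are completely independent of that choice.

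Concretely, I would start by assuming that $ \relationEnc $ is a strong probabilistic bisimulation with $ \relationT = \restrictRelation{\relationEnc}{\processes{\target}} $, and fix an arbitrary pair $ \left( T_1, T_2 \right) \in \relationT $. Unfolding the restriction gives $ \left( T_1, T_2 \right) \in \relationEnc $. For the first clause of Definition~\ref{def:strongProbabilisticBisimulation}, any single-step transition $ T_1 \step \Delta $ is matched by some $ T_2 \step \Theta $ with $ \left( \Delta, \Theta \right) \in \pointDis{\relationEnc} $, directly by the strong probabilistic bisimulation property of $ \relationEnc $. Because $ T_1 $ and $ T_2 $ are target terms and the operational semantics of $ \processes{\target} $ keeps distributions over target terms, $ \Delta $ and $ \Theta $ are distributions on $ \processes{\target} $. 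Hence, by $ \relationT = \restrictRelation{\relationEnc}{\processes{\target}} $ and Lemma~\ref{lem:preservationReduction}, $ \left( \Delta, \Theta \right) \in \pointDis{\relationT} $. The symmetric clause starting from $ T_2 \step \Theta $ is handled in the same way.

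I do not expect any real obstacle: both cases reduce to invocations of the strong bisimulation property of $ \relationEnc $ followed by an application of Lemma~\ref{lem:preservationReduction}, and the overall structure coincides sentence by sentence with that of Lemma~\ref{lem:relationTIsBisimulation} and Lemma~\ref{lem:relationTIsCorrespondence}. The only subtlety worth flagging is that we must invoke Definition~\ref{def:strongProbabilisticBisimulation} rather than Definition~\ref{def:probabilisticBisimulation}, so that the transferred matches are strong single steps in both directions; no extra machinery beyond Lemma~\ref{lem:preservationReduction} is required to close the argument.
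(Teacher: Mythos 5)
Your proposal matches the paper's proof essentially verbatim: both assume $\left( T_1, T_2 \right) \in \relationT$, unfold the restriction to get $\left( T_1, T_2 \right) \in \relationEnc$, match a strong step via Definition~\ref{def:strongProbabilisticBisimulation}, and then transfer $\left( \Delta, \Theta \right) \in \pointDis{\relationEnc}$ to $\pointDis{\relationT}$ using $\relationT = \restrictRelation{\relationEnc}{\processes{\target}}$ and Lemma~\ref{lem:preservationReduction}. The approach and the level of detail are the same; no gaps.
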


\begin{proof}
	Assume that $ \relationEnc $ is a strong probabilistic bisimulation and $ \relationT = \restrictRelation{\relationEnc}{\processes{\target}} $.
	Moreover, assume $ \left( T_1, T_2 \right) \in \relationT $.
	Because of $ \relationT = \restrictRelation{\relationEnc}{\processes{\target}} $, then $ \left( T_1, T_2 \right) \in \relationEnc $.
	\begin{description}
		\item[Case (i):] Assume $ T_1 \step \Delta $.
			Since $ \relationEnc $ is a strong probabilistic bisimulation, then $ T_2 \step \Theta $ and $ \left( \Delta, \Theta \right) \in \pointDis{\relationEnc} $.
			With $ \relationT = \restrictRelation{\relationEnc}{\processes{\target}} $ and Lemma~\ref{lem:preservationReduction}, then $ \left( \Delta, \Theta \right) \in \pointDis{\relationT} $.
		\item[Case (ii):] Assume $ T_2 \step \Theta $.
			Since $ \relationEnc $ is a strong probabilistic bisimulation, then $ T_1 \step \Delta $ and $ \left( \Delta, \Theta \right) \in \pointDis{\relationEnc} $.
			With $ \relationT = \restrictRelation{\relationEnc}{\processes{\target}} $ and Lemma~\ref{lem:preservationReduction}, then $ \left( \Delta, \Theta \right) \in \pointDis{\relationT} $.
	\end{description}
	We conclude that $ \relationT $ is a probabilistic bisimulation.
\end{proof}

Then can show Theorem~\ref{thm:strongPOC}:
\begin{theorem}[Strong \poc]
	\label{thm:strongPOC}
	$ \enc{\cdot} $ is strongly probabilistically operationally corresponding \wrt a preorder $ \relationT \subseteq \processes{\target}^2 $ that is a strong probabilistic bisimulation iff \\ $ \exists \relationEnc \logdot \left( \forall S \logdot \left( S, \enc{S} \right) \in \relationEnc \right) \wedge \relationT = \restrictRelation{\relationEnc}{\processes{\target}} \wedge \left( \forall S, T \logdot \left( S, T \right) \in \relationEnc \longrightarrow \left( \enc{S}, T \right) \in \relationT \right) \wedge \relationEnc $ is a preorder and a strong probabilistic bisimulation.
\end{theorem}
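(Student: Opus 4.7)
The plan is to follow exactly the two-direction structure used in the proof of Theorem~\ref{thm:POC}, substituting single steps $\step$ for sequences $\transstep$ throughout and invoking Lemma~\ref{lem:relationTIsStrongBisimulation} in place of Lemma~\ref{lem:relationTIsBisimulation}. Since the strong versions in Definitions~\ref{def:strongProbabilisticBisimulation} and \ref{def:strongPOC} differ from their non-strong counterparts only in demanding $\step$ rather than $\transstep$, most of the existing bookkeeping (Lemmata~\ref{lem:preservationReduction}, \ref{lem:relationTIsPreorder}, and \ref{lem:liftCondition}) transfers verbatim.

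For the ($\longrightarrow$) direction, I would construct $\relationEnc$ from $\relationT$ by first adding all pairs $\left( S, \enc{S} \right)$ for source terms $S$ and then taking the reflexive and transitive closure. The three structural conditions ($\forall S \logdot \left( S, \enc{S} \right) \in \relationEnc$, $\relationT = \restrictRelation{\relationEnc}{\processes{\target}}$, and $\forall S, T \logdot \left( S, T \right) \in \relationEnc \longrightarrow \left( \enc{S}, T \right) \in \relationT$) follow by exactly the same arguments as in Theorem~\ref{thm:POC}, since the construction does not depend on whether the underlying simulation is strong or weak. To show that $\relationEnc$ is a strong probabilistic bisimulation, I would case-split on whether $P$ is a source or a target term: if $P$ is a source term and $P \step \Delta$, strong completeness from Definition~\ref{def:strongPOC} gives $\enc{P} \step \Delta'$ with $\left( \enc{\Delta}, \Delta' \right) \in \pointDis{\relationT}$; combining with $\left( \enc{P}, Q \right) \in \relationT$ (obtained via $\relationT = \restrictRelation{\relationEnc}{\processes{\target}}$) and the strong bisimulation property of $\relationT$ yields $Q \step \Theta$ with $\left( \Delta', \Theta \right) \in \pointDis{\relationT}$, and transitivity of $\pointDis{\relationT}$ (Lemma~\ref{lem:transitvityRelationDistiribution}) closes the case. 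The symmetric case for $Q \step \Theta$ uses strong soundness directly, without the auxiliary $\Theta \transstep \Theta'$ step that appeared in the weak case. If $P$ is a target term, both matches are immediate from the strong bisimulation property of $\relationT$.

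For the ($\longleftarrow$) direction, I would assume the existence of $\relationEnc$ satisfying the four conditions, and derive strong completeness and strong soundness of $\enc{\cdot}$ directly from the strong bisimulation property of $\relationEnc$ applied at the pair $\left( S, \enc{S} \right)$: a source step $S \step \Delta_S$ is matched by $\enc{S} \step \Delta_T$ with $\left( \Delta_S, \Delta_T \right) \in \pointDis{\relationEnc}$, and Lemma~\ref{lem:liftCondition} converts this into $\left( \enc{\Delta_S}, \Delta_T \right) \in \pointDis{\relationT}$; a target step $\enc{S} \step \Delta_T$ is matched similarly without needing any further closure on $\Delta_T$. Finally, Lemmata~\ref{lem:relationTIsPreorder} and \ref{lem:relationTIsStrongBisimulation} establish that $\relationT$ inherits being a preorder and a strong probabilistic bisimulation from $\relationEnc$.

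I do not anticipate any real obstacle: the proof is essentially a ``strong-ification'' of the proof of Theorem~\ref{thm:POC}. The only point requiring slight care is that, in the ($\longrightarrow$) direction, when chaining $\enc{P} \step \Delta'$ from strong completeness with a bisimulation match from $\relationT$, the resulting sequence must itself consist of a single step rather than a concatenation. This works out automatically because strong completeness produces a single step and strong bisimulation of $\relationT$ matches it with a single step, so no collapsing of $\transstep$-sequences is required, which is exactly what makes the strong case simpler than the weak one.
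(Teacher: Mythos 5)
Your proposal is correct and matches the paper's own proof essentially line for line: the paper also constructs $\relationEnc$ as the reflexive--transitive closure of $\relationT$ extended with all pairs $\left( S, \enc{S} \right)$, verifies the three structural conditions by the same arguments as in Theorem~\ref{thm:POC}, establishes the strong bisimulation property of $\relationEnc$ by the same source/target case split (using strong completeness, strong soundness, and transitivity of $\pointDis{\relationT}$), and closes the converse direction via Lemmata~\ref{lem:liftCondition}, \ref{lem:relationTIsPreorder}, and \ref{lem:relationTIsStrongBisimulation}. Your closing observation that single steps are matched by single steps, so no collapsing of sequences is needed, is exactly why the paper's ``strong-ification'' goes through without further adjustment.
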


\begin{proof}[Proof of Theorem~\ref{thm:strongPOC}]
	We prove the two directions of the result separately.
	\begin{description}
		\item[if ($ \longrightarrow $):] Assume that $ \enc{\cdot} $ is strongly probabilistically operationally corresponding \wrt a preorder $ \relationT \subseteq \processes{\target}^2 $ that is a strong probabilistic bisimulation.
			We construct $ \relationEnc $ from $ \relationT $ by adding $ \left( S, \enc{S} \right) $ for all source terms $ S $ and then building the reflexive and transitive closure.
			Accordingly, $ \forall S \logdot \left( S, \enc{S} \right) \in \relationEnc $ holds by construction.
			Since we did not add any pairs of only source terms, \ie no pairs of the form $ \left( S_1, S_2 \right) $ where both $ S_1 $ and $ S_2 $ are source terms, and since the only such pairs added by the reflexive and transitive closure are of the form $ \left( S, S \right) $, we have $ \relationT = \restrictRelation{\relationEnc}{\processes{\target}} $.
			Next we prove that $ \forall S, T \logdot \left( S, T \right) \in \relationEnc \longrightarrow \left( \enc{S}, T \right) \in \relationT $.
			Therefore, fix some $ S $ and $ T $ and assume $ \left( S, T \right) \in \relationEnc $.
			By the construction of $ \relationEnc $, $ \left( S, \enc{S} \right) \in \relationEnc $ and all pairs relating a source and a target term contain a source term and its literal translation or result from such a pair, $ \relationT $, and the transitive closure in the construction of $ \relationEnc $.
			Hence, $ T = \enc{S} $ or $ \left( \enc{S}, T \right) \in \relationT $.
			In the former case, $ \left( \enc{S}, T \right) \in \relationT $ follows from the reflexivity of $ \relationT $.
			The latter case directly provides $ \left( \enc{S}, T \right) \in \relationT $.
			That $ \relationEnc $ is a preorder directly follows from the construction of $ \relationEnc $, because we used the reflexive and transitive closure.
			As last condition we have to show that $ \relationEnc $ is strong a probabilistic bisimulation.
			By Definition~\ref{def:strongProbabilisticBisimulation}, for all $ \left( P, Q \right) \in \relationEnc $ and all $ P \step \Delta $ we have to find $ Q \step \Theta $ such that $ \left( \Delta, \Theta \right) \in \relationEnc $ and for all $ Q \step \Theta $ we have to find $ P \step \Delta $ such that $ \left( \Delta, \Theta \right) \in \pointDis{\relationEnc} $.
			By the construction of $ \relationEnc $, $ P $ may be a source or target term, but $ Q $ is a target term.
			\begin{itemize}
				\item Assume $ P \step \Delta $.
					If $ P $ is a source term, then we have $ \enc{P} \step \Delta' $ with $ \left( \Delta, \Delta' \right) \in \pointDis{\relationT} $, because of strong completeness in strong \poc in Definition~\ref{def:strongPOC}.
					Since $ Q $ is a target term and by the construction of $ \relationEnc $, $ \left( P, Q \right) \in \relationEnc $ implies $ \left( \enc{P}, Q \right) \in \relationEnc $.
					Because of $ \relationT = \restrictRelation{\relationEnc}{\processes{\target}} $, then $ \left( \enc{P}, Q \right) \in \relationT $.
					Then $ Q \step \Theta $ and $ \left( \Delta', \Theta \right) \in \pointDis{\relationT} $, because $ \relationT $ is a strong probabilistic bisimulation.
					With the transitivity of $ \relationT $ and thus $ \pointDis{\relationT} $, $ \left( \Delta, \Delta' \right) \in \pointDis{\relationT} $ and $ \left( \Delta', \Theta \right) \in \pointDis{\relationT} $ imply $ \left( \Delta, \Theta \right) \in \pointDis{\relationT} $.
					Finally, $ \left( \Delta, \Theta \right) \in \pointDis{\relationEnc} $ follows from $ \left( \Delta, \Theta \right) \in \pointDis{\relationT} $, by the construction of $ \relationEnc $.
					\\
					Else, assume that $ P $ is a target term.
					Then $ \left( P, Q \right) \in \relationEnc $ implies $ \left( P, Q \right) \in \relationT $, because $ Q $ is a target term and $ \relationT = \restrictRelation{\relationEnc}{\processes{\target}} $.
					Since $ \relationT $ is a strong probabilistic bisimulation (see Definition~\ref{def:strongProbabilisticBisimulation}), then $ P \step \Delta $ implies $ Q \step \Theta $ with $ \left( \Delta, \Theta \right) \in \pointDis{\relationT} $.
					Finally, $ \left( \Delta, \Theta \right) \in \pointDis{\relationEnc} $ follows from $ \left( \Delta, \Theta \right) \in \pointDis{\relationT} $, by the construction of $ \relationEnc $.
				\item Assume $ Q \step \Theta $.
					If $ P $ is a source term, then $ \left( \enc{P}, Q \right) \in \relationEnc $, by the construction of $ \relationEnc $.
					Because of $ \relationT = \restrictRelation{\relationEnc}{\processes{\target}} $, then $ \left( \enc{P}, Q \right) \in \relationT $.
					Since $ \relationT $ is a strong probabilistic bisimulation (see second case of Definition~\ref{def:strongProbabilisticBisimulation}), then $ Q \step \Theta $ implies $ \enc{P} \step \Delta_T $ and $ \left( \Delta_T, \Theta \right) \in \pointDis{\relationT} $.
					From $ \enc{P} \step \Delta_T $ we obtain $ P \step \Delta $ with $ \left( \Delta, \Delta_T \right) \in \pointDis{\relationT} $, because of strong soundness in strong \poc in Definition~\ref{def:strongPOC}.
					By transitivity, we obtain $ \left( \Delta, \Theta \right) \in \pointDis{\relationT} $.
					Finally, $ \left( \Delta, \Theta \right) \in \pointDis{\relationEnc} $ follows from $ \left( \Delta, \Theta \right) \in \pointDis{\relationT} $, by the construction of $ \relationEnc $.
					\\
					Else, assume that $ P $ is a target term.
					Then $ \left( P, Q \right) \in \relationEnc $ implies $ \left( P, Q \right) \in \relationT $, because $ Q $ is a target term and $ \relationT = \restrictRelation{\relationEnc}{\processes{\target}} $.
					Since $ \relationT $ is a strong probabilistic bisimulation (see Definition~\ref{def:strongProbabilisticBisimulation}), then $ Q \step \Theta $ implies $ P \step \Delta $ and $ \left( \Delta, \Theta \right) \in \pointDis{\relationT} $.
					Finally, $ \left( \Delta, \Theta \right) \in \pointDis{\relationEnc} $ follows from $ \left( \Delta, \Theta \right) \in \pointDis{\relationT} $, by the construction of $ \relationEnc $.
			\end{itemize}
		\item[only if ($ \longleftarrow $):] We assume that there is a relation $ \relationEnc $ such that $ \forall S \logdot \left( S, \enc{S} \right) \in \relationEnc $, $ \relationT = \restrictRelation{\relationEnc}{\processes{\target}} $, $ \forall S, T \logdot \left( S, T \right) \in \relationEnc \longrightarrow \left( \enc{S}, T \right) \in \relationT $, and $ \relationEnc $ is a preorder and a strong probabilistic bisimulation.
			We start with strong probabilistic operational correspondence.
			\begin{description}
				\item[Strong Completeness:] Assume $ S \step \Delta_S $.
					Since $ \left( S, \enc{S} \right) \in \relationEnc $ and because $ \relationEnc $ is a strong probabilistic bisimulation, then $ \enc{S} \step \Delta_T $ and $ \left( \Delta_S, \Delta_T \right) \in \pointDis{\relationEnc} $.
					By $ \forall S, T \logdot \left( S, T \right) \in \relationEnc \longrightarrow \left( \enc{S}, T \right) \in \relationT $ and Lemma~\ref{lem:liftCondition}, then $ \left( \Delta_S, \Delta_T \right) \in \pointDis{\relationEnc} $ implies $ \left( \enc{\Delta_S}, \Delta_T \right) \in \pointDis{\relationT} $.
				\item[Strong Soundness:] Assume $ \enc{S} \step \Delta_T $.
					Since $ \left( S, \enc{S} \right) \in \relationEnc $ and because $ \relationEnc $ is a strong probabilistic bisimulation, then $ S \step \Delta_S $ and $ \left( \Delta_S, \Delta_T \right) \in \pointDis{\relationEnc} $.
					By $ \forall S, T \logdot \left( S, T \right) \in \relationEnc \longrightarrow \left( \enc{S}, T \right) \in \relationT $ and Lemma~\ref{lem:liftCondition}, then $ \left( \Delta_S, \Delta_T \right) \in \pointDis{\relationEnc} $ implies $ \left( \enc{\Delta_S}, \Delta_T \right) \in \pointDis{\relationT} $.
			\end{description}
			By Definition~\ref{def:strongPOC}, then $ \enc{\cdot} $ is strongly probabilistically operationally corresponding \wrt $ \relationT $.
			Finally, since $ \relationEnc $ is a preorder and a strong probabilistic bisimulation and because of the Lemmata~\ref{lem:relationTIsPreorder} and \ref{lem:relationTIsStrongBisimulation}, $ \relationT $ is a preorder and a strong probabilistic bisimulation. \qed
	\end{description}
\end{proof}

Similar to Lemma~\ref{lem:successSensitivenessTargetRelation}, we show that $ \equiv $ is barb sensitive.

\begin{lemma}[$ \equiv $ is Barb Sensitive]
	\label{lem:barbSensitivenessTargetRelation}
	$ $\\
	If $ T_1 \equiv T_2 $ then $ \HasBarb{T_1}{n} \longleftrightarrow \HasBarb{T_2}{n} $ for all $ n \in \names \cup \out{\names} $.\\
	Moreover, if $ \Delta_1 \equiv \Delta_2 $ then $ \HasBarb{\Delta_1}{n} \longleftrightarrow \HasBarb{\Delta_2}{n} $ for all $ n \in \names \cup \out{\names} $.
\end{lemma}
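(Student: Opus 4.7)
The plan is to mirror the proof of Lemma~\ref{lem:successSensitivenessTargetRelation} step by step, performing induction on the derivation of $ T_1 \equiv T_2 $ with the same case split on the generating rules of structural congruence. The only difference is that instead of tracking whether the constant $ \success $ occurs unguarded, we track whether a free occurrence of a top-level input or output on the chosen name $ n \in \names \cup \out{\names} $ is preserved by each axiom. Because the definition of a barb depends only on the syntactic shape at the top level modulo parallel and restriction, each case reduces to a purely syntactic comparison of the free names of $ T_1 $ and $ T_2 $.

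First, $ \alpha $-equivalence only renames bound names and leaves free occurrences intact, so $ \HasBarb{T_1}{n} $ holds exactly when $ \HasBarb{T_2}{n} $. The three parallel axioms $ \para{P}{\nul} \equiv P $, $ \para{P}{Q} \equiv \para{Q}{P} $, and $ \para{P}{\paraBrack{Q}{R}} \equiv \para{\paraBrack{P}{Q}}{R} $ are immediate because $ \nul $ has no barbs and the set of barbs of a parallel composition is the union of the barbs of its components. Likewise $ \probPiRes{x}{\nul} \equiv \nul $ is trivial since neither side exhibits any barb.

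The interesting cases are the two restriction axioms. For $ \probPiRes{xy}{P} \equiv \probPiRes{yx}{P} $, reordering the two restrictions does not change the set of free names, so $ \HasBarb{T_1}{n} $ iff $ n \notin \set{x, y} $ and $ P $ has a barb on $ n $, which is precisely the condition for $ \HasBarb{T_2}{n} $. For scope extrusion $ \probPiRes{x}{\paraBrack{P}{Q}} \equiv \para{P}{\probPiRes{x}{Q}} $ with $ x \notin \freeNames{P} $, the side condition guarantees that a barb on $ n $ from $ P $ cannot require $ n = x $, and a barb from $ Q $ is visible on both sides exactly when $ n \neq x $. I expect this to be the only place where care is needed, but all the reasoning is purely about free names and standard.

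Finally, the distributions case lifts immediately: if $ \Delta_1 = \sum_{i \in I} p_i \pointDis{P_i} $, $ \Delta_2 = \sum_{i \in I} p_i \pointDis{Q_i} $, and $ P_i \equiv Q_i $ for every $ i \in I $, the first part of the lemma gives $ \HasBarb{P_i}{n} $ iff $ \HasBarb{Q_i}{n} $ for each $ i $, and the definition of a barb on a distribution (as lifted componentwise) then yields $ \HasBarb{\Delta_1}{n} $ iff $ \HasBarb{\Delta_2}{n} $. No genuine obstacle arises; the proof is essentially a verbatim repetition of Lemma~\ref{lem:successSensitivenessTargetRelation} with the symbol $ \success $ replaced by an arbitrary barb name $ n $, together with the observation that the restriction rules only affect barbs on names that are explicitly bound.
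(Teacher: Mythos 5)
Your proposal is correct and follows essentially the same route as the paper's own proof: an induction on the definition of $\equiv$ with a case split over the structural congruence axioms, where each case reduces to a syntactic observation about which names remain observable, and the distribution case is lifted componentwise. Your treatment of the restriction axioms is in fact slightly more explicit than the paper's (which elides the condition $n \notin \set{x, y}$ in the $\probPiRes{xy}{P} \equiv \probPiRes{yx}{P}$ case), but the argument is the same.
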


\begin{proof}
	Fix some $ n \in \names \cup \out{\names} $.
	The proof is by induction on the definition of $ \equiv $.
	All cases are immediate.
	\begin{description}
		\item[$ \alpha $-Equivalence $ \equiv_{\alpha} $:] In this case $ T_1 \equiv_{\alpha} T_2 $.
			Since barbs via labelled steps on unrestricted names, we have $ \HasBarb{T_1}{n} $ iff $ \HasBarb{T_2}{n} $.
		\item[$ \para{P}{\nul} \equiv P $:] In this case $ T_1 = T_2 \mid \nul $.
			Since $ \nul $ does not contain any barbs, then $ \HasBarb{T_1}{n} $ iff $ \HasBarb{T_2}{n} $.
		\item[$ \para{P}{Q} \equiv \para{Q}{P} $:] In this case $ T_1 = P \mid Q $ and $ T_2 = Q \mid P $.
			Then $ \HasBarb{T_1}{\success} $ iff $ {\left( \HasBarb{P}{n} \vee \HasBarb{Q}{n} \right)} $ iff $ \HasBarb{T_2}{n} $.
		\item[$ \para{P}{\paraBrack{Q}{R}} \equiv \para{\paraBrack{P}{Q}}{R} $:] In this case $ T_1 = P \mid {\left( Q \mid R \right)} $ as well as $ T_2 = {\left( P \mid Q \right)} \mid R $.
			Thereby, $ \HasBarb{T_1}{\success} $ iff $ {\left( \HasBarb{P}{n} \vee \HasBarb{Q}{n} \vee \HasBarb{R}{n} \right)} $ iff $ \HasBarb{T_2}{n} $.
		\item[$ \probPiRes{x}{\nul} \equiv \nul $:] In this case $ T_1 = \probPiRes{x}{\nul} $ and $ T_2 = \nul $.
			Since $ \nul $ does not contain any barbs, then $ \NotHasBarb{T_1}{n} $ and $ \NotHasBarb{T_2}{n} $.
		\item[$ \probPiRes{xy}{P} \equiv \probPiRes{yx}{P} $:] Then $ T_1 = \probPiRes{xy}{P} $ and $ T_2 = \probPiRes{yx}{P} $.
			Since only unrestricted actions can be observed, then $ \HasBarb{T_1}{n} $ iff $ \HasBarb{P}{n} $ iff $ \HasBarb{T_2}{n} $.
		\item[$ \probPiRes{x}{\paraBrack{P}{Q}} \equiv \para{P}{\probPiRes{x}{Q}} $:] In this case $ T_1 = \probPiRes{x}{\left( P \mid Q \right)} $ and $ T_2 = P \mid \probPiRes{x}{Q} $, where $ x \notin \freeNames{P} $.
			Here $ x \notin \freeNames{P} $ ensures that $ \NotHasBarb{P}{x} $ and $ \NotHasBarb{P}{\out{x}} $.
			Then $ \HasBarb{T_1}{n} $ iff $ {\left( {\left( \HasBarb{P}{n} \vee \HasBarb{Q}{n} \right)} \wedge n \neq x \wedge n \neq \out{x} \right)} $ iff $ \HasBarb{T_2}{n} $.
		\item[$ \Delta_1 \equiv \Delta_2 $:] In this case there is a finite index set $ I $ such that $ \Delta_1 = \sum_{i \in I} p_i \pointDis{P_i} $, $ \Delta_2 = \sum_{i \in I} p_i \pointDis{Q_i} $, and $ P_i \equiv Q_i $ for all $ i \in I $.
			Because of $ P_i \equiv Q_i $, we have $ \HasBarb{P_i}{n} $ iff $ \HasBarb{Q_i}{n} $ for all $ i \in I $.
			Then $ \HasBarb{\Delta_1}{n} $ iff $ \HasBarb{\Delta_2}{n} $. \qed
	\end{description}
\end{proof}

Then we can show that the encoding $ \outerEncoding $ also respects barbs, \ie:
\begin{lemma}[Barb Sensitiveness, $ \outerEncoding $/$ \encPCCSPPi $]
	\label{lem:barbSensitiveness}
	$ $\\
	For every $ S $ and all $ n \in \names \cup \out{\names} $, $ \ReachBarb{S}{n} $ iff $ \ReachBarb{\OuterEncoding{S}}{\RenamingPCCSPPi{n}} $.
\end{lemma}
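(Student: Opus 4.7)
The plan is to mirror the proof of Lemma~\ref{lem:successSensitiveness}, replacing the fixed barb $\success$ by an arbitrary $n \in \names \cup \out{\names}$ and the target barb by $\RenamingPCCSPPi{n}$. First I would establish, by induction on the structure of source terms, the auxiliary fact that for every source term $S^*$ and every $n \in \names \cup \out{\names}$, $\HasBarb{S^*}{n}$ iff $\HasBarb{\OuterEncoding{S^*}}{\RenamingPCCSPPi{n}}$. The inductive cases follow Figure~\ref{fig:enc}: an input-guarded choice translates to a branching input on the same name (so produces an unguarded barb exactly on $\RenamingPCCSPPi{x}$); an output-guarded choice translates to a selecting output on the same name (producing the dual barb); a $\tau$-guarded choice translates to actions on the restricted name $\encNameTau$, and the renaming policy $\renamingPCCSPPi$ keeps $\encNameTau$ disjoint from the range of $\RenamingPCCSPPi{\cdot}$, so no unguarded barb on $\RenamingPCCSPPi{n}$ is produced; parallel composition distributes barbs; restriction $\probCCSRes{P}{A}$ hides exactly the barbs on the names of $A$ on both sides; a relabelling $P[f]$ transports barbs through $f$ on both sides; and a call $C\langle\tilde{y}\rangle$ translates to an output on the restricted process constant $C$, again kept disjoint from source names by $\renamingPCCSPPi$. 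This equivalence lifts pointwise to distributions $\Delta_S^*$.

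The forward direction is then immediate: $\ReachBarb{S}{n}$ provides $S \transstep \Delta_S$ with $\HasBarb{\Delta_S}{n}$; Lemma~\ref{lem:completeness} furnishes $\OuterEncoding{S} \transstep \Delta_T$ with $\OuterEncoding{\Delta_S} \equiv \Delta_T$; the auxiliary fact gives $\HasBarb{\OuterEncoding{\Delta_S}}{\RenamingPCCSPPi{n}}$; and Lemma~\ref{lem:barbSensitivenessTargetRelation} transports this barb across $\equiv$ to $\Delta_T$, yielding $\ReachBarb{\OuterEncoding{S}}{\RenamingPCCSPPi{n}}$.

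The reverse direction is where the main obstacle lies. From $\ReachBarb{\OuterEncoding{S}}{\RenamingPCCSPPi{n}}$ I obtain $\OuterEncoding{S} \transstep \Delta_T$ with $\HasBarb{\Delta_T}{\RenamingPCCSPPi{n}}$, and Lemma~\ref{lem:soundness} yields $\Delta_S'$ and $\Delta_T'$ such that $S \transstep \Delta_S'$, $\Delta_T \transstep \Delta_T'$ consisting only of $\secondstep$-steps, and $\OuterEncoding{\Delta_S'} \equiv \Delta_T'$. Unlike the $\success$-case in Lemma~\ref{lem:successSensitiveness}, a general barb can in principle be consumed by a step, so I must separately argue that $\secondstep$-steps preserve the barb on $\RenamingPCCSPPi{n}$. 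The key observation is that, by Definition~\ref{def:encCCSPi} together with $\renamingPCCSPPi$, every occurrence of $\encNameI$ inside an encoded source term appears under a restriction $\probPiRes{\encNameI}{\cdot}$ and the continuations $\EncPCCSPPi{P_i}$ sit guarded behind the associated selecting output on $\encNameI$; hence any unguarded barb of $\Delta_T$ on $\RenamingPCCSPPi{n}$ must lie outside every such scope and is untouched by the reduction of an $\encNameI$-communication. So $\HasBarb{\Delta_T'}{\RenamingPCCSPPi{n}}$; Lemma~\ref{lem:barbSensitivenessTargetRelation} then gives $\HasBarb{\OuterEncoding{\Delta_S'}}{\RenamingPCCSPPi{n}}$; the auxiliary fact pushes this back to $\HasBarb{\Delta_S'}{n}$; and $S \transstep \Delta_S'$ finally yields $\ReachBarb{S}{n}$.
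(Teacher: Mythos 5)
Your proof follows the same route as the paper's: the same auxiliary fact that $ \HasBarb{S^*}{n} $ iff $ \HasBarb{\OuterEncoding{S^*}}{\RenamingPCCSPPi{n}} $, lifted to distributions, combined with Lemma~\ref{lem:completeness} for the forward direction and Lemma~\ref{lem:soundness} together with Lemma~\ref{lem:barbSensitivenessTargetRelation} for the reverse. The only difference is that you explicitly justify why the residual sequence $ \Delta_T \transstep \Delta_T' $ of \secondstep-steps cannot consume a barb on $ \RenamingPCCSPPi{n} $ --- a point the paper's proof asserts without comment --- and your justification (communications on the restricted $ \encNameI $ only unguard continuations and leave unguarded actions on translated source names untouched) is sound.
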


\begin{proof}[Proof of Lemma~\ref{lem:barbSensitiveness}]
	By Definition~\ref{def:encCCSPi}, \ie since the encoding function does not introduce new free names and because of the rigorous use of the renaming policy $ \renamingPCCSPPi $, $ \HasBarb{S^*}{n} $ iff $\HasBarb{\OuterEncoding{S^*}}{\RenamingPCCSPPi{n}} $ for all $ S^* $.
	Then also $ \HasBarb{\Delta_S^*}{n} $ iff $\HasBarb{\OuterEncoding{\Delta_S^*}}{\RenamingPCCSPPi{n}} $ for all distributions $ \Delta_S^* $.
	\begin{description}
		\item If $ \ReachBarb{S}{n} $, then $ S \transstep \Delta_S $ and $ \HasBarb{\Delta_S}{n} $.
			By Lemma~\ref{lem:completeness}, then $ \OuterEncoding{S} \transstep \Delta_T $ and $ \OuterEncoding{\Delta_S} \equiv \Delta_T $.
			By Definition~\ref{def:encCCSPi}, $ \HasBarb{\Delta_S}{n} $ implies $ \HasBarb{\OuterEncoding{\Delta_S}}{\RenamingPCCSPPi{n}} $.
			By Lemma~\ref{lem:barbSensitivenessTargetRelation}, then $ \OuterEncoding{\Delta_S} \equiv \Delta_T $ implies $ \HasBarb{\Delta_T}{\RenamingPCCSPPi{n}} $.
			Finally, $ \OuterEncoding{S} \transstep \Delta_T $ and $ \HasBarb{\Delta_T}{\RenamingPCCSPPi{n}} $ imply $ \ReachBarb{\OuterEncoding{S}}{\RenamingPCCSPPi{n}} $.
		\item If $ \ReachBarb{\OuterEncoding{S}}{\RenamingPCCSPPi{n}} $, then $ \OuterEncoding{S} \transstep \Delta_T $ and $ \HasBarb{\Delta_T}{\RenamingPCCSPPi{n}} $.
			By Lemma~\ref{lem:soundness}, then $ S \transstep \Delta_S' $, $ \Delta_T \transstep \Delta_T' $, and $ \OuterEncoding{\Delta_S'} \equiv \Delta_T' $.
			Because of $ \Delta_T \transstep \Delta_T' $, $ \HasBarb{\Delta_T}{\RenamingPCCSPPi{n}} $ implies $ \HasBarb{\Delta_T'}{\RenamingPCCSPPi{n}} $.
			By Lemma~\ref{lem:barbSensitivenessTargetRelation}, $ \HasBarb{\Delta_T'}{\RenamingPCCSPPi{n}} $ and $ \OuterEncoding{\Delta_S'} \equiv \Delta_T' $ imply $ \HasBarb{\OuterEncoding{\Delta_S'}}{\RenamingPCCSPPi{n}} $.
			By Definition~\ref{def:encCCSPi}, then $ \HasBarb{\Delta_S'}{n} $.
			Finally, $ S \transstep \Delta_S' $ and $ \HasBarb{\Delta_S'}{n} $ imply $ \ReachBarb{S}{n} $. \qed
	\end{description}
\end{proof}
\end{adjustwidth}

\bibliographystyle{splncs04}
\bibliography{lit}

\end{document}